\documentclass[11pt, letterpaper]{article}
\usepackage{amssymb, amsmath, amsthm}

\usepackage[english]{babel} 
\usepackage[utf8]{inputenc} 
\usepackage[T1]{fontenc} 
\usepackage{hyperref}
\usepackage[capitalise]{cleveref}
\usepackage[margin=1in]{geometry}
\usepackage{enumerate}
\usepackage{todonotes}
\usepackage{enumitem}
\usepackage[ruled,linesnumbered, noline,boxed,commentsnumbered, noend]{algorithm2e}
\usepackage{tikz-cd}
\usepackage{float}
\usepackage{thmtools,thm-restate}

\Crefname{algocf}{Algorithm}{Algorithms}

\title{Optimal Decremental Connectivity in Non-Sparse Graphs}
\author{Anders Aamand\\MIT \\aamand@mit.edu \and Adam Karczmarz\\University of Warsaw\\a.karczmarz@mimuw.edu.pl \and Jakub Łącki\\Google Research\\
jlacki@google.com \and Nikos Parotsidis\\Google Research\\nikosp@google.com  \and Peter M.\ R.\ Rasmussen\\University of Copenhagen\\pmrr@di.ku.dk \and Mikkel Thorup\\University of Copenhagen\\mikkel2thorup@gmail.com}
\date{\today}

\newcommand{\abs}[1]{\left\vert #1\right\vert}

\newcommand{\E}[1]{\mathbb E\left[ #1 \right]}

\newcommand{\Var}[1]{\text{Var}\left[ #1 \right]}

\newcommand{\N}{\mathbb{N}}

\newcommand{\eps}{\varepsilon}

\newcommand{\funcSpace}{0.1cm}

\def\poly{\operatorname{poly}}
\def\polylog{\operatorname{polylog}}

\newcommand{\Ot}{\widetilde{O}}

\newtheorem{theorem}{Theorem}[section]

\newtheorem{lemma}[theorem]{Lemma}

\newtheorem{definition}[theorem]{Definition}

\begin{document}

\maketitle
\thispagestyle{empty}

\begin{abstract}

We present a dynamic algorithm for maintaining the connected and 2-edge-connected components in an undirected graph subject to edge deletions.
The algorithm is Monte-Carlo randomized and processes any sequence of edge deletions in $O(m + n \poly \log n)$ total time.  Interspersed with the deletions, it can answer queries to whether any two given vertices currently belong to the same (2-edge-)connected component in constant time.
Our result is based on a general Monte-Carlo randomized reduction from decremental $c$-edge-connectivity to a variant of fully-dynamic $c$-edge-connectivity on a sparse graph.

For non-sparse graphs with $\Omega(n \poly \log n)$ edges, our connectivity and $2$-edge-connectivity algorithms handle all deletions in optimal linear total time, using existing algorithms for the respective fully-dynamic problems.
This improves upon an $O(m \log (n^2 / m) + n \poly \log n)$-time algorithm of Thorup [J.Alg. 1999], which runs in linear time only for graphs with $\Omega(n^2)$ edges.

Our constant amortized cost for edge deletions in decremental connectivity in non-sparse graphs should be contrasted with an $\Omega(\log n/\log\log n)$ worst-case time lower bound in the decremental setting [Alstrup, Thore Husfeldt, FOCS'98] as well as  an $\Omega(\log n)$ amortized time lower-bound in the fully-dynamic setting [Patrascu and Demaine STOC'04].
\end{abstract}

\clearpage
\setcounter{page}{1}

\section{Introduction}
In this paper, we present Monte Carlo randomized decremental dynamic
algorithms for maintaining the connected and 2-edge-connected components in an
undirected graph subject to edge deletions.  Starting from a graph
with $n$ nodes and $m$ edges, the algorithm can process any sequence of edge
deletions in $O(m + n \polylog n)$ total time while answering queries 
whether a pair of vertices is currently in the same (2-edge-)connected component. Each query is answered in constant time.  The algorithm for decremental 2-edge-connectivity additionally reports all bridges as they appear.

For some concrete constant $\alpha \leq 7$,
our decremental algorithms thus use $O(m)$ total
time on the edge deletions from a graph
with $m\geq n\log^\alpha n$ edges, and we will
refer to such graphs as \emph{non-sparse}.
If we delete all the edges, we support edge deletions in constant amortized time. As we shall discuss in Section \ref{sec:connectivity}, it is not possible to obtain a constant worst-case time bound for individual edge deletions in this decremental setting, nor is it possible to obtain a
constant amortized time bound for edge updates in the fully-dynamic version of the connectivity problem.

Our algorithms are Monte Carlo randomized and answer all queries correctly with high probability\footnote{We define high probability as probability $1-O(n^{-\gamma})$ for any given $\gamma$.}. We note that since the correct answer to each query is uniquely determined from the input, the algorithms work against adaptive adversaries, that is, each deleted edge may depend on previous answers to queries\footnote{Indeed, the only way to strengthen the adversary is to reveal information about the internal choices of the algorithm, through responses to queries.}.

Furthermore, our algorithms offer a \emph{self-check} capability.
At the end, after all updates and queries have been processed online, each algorithm can deterministically check if it might have made a mistake.
If the self-check passes, it is \emph{guaranteed} that no incorrect answer was given. Otherwise, the algorithm \emph{may have} made a mistake.
However, as we show in the following, the self-check passes with high probability.
This feature
implies that we can obtain Las Vegas algorithms for certain \emph{non-dynamic} problems whose solutions employ decremental (2-edge-)connectivity algorithms as subroutines: we
simply repeat trying to solve the static problem from scratch, each time with new random bits, until the final self-check is passed. 
With high probability, we are done already
after the first round. A  nice concrete example is the algorithm of 
 Gabow, Kaplan, and Tarjan~\cite{GabowKT01} for the static problem of deciding if a
graph has a unique perfect matching. The algorithm
uses a decremental  2-edge-connectivity 
algorithm as a subroutine. With our
decremental 2-edge-connectivity algorithm,
repeating until the self-check is passed, 
we obtain a Las Vegas
algorithm for the unique perfect matching problem that is always correct, and which terminates in $O(m + n
\polylog n)$ time with high probability.

The tradition of looking for linear time algorithms for
non-sparse graphs goes back at least to Fibonacci heaps, which can be used for solving single source shortest paths in $O(m+n\log n)$ time~\cite{FredmanT87}. Our results show that another fundamental graph problem can be solved in linear time in the non-sparse case.

The previous best time bounds for the decremental connectivity and 2-edge-connectivity problems
were provided by Thorup \cite{thorup1999decremental}. His algorithms run in $O(m \log (n^2/ m) + n \polylog n)$ total time. This is $O(m)$ only for dense graphs with $\Omega(n^2)$ edges. It should be noted that \cite{thorup1999decremental} used Las Vegas randomization, that is,
correctness was guaranteed, but the running time bound only held with high probability. Our algorithms are Monte Carlo randomized, but offer the final self-check.
Another difference is that our new algorithms need only a polylogarithmic number of random bits, whereas the ones from \cite{thorup1999decremental} used $\Theta(m)$ random bits.

Both our algorithm and the previous one by Thorup are based on a
general reduction from decremental $c$-edge-connectivity to fully-dynamic $c$-edge-connectivity on a sparse graph
with $\tilde O(cn)$ updates. The reductions have a polylogarithmic
cost per node as well as a cost per original edge. Our contribution is to
reduce the edge cost from $O(\log (n^2/m))$ to the optimal $O(1)$. The general reduction and its consequences will be discussed in Section \ref{sec:reduction-result}.

We will now give a more detailed discussion
of our results in the context of related
work.

\subsection{Connectivity}\label{sec:connectivity}

 Dynamic connectivity is the most fundamental dynamic graph problem. The fully dynamic version has been extensively studied~\cite{abs-1910-08025, EppsteinGIN97, Frederickson85,
  HenzingerK99, holm2001poly, HuangHKP17, KapronKM13,
  Kejlberg-Rasmussen16, NanongkaiSW17,patrascu2004lower, PatrascuT11a,
  Thorup00, Wang15w, Wulff-Nilsen13a} from both the lower and upper
bound perspective, even though close to optimal amortized update
bounds have been known since the 90s~\cite{HenzingerK99, holm2001poly,
  Thorup00}.  Currently, the best known amortized update time bounds
are (expected) $O(\log{n}\cdot (\log{\log{n}})^2)$ using
randomization~\cite{HuangHKP17} and $O(\log^2{n}/\log\log{n})$
deterministically~\cite{Wulff-Nilsen13a}.  Both these results have
optimal (wrt. to the lower bounds) query time.

\paragraph{Connectivity Lower Bounds}
Our result implies that decremental connectivity is provably easier than fully-dynamic connectivity for a wide range of graph densities.
Specifically, let $t_u$ be the update time of a fully dynamic
connectivity algorithm and let $t_q$ be its query
time. P\v{a}tra\c{s}cu and Demaine \cite{patrascu2004lower} showed a
lower bound of $\Omega(\log n)$ on $\max(t_u, t_q)$ in the cell-probe
model.
P\v{a}tra\c{s}cu and
Thorup~\cite{PatrascuT11a} also showed that $t_u=o(\log{n})$ implies
$t_q=\Omega(n^{1-o(1)})$.  These lower bounds hold for all graph
densities and allow for both amortization and randomization. As a
result, no fully-dynamic connectivity algorithm can
answer connectivity queries in constant time and have an amortized update time of $o(\log n)$.

In sharp contrast, assuming that $m = \Omega(n \poly \log n)$ edges are deleted, our algorithm shows that one can solve decremental connectivity handling both queries and updates in constant amortized time.

We note that such a result is possible only because we allow for amortization, as any decremental connectivity algorithm with \emph{worst-case} update time $O(\polylog n)$ must have worst-case query time $\Omega\left(\frac{\log n}{\log   \log n}\right)$~\cite{Alstrup1998}. This lower bound holds even for trees supporting restricted connectivity queries of the form "are $u$ and $v$ connected?" for a fixed ``root'' $u$. This lower bound also holds for dense graphs,
as we can always add a large static clique to the problem.

An optimal incremental connectivity algorithm has been known for over 40 years.
Namely, to handle $m\geq n$ edge insertion and $q$ connectivity queries, one can use the union-find data structure~\cite{tarjan1975efficiency} with $n - 1$ unions and $2(m+q)$ finds. The total running time is $\Theta((m+q)\alpha((m+q),n))$, which is linear for all but very sparse
graphs (since $\alpha(\Omega(n \log n), n) = O(1)$).
It was later shown that this running time is optimal for incremental
connectivity~\cite{Fredman89}.

Similarly to the decremental case, one cannot hope to obtain an analogous result with a worst-case update time in the incremental setting: P\v{a}tra\c{s}cu and
Thorup~\cite{PatrascuT11a} showed that any incremental connectivity
data structure with $o\left(\frac{\log n}{\log \log n}\right)$
worst-case update time must have worst-case $\Omega(n^{1-o(1)})$ query
time in the cell-probe model.

\paragraph{Other cases of optimal decremental connectivity}
 There is much previous work on cases where
 decremental connectivity can be supported
 in $O(m)$ total time. Alstrup, Secher, and Spork~\cite{alstrup1997optimal} showed that decremental connectivity can be solved in optimal $O(m)$ total time on forests, answering queries in $O(1)$ time.
This was later extended to other classes of sparse graphs: planar graphs \cite{lacki2017optimal}, and
minor-free graphs \cite{holm2018good}. All
these special graph classes are sparse
with $m=O(n)$ edges. 

For general graphs, we only have the previously mentioned work
by Thorup~\cite{thorup1999decremental}, yielding a total running time of $O(m)$ for very dense graphs with $m=\Omega(n^2)$ edges. We now obtain the same linear time bound for all non-sparse graphs with $m=\Omega(n \poly \log n)$ edges.

\subsection{General reduction for $c$-edge-connectivity}\label{sec:reduction-result}
Our algorithm for decremental connectivity is based on a
general randomized reduction from decremental $c$-edge-connectivity (assuming all $m$ edges are deleted) to fully-dynamic $c$-edge-connectivity on a sparse graph
with $\tilde O(cn)$ updates. The reduction has a polylogarithmic
cost per node as well as a constant cost per edge. The previous
decremental connectivity algorithm of Thorup \cite{thorup1999decremental}
was also based on such a general reduction, but the cost per edge was
$O(\log (n^2/m))$ which is $O(1)$ only for very dense graphs with
$m=\Omega(n^2)$. Below we will describe the format of the reductions
in more detail. 

Because there are different notions of $c$-edge-connectivity, we first need
to clarify our definitions. We say that two vertices $u,v$ are
$c$-edge-connected iff there exist $c$ edge-disjoint paths between $u$
and $v$ in $G$.  It is known that $c$-edge-connectivity is an
equivalence relation; we call its classes the \emph{$c$-edge-connected
  classes}. However, for $c\geq 3$, a $c$-edge-connected class may induce a subgraph of $G$ which is not connected, so it also makes sense to consider
\emph{$c$-edge-connected components}, i.e., the maximal $c$-edge-connected
induced subgraphs of $G$.\footnote{There is no consensus in the
  literature on the terminology relating to $c$-edge-connected
  components and classes. Some authors~(e.g., \cite{GalilI93,
    GiammarresiI96}) reserve the term $c$-edge-connected components
  for what we in this paper call $c$-edge-connected classes.} It is important to note that the $c$-edge-connected components and the $c$-edge-connected classes are \emph{uniquely defined} and both induce a natural partition of the vertices of the underlying graph. Moreover, each $c$-edge-connected component of $G$ is a subset of some $c$-edge-connected class of $G$. For
$c=1,2$, the $c$-edge-connected classes are $c$-edge-connected, so the two 
notions coincide. To illustrate the difference, let us fix $c\geq 3$ and consider a graph with $c+2$ vertices $v_s, v_t, v_1, \dots, v_c$ and edges $\{v_s,v_t\}\times \{v_1, \dots, v_c\}$; while all $c$-edge-connected components in this graph are singletons, there is one $c$-edge-connected class, which is not a singleton, namely $\{v_s, v_t\}$.

We define a \emph{$c$-certificate} of $G$ to be a subgraph $H$ of $G$ that contains all edges not in $c$-edge-connected components, and a
$c$-edge-connected subgraph of each $c$-edge-connected component. 
Both Thorup's and our reduction maintains a $c$-certificate $H$ of $G$. Then, for any
$c'\leq c$, we have that the $c'$-edge-connected equivalence classes
and the $c'$-edge-connected components are the same in $G$ and $H$. As
the edges from $G$ are deleted, we maintain a $c$-certificate
with $\Ot(cn)$ edges undergoing only $\Ot(cn)$ edge insertions and deletions in total.

The uniquely defined $c$-edge-connected components of a graph may be found by
repeatedly removing cuts of size at most $c-1$. For the reductions, we
need algorithms that can help us in this process. We therefore define the
\emph{fully dynamic $c$-edge-cut problem} as follows. Suppose a graph $G$ is subject to edge insertions and/or deletions.
Then, a fully dynamic $c$-edge-cut data structure should
maintain any edge $e$ that belongs to some cut of size less than $c$. A typical application of such a data structure is to repeatedly remove such edges $e$ belonging to cuts of size less than $c$, which splits $G$
into its $c$-edge-connected components.
For each $c\geq 1$, denote by $T_c(n)$ the amortized time needed by the data structure to find an edge belonging to a cut of size less than $c$.
For example, for $c=1$ we have $T_1(n)=O(1)$ since we do not have to maintain anything. For $c=2$, the data structure is required to maintain some \emph{bridge}
of $G$ and it is known that $T_2(n)=O((\log{n}\cdot \log\log{n})^2)$~\cite{HolmRT18}. For $c\geq 3$, in turn, we
have $T_c(n)=O(n^{1/2}\poly{(c)})$~\cite{Thorup07}.

Given a fully dynamic $c$-edge-cut data structure, whose update
time for a graph on $n$ nodes is $T_c(n)$, Thorup's \cite{thorup1999decremental}
reduction maintains, in $O(m\log(n^2/m))+\Ot(c\cdot n\cdot T_c(n))$
total time, a $c$-certificate $H$ of the decremental graph $G$
starting with $n$ nodes and $m$ edges. The certificate undergoes only
$\Ot(cn)$ edge insertions and deletions throughout any sequence of
deletions issued to~$G$. We reduce here the total time to
$O(m)+\Ot(c\cdot n\cdot T_c(n))$.

Combining our reduction with the polylogarithmic fully-dynamic connectivity and \linebreak 2-edge-connectivity
algorithm of Holm, de Lichtenberg, and Thorup \cite{holm2001poly}, we can now solve
decremental connectivity and 2-edge-connectivity in $O(m)+\Ot(n)$ time. 

We can also apply
the fully dynamic min-cut algorithm of Thorup~\cite{Thorup07} which identifies
cuts of size $n^{o(1)}$ in $n^{1/2+o(1)}$ worst-case time. For $c=n^{o(1)}$,
we then maintain a $c$-certificate $H$ in $O(m+n^{3/2+o(1)})$ total time.
This includes telling which vertices are in the same $c$-edge-connected
component. If we further want to answer queries about $c$-edge-connectivity between pairs of nodes, we can apply the fully-dynamic data structure of Jin and
Sun~\cite{jin2020} to the $c$-certificate $H$. By definition,
the answers to these queries are the same in $H$ and $G$, and the algorithm
takes $n^{o(1)}$ time per update or query. Hence the total time for the
updates remains $O(m+n^{3/2+o(1)})$, and we can tell
if two vertices are $c$-edge-connected in $n^{o(1)}$ time.

\subsubsection{Results}

We will now give a more precise description of our reduction, including the log-factors  hidden in the $\Ot(cn)$ bound. Let the \emph{decremental $c$-certificate} problem be that of maintaining a $c$-certificate of $G$ when $G$ is subject to edge deletions. Recall that $T_c(n)$ denotes the amortized update time of a fully-dynamic $c$-edge-cut data structure. Thorup~\cite{thorup1999decremental} showed the following.

\begin{theorem}[Thorup~\cite{thorup1999decremental}]\label{t:thorup}
There exists a Las Vegas randomized algorithm for the decremental $c$-certificate problem with expected total update time $O(m\log{(n^2/m)}+n (c+\log{n})\cdot T_c(n)\log^2{n})$. The maintained certificate undergoes 
$O(n\cdot (c+\log{n}))$ expected edge insertions and deletions throughout, assuming $\Theta(m)$ random bits are provided. These bounds similarly hold with high probability.
\end{theorem}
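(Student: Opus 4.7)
The plan is to build a hierarchy of randomly sampled subgraphs in the spirit of Karger's sampling technique, adapted to maintain a $c$-certificate under deletions. First, I would assign each edge $e\in G$ an independent random level $\ell(e)$ drawn from a geometric distribution, and let $G_i$ denote the subgraph of edges at level at least $i$, so that $G_0=G$ and, at the top level $L=\Theta(\log(m/(cn)))$, $|G_L|=\Ot(cn)$ in expectation. I would then maintain a chain of candidate certificates $H_L\subseteq H_{L-1}\subseteq\cdots\subseteq H_0$ where each $H_i$ is a $c$-certificate of $G_i$ of size $\Ot(cn)$, and the output certificate is $H_0$.

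At each level $i$, I would run an instance of the fully-dynamic $c$-edge-cut data structure on the multigraph obtained from $G_i$ by contracting the edges of $H_{i+1}$, which are already known to be $c$-edge-connected. Whenever this structure reports an edge $e$ lying in a cut of size less than $c$ (after contraction), $e$ is promoted into $H_i$, with the promotion cascading down to $H_{i-1},\ldots,H_0$ as necessary. When an edge is deleted from $G$, it is removed from each $G_i$ and $H_i$ containing it, and the affected cut data structures search for replacement edges to restore the certificate invariant at their level. A Karger-style cut-counting argument, combined with the random level assignment, would ensure that the procedure always succeeds with high probability, so correctness need never be checked.

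The running-time bound splits into two parts. Edge-level processing: each edge at level $i$ is inserted into and deleted from the $i{+}1$ data structures at levels $0,\ldots,i$, so the total such work is bounded by a geometric sum of $\E{|G_i|}$ multiplied by the effective number of levels, yielding the $O(m\log(n^2/m))$ term. Certificate-level processing: Karger's bound of $n^{O(c)}$ cuts of size less than $c$, together with the random sampling, bounds the expected total number of promotions across all levels and deletions by $\Ot(cn)$, yielding the $\Ot(cn\cdot T_c(n))$ factor with polylogarithmic overhead. The hard part will be proving that the sampling preserves the cut structure at every level throughout the entire deletion sequence (a union bound over exponentially many small cuts and over time), and designing the cascade so that no edge is re-examined more than polylogarithmically often; preserving the Las Vegas guarantee then amounts to confining all randomness to the initial level assignment so that the fully-dynamic data structure's deterministic reports always certify correctness, with the $\Theta(m)$ random bits coming precisely from one geometric draw per edge.
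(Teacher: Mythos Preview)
This theorem is cited from Thorup~\cite{thorup1999decremental} rather than proved in the present paper, but the paper does sketch Thorup's construction in Section~1.4.1, so I will compare your proposal to that sketch.

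Your hierarchy of geometric levels is equivalent to Thorup's recursive uniform sampling with $P=1/2$, and the use of a fully-dynamic $c$-edge-cut data structure at each level is correct in spirit. However, your accounting for the $O(m\log(n^2/m))$ term is wrong, and this reflects a genuine missing idea. You attribute the term to ``each edge at level $i$ is inserted into and deleted from the $i{+}1$ data structures at levels $0,\ldots,i$''; but the sum $\sum_i \E{|G_i|}$ is already $O(m)$, and multiplying by the number of levels $L=\Theta(\log(m/(cn)))$ yields $O(m\log(m/n))$, not $O(m\log(n^2/m))$. These two quantities behave oppositely in density: for $m=\Theta(n^2)$ the first is $\Theta(m\log n)$ while the second is $O(m)$, and for $m=\Theta(n)$ the first is $O(m)$ while the second is $\Theta(m\log n)$. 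So your analysis would not recover the stated bound.

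The actual source of the $O(m\log(n^2/m))$ term, as described in the paper, is entirely different. In Thorup's scheme the $c$-edge-cut data structure is run only on the \emph{sparse} recursive certificate of the sample $S$, never on (a contraction of) $G_i$ itself. What costs $O(m\log(n^2/m))$ is maintaining, at each level, the set $D$ of edges of the ambient graph between distinct $c$-edge-connected components of the sample: when a component of $S$ splits, one must enumerate the ambient edges leaving the smaller side. The smaller-half rule naively gives $O(m\log n)$; the refinement to $O(m\log(n^2/m))$ comes from the observation that a vertex in a component of size $q$ has at most $q$ neighbours inside it, so after roughly $\log(n/\deg(v))$ halvings the scan of $v$'s adjacency list is dominated by the component size rather than $\deg(v)$, giving $\sum_v \deg(v)\log(n/\deg(v))=O(m\log(n^2/m))$. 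Your proposal neither contains this step nor any mechanism that would discover the inter-component edges of $G_i$ after a split without paying this cost; running the cut data structure on the contracted $G_i$ does not sidestep it, because you still have to \emph{feed} that data structure the newly exposed edges.
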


In particular the total update time is $O(m)$ for very dense graphs with $\Omega(n^2)$ vertices. Our main result, which we state below, shows that an amortized constant update time can be obtained as long as the initial graph has $\Omega(n \polylog(n))$ edges.

\begin{restatable}{theorem}{thmcertificate}\label{thm-certificate}
There exists a Monte Carlo randomized algorithm for the decremental $c$-certificate problem with total update time $O(m+n(c+\log{n})\cdot T_c(n)\log^3{n}+nc\log^7{n})$.
The maintained certificate undergoes
$O(nc\log^4{n})$
edge insertions and deletions throughout. The algorithm is correct with high probability. Within this time bound, the algorithm offers a final self-check after processing all updates.
\end{restatable}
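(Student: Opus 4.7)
The plan is to improve Thorup's reduction (\Cref{t:thorup}) so that the contribution of each original edge to the total running time is $O(1)$ rather than $O(\log(n^2/m))$, while keeping the rest of his framework: maintain a sparse $c$-certificate $H$ of size $\Ot(cn)$, apply the fully dynamic $c$-edge-cut data structure to $H$ to drive splits of the vertex partition into blocks (each block a subset of a current $c$-edge-connected component of $G$), and invoke $H$-updates only when a cut of size less than $c$ is reported. The $T_c(n)\log^3 n$ and $\log^7 n$ factors in the statement should then arise from the bookkeeping on $H$ and from amortizing the recomputations inside each block.

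\textbf{Localizing the bottleneck.} In Thorup's scheme, when a block $B$ splits, the non-certificate (``standby'') edges inside $B$ must be redistributed: some cross the newly discovered cut and should be promoted into $H$, while the rest remain inside the new, smaller sub-blocks. A direct walk costs $\Theta(|E(B)|)$ per split; the potential argument in \cite{thorup1999decremental} shaves this to an amortized $O(\log(n^2/m))$ per edge. This is the sole place where $m$, rather than $n$, enters the running time, so it is the only step we need to accelerate.

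\textbf{Replacing the walk by random sampling.} The idea I would pursue is to assign to each standby edge $e$ an independent geometric level $\ell(e) \in \{1, \ldots, O(\log n)\}$ and to keep the standby edges of each block bucketed by level. When $B$ splits, we process levels from the top (sparsest) downwards, stopping as soon as enough crossing edges have been identified to extend the certificate; concretely, at level $i$ we touch only an expected $|E(B)|/2^i$ edges, and the probability that we must descend below level $i$ decreases geometrically in $i$. The amortized per-edge cost across the whole execution is then $O(1)$ by a standard geometric-series argument, while Monte Carlo correctness at each split follows from Chernoff-type concentration and a union bound over the $\Ot(cn)$ splits. The final self-check is deterministic: once all updates are done, verify that the maintained $H$ really is a $c$-certificate of the final $G$ by recomputing the $c$-edge-connected components on both; since $H$ has $\Ot(cn)$ edges and $G$ can be sparsified in linear time by Nagamochi--Ibaraki, this fits into the stated budget.

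\textbf{Main obstacle.} The delicate point will be tuning the sampling rate and the level distribution so that (i) no block is ever incorrectly believed to be $c$-edge-connected because a cut of size less than $c$ is crossed only by unsampled standby edges, yet (ii) each edge is examined $O(1)$ times in expectation. Getting the dependence on $c$ right in this tradeoff, together with the polylogarithmic blowup of the certificate (the $\log^4 n$ factor in the insertion/deletion count) and the need to reinvoke the $c$-edge-cut data structure after each promotion, is where the precise powers of $\log n$ in the statement should emerge.
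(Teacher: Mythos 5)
There is a genuine gap: your plan keeps Thorup's framework and only tries to speed up the step where, after a sampled component splits, the crossing edges of $G$ are promoted into the certificate. But for $S\cup D$ to be a $c$-certificate you must promote \emph{all} edges of $G$ between the two new components, not ``enough'' of them --- any omitted crossing edge can change the $c$-edge-connected classes. A geometric-level search that stops early is the right tool for finding \emph{one} replacement edge (as in fully dynamic connectivity), but to find all crossing edges it degenerates to scanning every level, i.e.\ $\Theta(|E(B)|)$ per split, and you are back to the $O(\log(n^2/m))$ amortization. Worse, the quantity you would need to be small --- the number of $G$-edges between two components of the sample --- is provably not small in general: the paper exhibits an Erd\H{o}s--R\'enyi instance where two components of $S$ have $\Omega(n^{1/3})$ edges of $G$ between them. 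No tuning of sampling rates fixes this; the paper instead \emph{redesigns the certificate} so that every cut it ever extracts has size at most $\delta=O(c\operatorname{polylog} n)$. This is done by $\ell=O(\log n)$ rounds of pairwise-independent sampling, pruning components whose boundary in the current graph has size below $\delta$, and contracting the $c$-edge-connected components of the sample, with a combinatorial lemma guaranteeing the number of contracted vertices drops by a constant factor per round. Only then can the small boundaries be located in $O(m)+\widetilde{O}(n\delta)$ total time, via an XOR-trick ``small boundary oracle'' that partitions $E$ into $s$ random groups and answers $\partial_G(S)$-queries in time $O(|S|s+|E(S,V)|\cdot|\partial_G(S)|/s+\log n)$, i.e.\ sublinear in $|E(S,V)|$ exactly when the boundary is small.

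Your self-check is also insufficient as stated: verifying only the \emph{final} $H$ against the final $G$ does not certify that no query was answered incorrectly earlier, since an edge wrongly omitted from $H$ may later be deleted from $G$ and leave no trace. The paper's self-check is interleaved with the updates --- at each deletion it tests in $O(1)$ time whether the deleted edge should have been in the certificate, and at each insertion into the certificate whether that edge was wrongly absent before --- and only the residual edges are checked at termination.
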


In fact, our algorithm is itself a reduction to $O(\log{n})$ instances of the decremental $c$-certificate problem on a subgraph of $G$ with $O(m/\log^2{n})$ edges. To handle these instances,
we use the state-of-the-art data structure (Theorem~\ref{t:thorup}) which costs only $O(m)$ in terms of $m$. As a result, our improved reduction  (Theorem~\ref{thm-certificate}) requires $\Theta(m)$ random bits to hold.

However, our new randomized $c$-certificate that is the key to obtaining the new reduction requires only pairwise independent sampling to work. This is in sharp contrast with the certificate of Karger~\cite{Karger99}, used in the construction of Thorup's data structure (Theorem~\ref{t:thorup}), which requires full independence, i.e., $\Theta(m)$ random bits. We show that we may instead plug our new certificate into Thorup's data structure at the cost of a single additional logarithmic factor in the running time. Since Karger's certificate constitutes the only use of randomness in Thorup's data structure, and full independence in our construction is required only for invoking Theorem~\ref{t:thorup}, we obtain the below low-randomness version of our main result.

\begin{restatable}{theorem}{thmcertificatelow}\label{thm-certificate-low}
There exists a Monte Carlo randomized algorithm for the decremental $c$-certificate problem with total update time $O(m+nc\cdot T_c(n)\log^4{n}+nc\log^7{n})$.
The maintained certificate undergoes
$O(nc\log^4{n})$
edge insertions and deletions throughout. The algorithm is correct with high probability
if $O(\polylog{n})$ random bits are provided. Within this time bound, the algorithm offers the final self-check after processing all updates.
\end{restatable}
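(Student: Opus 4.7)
The plan is to modify Thorup's decremental $c$-certificate algorithm from \Cref{t:thorup} so that it consumes only $O(\polylog n)$ random bits, and then re-run the outer reduction used to prove \Cref{thm-certificate} with this low-randomness inner algorithm.

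First, I would inspect Thorup's construction underlying \Cref{t:thorup} and observe that the only randomized ingredient is Karger's sparse connectivity certificate~\cite{Karger99}, whose standard sampling analysis uses full independence and hence $\Theta(m)$ random bits. Every other step of Thorup's data structure is deterministic, so replacing Karger's certificate by any randomized $c$-certificate with equivalent structural guarantees but a polylogarithmic seed would immediately yield a low-randomness version of \Cref{t:thorup}.

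Second, I would plug in the new randomized $c$-certificate developed in this paper (the same one used to prove \Cref{thm-certificate}) in place of Karger's. As recalled in the excerpt, this certificate is designed so that only pairwise-independent sampling is required, which in turn needs only $O(\polylog n)$ random bits to specify. The verification step is to show that Thorup's outer algorithm in \Cref{t:thorup} only relies on the \emph{guarantee} provided by the inner certificate (preservation of small cuts with high probability), not on how that guarantee is obtained; in particular, its amortized-analysis and bookkeeping arguments are independence-agnostic. The substitution costs a single additional logarithmic factor in the running time, which I would account for carefully by tracking how the bound of \Cref{t:thorup} depends on Karger's certificate parameters and re-deriving the bound with the slightly weaker parameters of the new certificate. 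This is exactly what explains the passage from the $(c+\log n)\,T_c(n)\log^3 n$ term in \Cref{thm-certificate} to the $c\,T_c(n)\log^4 n$ term in \Cref{thm-certificate-low}.

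Third, I would compose the pieces. The outer reduction proving \Cref{thm-certificate} reduces the original problem to $O(\log n)$ instances of decremental $c$-certificate on subgraphs of size $O(m/\log^2 n)$, each solved by one invocation of \Cref{t:thorup}; crucially, the \emph{outer} reduction already requires only pairwise independence and hence already uses only $O(\polylog n)$ random bits. Substituting the low-randomness variant of \Cref{t:thorup} for every inner invocation eliminates the last source of $\Theta(m)$ random bits, and the extra logarithmic overhead propagates straightforwardly through the reduction to yield the stated bound $O(m+nc\cdot T_c(n)\log^4 n+nc\log^7 n)$ with $O(nc\log^4 n)$ certificate updates. The final self-check and the high-probability correctness are inherited from \Cref{thm-certificate} since the self-check is deterministic and the correctness argument only uses the pairwise-independent guarantees of the new certificate.

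The main obstacle is the second step: verifying that the pairwise-independent certificate can substitute for Karger's inside Thorup's data structure. Karger's original concentration argument uses full independence via Chernoff bounds, so re-establishing the same cut-preservation guarantee under pairwise independence will require either a Chebyshev-style second-moment argument, a union bound over a smaller family of events after a suitable bucketing of cut sizes, or a small amplification via $O(\log n)$ independent repetitions; the precise choice is what naturally forces the extra $\log n$ factor and must be tracked carefully to land on the exponents $\log^4 n$ and $\log^7 n$ in the final bound.
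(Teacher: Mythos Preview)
Your overall plan matches the paper's approach: replace Karger's fully-independent sample inside Thorup's algorithm by the paper's $O(\log n)$-level pairwise-independent sample (this is exactly \Cref{l:karger-alt}), pay an extra $\log n$ factor turning $(c+\log n)$ into $c\log n$, and then re-run the reduction behind \Cref{thm-certificate}. The ``amplification via $O(\log n)$ independent repetitions'' you list as one option is precisely what the paper does.

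There is one inaccuracy worth flagging. You assert that ``the outer reduction already requires only pairwise independence and hence already uses only $O(\polylog n)$ random bits.'' This is not quite true: besides the levels $H_i^0$, the outer reduction also samples the auxiliary graph $R$ (used to estimate boundary sizes) and, inside the small-boundary oracle, both the random partition $E_1,\dots,E_s$ and the per-edge XOR bit-strings. Each of these is analyzed via Chernoff-type concentration (or, for the XOR trick, requires independence over up to $O(c\,\polylog n)$ edges simultaneously), so pairwise independence does not suffice. The paper closes this by invoking a pseudorandom generator (Christiani--Pagh) that produces $\Theta(c\,\polylog n)$-wise independent values from an $O(c\,\polylog n)$-bit seed, together with the standard fact that $\Theta(\log n)$-wise independence is enough for Chernoff-like tails. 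None of this changes the asymptotics, but it is a genuine step you would need to add.
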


By using Theorem~\ref{thm-certificate-low} with best known fully dynamic algorithms for different values of $c$~\cite{holm2001poly, jin2020, Thorup07}, we obtain:

\begin{restatable}{theorem}{thmfullydynamicconn}\label{thm:fully-dynamic-conn-2comm}
    There exists Monte Carlo randomized decremental connectivity and decremental 2-edge-connectivity algorithms with $O(m+n\log^7{n})$ total update time and $O(1)$ query time.
\end{restatable}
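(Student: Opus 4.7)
The plan is to instantiate Theorem~\ref{thm-certificate-low} with $c=1$ for decremental connectivity and $c=2$ for decremental 2-edge-connectivity, producing in each case a sparse $c$-certificate $H$ of $G$ whose $c$-edge-connected classes coincide with those of $G$ throughout, and which undergoes $O(nc\log^4 n)$ total edge insertions and deletions. Plugging in the best known fully-dynamic $c$-edge-cut bounds, $T_1(n)=O(1)$ trivially and $T_2(n)=O((\log n\cdot\log\log n)^2)$ from~\cite{HolmRT18}, the $nc\cdot T_c(n)\log^4 n$ term in the bound of Theorem~\ref{thm-certificate-low} is absorbed into $nc\log^7 n$, so maintaining $H$ costs $O(m+n\log^7 n)$ total for both $c\in\{1,2\}$.

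To achieve $O(1)$ query time I would exploit the fact that in the decremental setting the $c$-edge-connected classes of $G$ can only split, never merge, over the update sequence. I would maintain an explicit integer label $\ell(v)$ per vertex identifying its current class, so queries are answered by comparing two labels in $O(1)$ time. To detect splits I would run the polylogarithmic fully-dynamic $c$-edge-connectivity data structure of Holm, de Lichtenberg, and Thorup~\cite{holm2001poly} on $H$, paying $O(n\polylog n)$ amortized over all certificate updates, which is comfortably within the $O(n\log^7 n)$ budget. Each time this data structure reports that a class has split, I would identify the two new classes by running a pair of interleaved BFS searches from the endpoints of the triggering edge inside the appropriate subgraph of $H$ ($H$ itself for $c=1$; the relevant class minus the newly-created bridge for $c=2$), halting as soon as one search completes and relabeling only that smaller side.

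The total cost of this bookkeeping is controlled by the classical small-side charging argument: each vertex is relabeled $O(\log n)$ times (total $O(n\log n)$), and each edge of $H$ participates in the smaller-side search $O(\log n)$ times during its lifespan in $H$, contributing $O(|E(H)|\log n) = O(n\log^5 n)$. The delicate step I expect to be the main obstacle is the $c=2$ case, where a deletion in $G$ may split a 2-edge-connected class not by disconnecting it but by turning a formerly cyclic edge of $H$ into a bridge; one must use the dynamic bridge list maintained by the HRT structure to pinpoint which class has split, and then search inside that class with the new bridge removed, being careful to charge bridge-list updates against the same $O(nc\log^4 n)$ budget of updates to $H$. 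Summing the certificate maintenance, the auxiliary fully-dynamic data structures on $H$, and the relabeling/search cost yields a total of $O(m+n\log^7 n)$ update time with $O(1)$ queries, establishing the theorem.
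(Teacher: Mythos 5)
Your proposal is correct and follows essentially the same route as the paper: instantiate the certificate reduction for $c=1,2$, run a fully-dynamic bridge/cut structure on the sparse certificate (which undergoes only $O(nc\log^4 n)$ updates), and maintain explicit component labels under splits via the small-side charging argument. The only cosmetic difference is that you re-derive the label-maintenance machinery inline, whereas the paper packages it as Lemma~\ref{fully-dynamic-decrmental-components} and first prunes all $<c$-cuts from the certificate so that $2$-edge-connected components become plain connected components of the pruned graph.
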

\begin{restatable}{theorem}{tdecrclasses}\label{t:decremental-c-classes}
Let $c=(\log{n})^{o(1)}$.
There exists a Monte Carlo randomized decremental $c$-edge-connectivity data structure which can answer queries to whether two vertices are in the same $c$-edge connected class in $O(n^{o(1)})$ time, and which has $O(m)+\tilde O(n^{3/2})$ total update time.
\end{restatable}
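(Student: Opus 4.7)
The plan is to instantiate Theorem~\ref{thm-certificate-low} with the appropriate black-box fully-dynamic $c$-edge-cut subroutine to obtain a sparse $c$-certificate $H$ of $G$, and then layer a separate fully-dynamic $c$-edge-connectivity query data structure on top of $H$. By the defining property of a $c$-certificate recalled in Section~\ref{sec:reduction-result}, the $c'$-edge-connected classes of $H$ and $G$ coincide for every $c'\leq c$, so the query answers computed on $H$ are the correct answers for $G$.

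First I would plug in Thorup's fully-dynamic min-cut data structure~\cite{Thorup07} as the $T_c(n)$ subroutine required by Theorem~\ref{thm-certificate-low}. For $c=(\log n)^{o(1)}$ this gives $T_c(n)=n^{1/2+o(1)}$ worst-case time per update, so the certificate-maintenance cost becomes
\[
O(m) + n\cdot c\cdot T_c(n)\cdot \log^4 n + nc\log^7 n \;=\; O(m) + n^{3/2+o(1)} \;=\; O(m) + \tilde O(n^{3/2}).
\]
At the same time, Theorem~\ref{thm-certificate-low} guarantees that the certificate $H$ undergoes only $O(nc\log^4 n)=\tilde O(n)$ edge insertions and deletions throughout the whole sequence, and that it can be maintained using only $O(\polylog n)$ random bits while remaining correct with high probability.

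Next I would maintain, alongside $H$, the fully-dynamic $c$-edge-connectivity data structure of Jin and Sun~\cite{jin2020}, which supports both updates and same-$c$-edge-connected-class queries in $n^{o(1)}$ time per operation for $c=(\log n)^{o(1)}$. Feeding it the $\tilde O(n)$ edge insertions and deletions that $H$ incurs costs $\tilde O(n)\cdot n^{o(1)} = n^{1+o(1)}$ total time, which is absorbed into the $\tilde O(n^{3/2})$ term. A query in $G$ about whether two given vertices lie in the same $c$-edge-connected class is answered by forwarding it to the Jin--Sun structure on $H$, which takes $n^{o(1)}$ time and is correct by the class-preservation property of $c$-certificates.

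The only subtle point, and the part I would double-check most carefully, is that the two layers compose correctly under the Monte Carlo guarantees: the certificate is correct with high probability, and conditioned on this, every query answer returned by the Jin--Sun structure on $H$ is the true answer for $G$. Since Jin--Sun is deterministic and the certificate's correctness bound is already high probability, a union bound keeps all answers correct with high probability, matching the claimed $O(m)+\tilde O(n^{3/2})$ update and $n^{o(1)}$ query bounds.
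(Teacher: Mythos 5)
Your proposal matches the paper's own proof: the paper likewise maintains the $c$-certificate via Theorem~\ref{thm-certificate-low} instantiated with Thorup's fully-dynamic $c$-edge-cut structure ($T_c(n)=O(n^{1/2}\poly(c))$), and runs the Jin--Sun fully-dynamic pairwise $c$-edge-connectivity data structure on the certificate, which undergoes only $O(nc\log^4 n)$ updates (this is packaged as Theorem~\ref{thm:general-reduction-classes} before specializing). The accounting and the correctness argument via class preservation are the same as in the paper.
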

\begin{restatable}{theorem}{thmccomponents}\label{t:decremental-c-components}
Let $c=O(n^{o(1)})$.
There exists a Monte Carlo randomized decremental  $c$-edge-connected components data structure with $O(m+n^{3/2+o(1)})$ total update time and $O(1)$ query time.
\end{restatable}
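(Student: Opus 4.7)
The plan is to combine Theorem~\ref{thm-certificate-low} with Thorup's fully-dynamic min-cut algorithm~\cite{Thorup07}, used as the fully-dynamic $c$-edge-cut oracle. For $c=O(n^{o(1)})$, the bound $T_c(n)=O(n^{1/2}\poly(c))$ from the introduction becomes $T_c(n)=n^{1/2+o(1)}$. Plugging this into Theorem~\ref{thm-certificate-low} yields a decremental $c$-certificate $H$ of $G$ maintained in total time
\[
O\!\bigl(m+nc\cdot T_c(n)\log^4 n+nc\log^7 n\bigr)\;=\;O(m+n^{3/2+o(1)}),
\]
with $H$ undergoing only $O(nc\log^4 n)=n^{1+o(1)}$ insertions and deletions throughout. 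Because $H$ is a $c$-certificate, its $c$-edge-connected components coincide with those of $G$ at every point in time, so it suffices to expose these components in a way that supports $O(1)$ queries.

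To expose them, I would run a second instance of Thorup's fully-dynamic $c$-edge-cut data structure on a working copy $H'$ of $H$: whenever an edge belonging to a cut of size less than $c$ is reported, remove it from $H'$ and repeat until no such edge remains. After this stabilisation, the connected components of $H'$ are exactly the $c$-edge-connected components of the current $G$. The total cost of this auxiliary loop is bounded by the combined number of updates to $H$ and to $H'$ times $T_c(n)$, which is at most $n^{1+o(1)}\cdot n^{1/2+o(1)}=n^{3/2+o(1)}$, hence fits inside the overall budget.

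For the $O(1)$ query time, I would exploit that $G$ is decremental: even though $H$ is fully dynamic, the partition of $V$ into $c$-edge-connected components of $G$ can only refine over time. I would store an array mapping each vertex to a component label; whenever a removal from $H'$ actually triggers a split of one component into two, I determine the smaller side (using a linear-in-the-side scan driven by an Euler-tour / spanning-forest structure on the residual graph) and relabel only its vertices. By the standard smaller-half argument, each vertex is relabeled $O(\log n)$ times, contributing $O(n\log n)$ total work, after which a query reduces to a single array lookup and comparison.

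\textbf{Main obstacle.} The delicate part is faithfully isolating the \emph{true} splits of $G$'s partition amid both directions of churn in $H$. A merge may well occur at the level of $H'$ (because $H$ gains an edge under Theorem~\ref{thm-certificate-low}), but by the certificate property such a merge must lie inside a single existing $c$-edge-connected component of $G$ and must therefore be silently absorbed rather than propagated to the label array. Cleanly coupling an auxiliary spanning-forest / dynamic connectivity structure on $H'$ with the smaller-half relabeling, and charging all split-detection work against the $n^{1+o(1)}$ updates delivered by Theorem~\ref{thm-certificate-low} together with the $n^{1+o(1)}$ cut-edge removals in $H'$, is where the technical care is needed; granted this bookkeeping, the desired $O(m+n^{3/2+o(1)})$ total time and $O(1)$ query time follow directly from the two ingredients above.
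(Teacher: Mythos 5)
Your proposal is correct and follows essentially the same route as the paper: the paper likewise plugs Thorup's fully-dynamic min-cut structure ($T_c(n)=O(n^{1/2}\poly(c))$) into the certificate theorem, runs a fully-dynamic $c$-edge-cut data structure on the maintained certificate to repeatedly prune cuts of size $<c$, and maintains explicit component labels under splits via the smaller-half relabeling (packaged there as Lemma~\ref{fully-dynamic-decrmental-components}) to get $O(1)$ queries. The ``merge absorption'' issue you flag as the main obstacle is precisely what that lemma's restricted fully-dynamic connectivity setting (insertions only between already-connected endpoints, so the component partition only refines) is designed to handle.
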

All the above applications of our main result work using only $O(\polylog{n})$ random bits. They moreover each have the self-check property as well.
As discussed before, our new 2-edge-connectivity data structure implies an optimal $O(m)$-time unique perfect matching algorithm for $m=\Omega(n\polylog{n})$.

\subsection{Adaptive updates and unique perfect matching}
All our time bounds are amortized. Amortized time bounds are
particularly relevant for dynamic data structures used inside
algorithms solving problems for static graphs. In such contexts, future updates
often depend on answers to previous queries, and therefore we need algorithms that work with adaptive updates.

Our reduction works against adaptive updates as long as queries do not
reveal any details about the $c$-certificate $H$. The reduction will safely
maintain the following public information about the
$c$-edge-connected components of $G$: between deletions, each such
component will have an ID stored with all its vertices, so two
vertices are in the same $c$-edge-connected component if and only if
they have the same component ID. With the component ID, we store its size and a list with its vertices in order of increasing vertex ID.
Finally, we can have a list of all edges that are not in $c$-edge-connected
components. After each update, we can also tell what are the IDs of the
new $c$-edge-connected components, and what are the edges between them.

For the case of 2-edge-connectivity, the above means that we can
maintain the bridges of a decremental graph and we can also maintain
the connected components and their sizes without revealing what the current randomized certificate looks like. All this is needed for the unique
perfect matching algorithm of Gabow, Kaplan, and Tarhan~\cite{GabowKT01}. The
algorithm is an extremely simple recursion based on the fact that
a graph with a unique perfect matching has a bridge and all components have
even sizes. The algorithm first asks for a bridge $(u,v)$ of some component. If there
is none, the is no unique matching. Otherwise we remove $(u, v)$ and check the
sizes of the components of $u$ and $v$. If they are odd, $(u,v)$ is in the unique matching, and we remove all other incident edges. Otherwise $(u,v)$ is not in the
unique matching. The important thing here is that the bridges do not tell us anything about our 2-certificate of the 2-edge-connected components.

Thus we solve
the static problem of deciding if a graph has a unique perfect matching
in $O(m)+\Ot(n)$ time. If the self-verification reports a possible mistake,
we simply rerun. Thus we get a Las Vegas algorithm that terminates in
$O(m)+\Ot(n)$ time with high probability.

\subsection{Techniques}
Our main technical contribution is a new construction of a sparse randomized $c$-certificate that
witnesses the $c$-edge-connected components of $G$ and can be maintained under edge deletions in $G$.
In the static case, deterministic certificates of this kind have been known for decades~\cite{NagamochiI92}. However, they are not very robust in the decremental setting, where an adversary can constantly remove its edges forcing it to update frequently.
Consequently, Thorup~\cite{thorup1999decremental} used a randomized
sample-based certificate to obtain his reduction. The general idea behind this approach is to ensure that the certificate is sparse and undergoes few updates. Ideally, the sparse certificate will only have to be updated whenever an edge from the certificate is deleted. Using a fully dynamic data structure on the certificate, we may obtain efficient algorithms provided that we don't spend too much time on maintaining the certificate. Thorup's reduction had an additive
overhead $O(m\log(n^2/m))$ for maintaining the certificate, which we will reduce to $O(m)$. We shall, in fact, use Thorup's reduction as a subroutine, called on $O(\log n)$ decremental subproblems each starting with
$O(m/\log^2 n)$ edges. 

\subsubsection{Thorup's construction~\cite{thorup1999decremental}}
Let us first briefly describe how Thorup's algorithm operates on certificates and highlight difficulties in improving his reduction to linear time. First of all, the $c$-certificate is constructed as follows: initially, sample edges of $G$ uniformly with probability $P\leq 1/2$, thus obtaining a subgraph~$S$. Then, compute the $c$-edge-connected
components of $S$ and form a certificate $H$ out of two parts: (1) A \emph{recursive} certificate of $S$, and (2) the subgraph~$D$ consisting of edges of $G$ connecting distinct $c$-edge-connected components of $S$.

As proved by Karger~\cite{Karger99}, $D$ has size $\Ot(cn/P)$ with high probability. Thorup~\cite{thorup1999decremental} generalizes this by proving that $D$ undergoes only $\Ot(cn/P)$ insertions
throughout any sequence of edge deletions to $S$.
Since $D$ depends only on the $c$-edge-connected components of $S$, it is enough to have a $c$-certificate of $S$ in order to define $D$. Hence, a $c$-certificate
of $S$ (which is a graph a size $O(mP)$, i.e., a constant factor smaller) is maintained under edge deletions recursively. The recursion stops when the size
of the input graph is $O(cn)$.
To maintain $D$ at each recursive level, we first need to maintain
the $c$-edge-connected components of the (recursive) certificate of $S$ under edge deletions.
The certificate of $S$ can be (inductively) seen to have $\Ot(cn/P)$ edges and
undergo $\Ot(cn/P)$ updates. As a result, for $P =1/2$ maintaining its $c$-edge-connected components costs
$\Ot(cn\cdot T_c(n))$ total time using the fully-dynamic $c$-edge-cut data structure.
Since at each recursion level the certificate size decreases geometrically, the expected cost of all
the dynamic $c$-edge-cut data structures is $\Ot(cn\cdot T_c(n))$.

The additional cost of $O(m\log(n^2/m))$ comes from the fact that,
at each level of the recursion, when a $c$-edge-connected component in $S$ splits into two components as a result of an edge deletion, we need to find edges of $G$ between these two components in order to update $D$.
This takes $O(m\log{(n^2/m)})$ total time throughout using a standard technique
of iterating through the edges incident to the nodes in the smaller component every time a split happens~\cite{EvenS81}. The $O(\log{(n^2/m)})$ (instead of $O(\log{(n)})$) cost comes by noticing that a vertex can at most have $q$ neighbors in a component of order $q$, and that after we go through the edges of a vertex $i$ times it is in a component of order $\leq n/2^{i}$; hence it is only the first $O(\log (n/\deg(v)))$ times that all neighbors of $v$ have to be considered, so the total time spent on this becomes $O\left(\sum_{v\in V} \deg(v) \log(n/\deg(v)) \right)=O(m\log{(n^2/m)})$.

It turns out very challenging to get rid of the $O(m\log(n^2/m))$ term associated with finding cuts when components split in Thorup's reduction. If we knew that all of these cuts were \emph{small}, say of size at most $\delta$, then we could apply a whole bag of tricks for efficiently finding them in a total time of $\tilde O(n\delta)$, e.g., using invertible Bloom lookup tables \cite{goodrich2011invertible}, or the XOR-trick~\cite{DBLP:conf/soda/AhnGM12,DBLP:conf/pods/AhnGM12, kapron2013dynamic}. Unfortunately, the bound of $\tilde O(cn/P)$ only gives an average bound on the number of edges between pairs of components, and in fact there can be pairs of components having as many as $\Omega(n^{1/3})$ edges between them, as we will later show. In order to resolve this, we have to introduce a new type of sample based $c$-edge certificate obtained by only removing cuts of size at most $\delta=O(c\polylog{n})$ from $G$.
In the following three subsections, we describe the ideas behind this new certificate, the technical challenges encountered in efficiently maintaining it, and why such a certificate is relevant for decremental connectivity algorithms.

\subsubsection{A small cut sample certificate}\label{sec:certificate-high-level}
On the highest level, our $c$-edge certificate of a graph $G=(V,E)$ is constructed starting with a sample $S\subset G$. For now we assume that the edges of $S$ are sampled independently with some probability $P\leq 1/2$, but we will later see how to reduce the number of random bits needed for the sampling to $O(\polylog{n})$. 
In our algorithms, $G$ will be the current decremental graph and the sample $S$ will be made from the original graph. Thus, when $G$ has undergone a sequence of deletions, the current sample will be $S\cap G$. At any point in time, the maintained $c$-edge certificate only depends on the sample $S$ and the current graph $G$, not on the sequence of edge updates made to $G$ so far. We may therefore describe the $c$-edge certificate statically. 

The critical idea behind our certificate is to introduce a small-cut-parameter $\delta$. 
Our certificate is obtained by iteratively removing certain cuts from $G$ where each cut is allowed to be of size at most~$\delta$. We denote by $D\subset G$ the graph whose edge set consists of the edges removed in this process. The overall goal is to define this cut removal process in a way so that (1) each connected component of $G\setminus D$ is $c$-edge-connected in $S$, and (2) it is easy to detect new small cuts under edge deletions issued to $G$. We then use $S\cup D$ as our $c$-edge connectivity certificate of $G$. Importantly, we want $\delta$ to be \emph{as small as possible}, ideally $\delta= O(c\polylog(n))$. This is because $\tilde O(\delta n)$ will show up as an additive cost in our algorithm for maintaining the certificate. 
We will describe shortly how this type of certificate can be used in the design of efficient decremental $c$-edge-connectivity algorithms, but let us first demonstrate that the existence of such a cut removal process for a small~$\delta$ is non-trivial.

First of all, we could simply remove \emph{all} cuts from $G$ of size at most $\delta$ leaving us with the $(\delta+1)$-connected components. Karger's result~\cite{Karger99} implies that with $\delta=O((c+\log n)/P)$ sufficiently large, these components will remain $c$-edge connected in $S$. However, in order to maintain the small cuts, we would need a decremental $\delta$-edge connectivity algorithm. As $\delta>c$, this approach  simply reduces our problem to a much harder one. 

Suppose on the other hand that we attempted to use Thorup's sampling certificate~\cite{thorup1999decremental} described above. To simplify the exhibition, let's assume that $P=1/2$ and $c=1$. If $D$ is the set of edges between connected components of $S$, $D\cup S$ is a certificate. Thorup's algorithm recurses on $S$ to find a final certificate of $G$. At first sight it may seem like $D$ can be constructed by iteratively removing cuts of size at most $\delta =O(\log n)$ between the connected components of $S$.
After all, isn't it unlikely that a connected component of $S$ has more than, say, $100 \log n$ unsampled outgoing edges when the sampling probability is $P=1/2$? As tempting as this logic may be, it is flawed. To illustrate this, let $G=G(n,2/n)$ be the Erdős–Rényi graph obtained by sampling each edge of the complete graph on $n$ vertices independently with probability $2/n$. Let further $S$ be the subgraph of $G$ obtained by sampling each edge with probability $1/2$. Then $S$ is distributed as the Erdős–Rényi graph $G(n,1/n)$ and basic phase transition results~\cite{Erdos:1960} show that the two largest components of $S$, $C_1$ and $C_2$, almost surely have size $\Theta(n^{2/3})$. Now, we can conversely construct $G$ from $S$ by including each non-sampled edge of the complete graph with probability $\frac{1}{n-1}$, and then we expect $G$ to contain as many as $\Theta(n^{1/3})$ edges between $C_1$ and $C_2$. At some point in the iterative process, we are thus forced to remove a cut of size $\Omega(n^{1/3})$ splitting $C_1$ and $C_2$, and we would have to choose $\delta$ of at least this size (but it is possible that other examples could show that $\delta$ would have to be even larger). Our algorithms spend total time $\tilde O(n\delta)$ on finding these cuts, and if $\delta=\Omega(n^{1/3})$, this is not good enough for a linear time algorithm for non-sparse graphs.
We remark that in this example, each vertex of $G$ has degree $O(\log n)$ with high probability. Therefore, an alternative approach yielding cuts of size $O(\log n)$ would be to cut out one vertex at a time moving all incident edges to $D$. In particular this would cut the large sampled components $C_1$ and $C_2$ into singletons, one vertex at a time. Clearly, we cannot proceed like this for general graphs which may have many vertices of large degree. Nevertheless, this simple idea will be critically used in our construction.

Our actual certificate uses $\delta =O(\frac{c\log n}{P})$. The certificate can be constructed for any $P\leq 1/2$, but in our applications, $P=1/\polylog{n}$. We proceed to sketch the construction now and refer the reader to Section~\ref{sec:certificate} for the precise details. On a high level, the sample $S$ is partitioned into $\ell=O(\log n)$ samples $S_1,\dots, S_\ell$, each having size approximately $Pm/\ell$. To construct our certificate, we start by iteratively pruning~$G$ of the edges incident to vertices of degree less than $\delta$, moving these edges to $D$. The graph left after the pruning $G_1=G\setminus D$ satisfies that each vertex of positive degree has degree at least $\delta$. Next, $S_1$ defines a sample of $G_1$, $H_1=S_1\cap G_1$ and the bound on the minimum positive degree in $G_1$ guarantees that with constant probability a fraction of $3/4$ of the vertices with positive degree in the sampled subgraph $H_1$ has degree $\geq 4c$ . We prove a combinatorial lemma stating that such a graph can have at most
$5n/6$ $c$-edge-connected components.
As a result, if we contract the $c$-edge-connected components of $H_1$ in the pruned graph $G_1$, the resulting graph $G_1'$ would have at most $5n/6$ vertices.
Finally, we construct a $c$-certificate for $G_1'$ recursively using the samples $S_2,S_3,\dots$, stopping when the contracted graph has no edges between the contracted vertices (here $G$ played the role of $G_0'$). The constant factor decay in the number of components ensures that we are done after $\ell=O(\log n)$ steps with high probability.
All edges of $D$ are obtained as the removed edges of cuts of $G$ of size less than $\delta$, so $D$ will have size $O(n\delta)$. Our certificate will simply be $S\cup D$ which we prove is in fact a $c$-certificate. 

With this, we have thus completed the goal of obtaining a small cut sample certificate with~$\delta$ as small as $O(\frac{c\log n}{P})$. Abstractly, our certificate has a quite simple description: we alternate between sampling, removing small cuts around $c$-edge-connected components in the sample, and finally contracting these components.
However, in our implementation, we cannot afford to perform the contractions as described above explicitly. This makes it difficult to efficiently find the edges leaving the $c$-edge connected components of $H$ (at a given recursive level), and we must be able to do this since these $c$-edge-connected components may split as edges are deleted from $G$. It turns out that since we are only concerned with cuts of size at most $\delta$, we can in fact identify these cuts in total time $O(m)+\tilde O(\delta n)$. We will describe this in the following section.

A final property of our new randomized decremental $c$-certificate algorithm is that it requires only $O(\log^2 n)$ random bits to yield high-probability correctness bounds. This is in sharp contrast with Thorup's algorithm~\cite{thorup1999decremental} which requires $\Omega(m)$ random bits. 
On a high level, the reason we can do with few random bits is that in each step of the construction of our certificate, we only need the bounds on the number of contracted components to hold with constant probability. Indeed, we will still only have $O(\log n)$ recursive levels with high probability. This means that for the probability bounds within a single recursive level, it suffices to apply Chebyshev's inequality.
While the reduction of the number of required random bits is nice, the main point, however, is that with our new certificate we can get down to constant amortized update time per edge-deletion for decremental (2-edge)-connectivity for all but the sparsest graphs.

\subsubsection{Maintaining our certificate}
As edges are deleted from $G$, the recursive structure of the $c$-certificate $H$ changes. Indeed, the deletion of an edge may cause the following changes in one of the recursive layers of $H$: (1) introduce a cut of size less than $\delta$ surrounding a $c$-edge-connected component or (2) break a $c$-edge-connected component in two. In the first case, the edges of the cut have to be moved to $D$, and deleted from the current and later layers of $H$, causing further cascading.
When a $c$-edge-connected component (in a recursive layer) of $H$ breaks in two, we need to determine whether either of the new components has less than $\delta$ outgoing edges in $G$. If we use the standard technique of iterating over all the edges incident to nodes of the smaller component, this again incurs an $O(\log(n^2/m))$ cost per edge which is insufficient.
However, as we only care about components with at most $\delta$ outgoing edges, it turns out that we can do better.
We define the \emph{boundary} of a component~$C$ of some graph $H\subset G$ to be the set of edges of $G$ with one endpoint in $C$, and another in $V\setminus C$.
To overcome the $O(\log{(n^2/m)})$ cost per edge, we prove that we can maintain boundaries of size at most $\delta$ under splits using a Monte Carlo randomized algorithm in ${O(m+n\delta\polylog{n})}$ total time.  We realize this result by deploying the XOR-trick~\cite{DBLP:conf/soda/AhnGM12,DBLP:conf/pods/AhnGM12, kapron2013dynamic} in a somehow unusual manner. In particular, we randomly partition the set of edges of $G$ and use the XOR-trick to detect the parts that are relevant for scanning, as opposed to standard applications of the XOR-trick, which are used to detect a single replacement edge over a polylogarithmic number of independent samples which unavoidably introduces a polylogarithmic dependence in the cost per edge. Since we are not interested in discovering a single edge incident to some set, we only need to apply the XOR-trick once, which allows us to keep the running time linear in $m$.

\subsubsection{Combining our certificate with Thorup's algorithm}
With the certificate as above, the overall idea for a decremental connectivity algorithm is to maintain a $c$-certificate of (each recursive layer of) the \emph{decremental} graph $H=S\setminus D$ using the algorithm by Thorup~\cite{thorup1999decremental}. 
By choosing $P=1/\log^2 n$, $S$ has $m'=O(m/\log^2 n)$ edges with high probability, so employing the algorithm of Theorem~\ref{t:thorup} on each recursive layer takes total time $O(m'\log^2 n+ncT_c(n)\polylog{n} )=O(m+ncT_c(n)\polylog{n} )$ with high probability. Let $H^*$ be the $c$-certificate thus obtained for $H$. Using a fully dynamic $c$-edge-connectivity algorithm on $H^* \cup D$ (which undergoes $O(cn \polylog{n})$ updates), we maintain a $c$-edge certificate of $G$. As $H^* \cup D$ undergoes $O(cn \polylog{n})$ updates, running the fully dynamic algorithm takes total time $O(cnT_c(n) \polylog{n})$.

We remark that for $c=1,2$ we could instead use a fully dynamic $c$-edge connectivity algorithm on $H$ with polylogaritmic update and query time at the price of a smaller $P$ (which would incur more log-factors in our final time bound). For $c>2$, however, we only know that $T_c(n)=O(n^{1/2}\poly(c))$. Since, running a fully dynamic algorithm on $H$ takes total time $\Omega(mT_c(n)/\polylog{n})$, this is insufficient to obtain linear time algorithms for dense graphs.

\subsubsection{Final self-check}
Let us finally describe the ideas behind the final self-checks claimed in Theorem~\ref{thm-certificate} and~\ref{thm-certificate-low} in a more general context.
In particular, we show that if a randomized Monte Carlo dynamic algorithm satisfies some generic conditions then it can be augmented to detect, at the end of its execution, whether there is any chance that it answered any query incorrectly. That is, if the self-check passes then it is guaranteed that all queries were answered correctly throughout the execution of the algorithm. Otherwise, it indicates that some queries might have been answered incorrectly. The self-check property is particularly useful in applications of dynamic algorithms as subroutines in algorithms solving static problems, that is, it enables static algorithms to exhibit Las Vegas guarantees instead of the Monte Carlo guarantees provided by the dynamic algorithm, as they can simply re-run the static algorithm with fresh randomness until the self-check passes. 

The properties of a dynamic algorithm amenable to a self-check behavior are as follows:
\begin{itemize}
    \item Once a mistake is made by the dynamic algorithm it should be detectable and any subsequent updates of the algorithm do not correct the mistake before it is detected.
    \item If the dynamic algorithm is stopped at any point in time, it should be able to still perform the self-check within the guaranteed running time of the algorithm.
\end{itemize}

In our algorithm, as long as the $c$-certificate maintained by our algorithm is correct, the $c$-edge-connectivity queries answered by our algorithm exhibit the same guarantees as the fully dynamic $c$-edge-connectivity algorithm running on the $c$-certificate $H$. Hence, we only need to detect potential mistakes in the process of maintaining the $c$-certificate $H$. Such mistakes only happen with probability $n^{-\Omega(1)}$. 

The $c$-certificate $H\subseteq G$ of $G$ that we maintain is such that it includes every edge
from $G$ that is not in a $c$-edge-connected component of $G$ (more specifically, not in a $c$-edge-connected component of the maintained sample $S\subset G$, which includes all edges not in $c$-edge-connected components of~$G$) and such that
every $c$-edge-connected component of $G$ is also a $c$-edge-connected component
in~$H$. A different equivalent formulation is that for every edge
$(u,v)$ of $G$, if $(u,v)$ is not in $H$, then $u$ and~$v$ are in the
same $c$-edge-connected component of $H$.

As we later show, the only possible error in the maintenance of the $c$-certificate is that an edge is missing from $H$ while its endpoints are not $c$-edge-connected in $S$. Therefore, to satisfy the properties above, any incorrectly missing edge in the $c$-certificate should remain missing until the error is detected. 
We show (in the proof of Theorem~\ref{thm-certificate}) that before each update to $H$ we can check whether the edge should have been part of the certificate but was omitted due to an error.
It is trivial to check for such mistakes during the execution of our algorithm, as follows. For any edge that is deleted from the graph we simply check whether the two endpoints of the deleted edge belong to distinct $c$-edge-connected components of $H$ and, if so, declare a potential error. For every edge that is inserted to the certificate, due to the updates following an edge deletion, we check whether the edge was supposed to be part of the certificate before the last edge deletion but was omitted due to an error.
In either case, our self-check flags an execution invalid only if there has been a mistake (which might or might not have affected $c$-edge-connectivity queries on the certificate), which happens with low probability.
As long as each vertex knows the ID of its $c$-edge-connected component in the $c$-certificate, the aforementioned check takes constant time to perform per edge deletion, and thus does not affect the overall running time of our algorithm.
Since our algorithm is Monte Carlo randomized and we only flag an execution invalid if a mistake in the maintenance of the certificate was detected, any single execution of our algorithm is flagged invalid with probability $n^{-\Omega(1)}$.

Notice that if the algorithm is terminated before all edges are deleted, we can simply iterate over the remaining edges and apply the aforementioned check for each remaining edge.

\section{Preliminaries}

The problem of dynamic connectivity consists in designing a data structure that
maintains a dynamic undirected graph and supports two operations: an
\emph{update} operation which modifies the maintained graph, and a
\emph{query} operation, which asks if two given vertices belong to the
same connected component of the current graph.

Dynamic connectivity comes in three variants, which differ in the allowed types of update operations.
The most general is the \emph{fully dynamic} connectivity problem, in which each update may either add or remove a single edge.
The two more restricted variants are \emph{incremental} connectivity -- each update adds a single edge, and \emph{decremental} connectivity -- each update removes a single edge.

\paragraph{Graphs.}
Throughout the paper we consider undirected graphs which may have parallel edges, but not self-loops.  Generally, for $G=(V,E)$, we denote by $n$ and $m$ the number of vertices and edges of $G$, respectively. When referring to other
graphs $H=(V,E')$, we write $|H|$ to denote $|E'|$. 

If $G'=(V', E')$ is a graph with $V'=V$ and $E'\subset E$ then $G'$ is a \emph{subgraph} of $G$, denoted $G'\subseteq G$.
If $W$ is a set of edges, then we denote by $G\setminus W$ the graph with vertex set $V$ and edge set $E'=E\setminus W$. We often use $G\setminus H$ to denote $G\setminus E(H)$. Finally, if $G=(V, E_G)$ and $H=(V, E_H)$ are graphs on the same set of vertices, then $G\cup H$ is the graph with vertex set $V$ and edge set $E_G\cup E_H$.

	Let $A, B\subset V$ be subsets of vertices of $G=(V, E)$. The set of edges from $A$ to $B$ in $G$ is denoted $E_G(A, B)$.
	For $A\subset V$, we denote by $\partial_G(A)=E_G(A, V\setminus A)$ the \emph{boundary} of $A$ in $G$.
\vspace{-3mm}
\paragraph{Cuts.}
    A \emph{cut} of $G=(V,E)$ is a partition of the vertices of $G$ into two non-empty sets $V_1, V_2$. We shall mostly identify the cut with the set of edges crossing the cut, $E(V_1, V_2)$. The number of such edges is the \emph{size} of the cut. 
    A cut of $G$ of size $c\in \N$ is called a \emph{$c$-cut} of $G$.
    A \emph{simple cut} of $G$ is a set of edges $S\subset E$ such that $G\setminus S$ has exactly one connected component more than $G$ and such that for every proper subset $S'\subsetneq S$, $G\setminus S'$ has the same connected components as $G$.

\paragraph{Edge connectivity.}
Let $c$ a be positive integer. Two distinct vertices $u, v$ of $G$ are \emph{$c$-edge-connected} if there exist $c$ pairwise edge-disjoint paths from $u$ to $v$. Being $c$-edge-connected is an equivalence relation on the vertices of $G$ and we call the corresponding equivalence classes the \emph{$c$-edge-connected classes}. The graph $G$ is \emph{$c$-edge-connected} if it contains no cut of size $<c$. Equivalently, a graph $G$ is $c$-edge-connected if every pair of distinct vertices of $G$ are $c$-edge-connected. It is worth noting that the graph consisting of a single vertex is $c$-edge-connected since it contains no cut.
The \emph{$c$-edge-connected components} of $G$ are the maximal induced $c$-edge-connected subgraphs of $G$, i.e., an induced subgraph $C$ of $G$ is a $c$-edge-connected component if it is $c$-edge-connected and there exists no intermediate subgraph $C\subsetneq C' \subset G$ such that $C'$ is $c$-edge-connected.
	
	For $c=1$, we have simpler terminology. We say that 1-edge-connected vertices are \emph{connected} and call the 1-edge-connected components of $G$  the \emph{connected components} or simply \emph{components} of~$G$.
	We also use $\mathcal{C}(G)$ to denote the set of connected components of $G$.

\paragraph{Fully dynamic $c$-edge-cut.}
The crucial ingredient in obtaining our general decremental algorithm
for arbitrary $c\geq 1$ is the \emph{fully-dynamic $c$-edge-cut} data structure, defined
as follows. Let $G$ be a graph. Then, the data structure maintains
any edge $e$ (if one exists) satisfying the following: $e$ belongs to some cut of size $<c$ of the connected component of $G$ containing $e$. For each $c\geq 1$, we denote by $T_c(n)$ the amortized update time bound of such a data structure
that holds whp.

For example, for $c=1$ we have $T_1(n)=O(1)$ since we do not have to maintain anything. For $c=2$, the data structure is required to maintain some \emph{bridge}
of $G$ and it is known that $T_2(n)=O((\log{n}\cdot \log\log{n})^2)$~\cite{HolmRT18}. For $c\geq 3$, in turn, we
have $T_c(n)=O(n^{1/2}\poly{(c)})$~\cite{Thorup07}.

\paragraph{Chernoff Bound.}
In our analysis we will occasionally need the classic Chernoff concentration bounds. We state a version here for convenience. 
\begin{theorem}[Chernoff Bound]\label{thm:chernoff}
	Let $X_1, \dots, X_n$ be independent random variables supported on $[0, 1]$ and denote by $\mu=\E{\sum_{i=1}^nX_i}$ the mean of their sum. For every $\delta>0$,
	\begin{align*}
		\forall \mu'\geq \mu\colon \Pr\left[ \sum_{i=1}^nX_i>(1+\delta) \mu' \right]&\leq e^{-\frac{\mu'\delta^2}{2+\delta}},\hspace{10mm}
		\forall \mu'\leq \mu\colon \Pr\left[ \sum_{i=1}^nX_i<(1-\delta) \mu' \right]&\leq e^{-\frac{\mu'\delta^2}{2}}.
	\end{align*}
\end{theorem}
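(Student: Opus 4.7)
The plan is to follow the standard Chernoff--Hoeffding argument via the exponential moment method, handling the subtlety that the bounds are stated with parameters $\mu' \ge \mu$ and $\mu' \le \mu$ rather than just $\mu' = \mu$.

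First I would prove the upper-tail bound. For any $t>0$, by Markov's inequality,
\[
\Pr\!\left[\sum_i X_i > (1+\delta)\mu'\right] = \Pr\!\left[e^{t\sum_i X_i} > e^{t(1+\delta)\mu'}\right] \le e^{-t(1+\delta)\mu'}\prod_{i=1}^n \E{e^{tX_i}}.
\]
Using convexity of $e^{tx}$ on $[0,1]$ gives $e^{tX_i} \le 1 + (e^t-1)X_i$, so $\E{e^{tX_i}} \le 1+(e^t-1)\E{X_i} \le \exp((e^t-1)\E{X_i})$, and multiplying yields $\prod_i \E{e^{tX_i}} \le \exp((e^t-1)\mu)$. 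Since $t>0$ gives $e^t-1>0$ and $\mu \le \mu'$, we can replace $\mu$ by $\mu'$ to obtain the clean bound $\exp(\mu'[(e^t-1)-t(1+\delta)])$. Choosing $t=\ln(1+\delta)$ minimizes the exponent to $\mu'[\delta-(1+\delta)\ln(1+\delta)]$, and then the elementary analytic inequality $(1+\delta)\ln(1+\delta)\ge \delta + \frac{\delta^2}{2+\delta}$ for $\delta>0$ (verified by comparing derivatives of both sides at $\delta=0$) produces the target bound $\exp(-\mu'\delta^2/(2+\delta))$.

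For the lower-tail bound I would run the same argument with $t<0$. Markov again yields
\[
\Pr\!\left[\sum_i X_i < (1-\delta)\mu'\right] \le e^{-t(1-\delta)\mu'}\exp((e^t-1)\mu),
\]
and now $e^t-1 < 0$ combined with $\mu \ge \mu'$ lets us upper bound $(e^t-1)\mu$ by $(e^t-1)\mu'$. Choosing $t = \ln(1-\delta)$ (assuming $\delta<1$; otherwise the bound is trivial since the probability is zero for $\delta\ge 1$ when $X_i\ge 0$) collapses the exponent to $\mu'[-\delta - (1-\delta)\ln(1-\delta)]$, and the analytic inequality $(1-\delta)\ln(1-\delta) \ge -\delta + \delta^2/2$ on $[0,1)$ then gives $\exp(-\mu'\delta^2/2)$.

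The only real work is justifying the two elementary inequalities $\delta-(1+\delta)\ln(1+\delta)\le -\delta^2/(2+\delta)$ and $-\delta-(1-\delta)\ln(1-\delta)\le -\delta^2/2$. Neither is an obstacle: both reduce to checking that the difference of the two sides is non-positive at $\delta=0$ and monotone in the correct direction, which follows by one differentiation. Everything else is bookkeeping around the inequalities $\mu\le\mu'$ or $\mu\ge\mu'$, which is the one place where this statement deviates from the most familiar textbook form.
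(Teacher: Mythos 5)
The paper states this Chernoff bound as a classical result and gives no proof of its own, so there is nothing internal to compare against; your argument is the standard exponential-moment (Chernoff--Hoeffding) proof, and the only non-textbook wrinkle --- replacing $\mu$ by $\mu'$ using the sign of $e^t-1$ together with $\mu'\ge\mu$ (upper tail) or $\mu'\le\mu$ (lower tail) --- is handled correctly, as is the trivial case $\delta\ge 1$ for the lower tail. The proof is correct; the only minor quibble is that verifying $(1+\delta)\ln(1+\delta)\ge\delta+\delta^2/(2+\delta)$ genuinely requires two differentiations (after one, you are left with $\ln(1+\delta)\ge(4\delta+\delta^2)/(2+\delta)^2$, which is not immediate), whereas the lower-tail inequality does follow from a single one.
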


\paragraph{Uniform edge sampling.}
Finally, for a graph $G=(V, E)$ and $p\in [0, 1]$ a real number, we define the \emph{uniform edge sampling} $G(p)$ as follows. Let $\{X_e\}_{e\in E}$ be independent Bernoulli random variables with parameter $p$ and $E'=\{e\in E\mid X_e=1\}$. Then $G(p)=(V, E')$.

\section{Some Useful Properties of $c$-Edge-Connected Components}\label{sec:useful-properties}
In this section, we present structural lemmas regarding the $c$-edge-connected components, in particular in graphs where a significant fraction of the vertices has degree at least $\Omega(c)$. As described in Section \ref{sec:certificate-high-level}, at each level of the recursive construction of our $c$-certificate, all $c$-edge-connected components of the sampled graph are contracted into single vertices in the recursive calls to the next levels. The following structural results show that the number of vertices between subsequent levels shrink by a constant factor, implying a $O(\log n)$ bound on the number of levels in our certificate.
We note that some proofs from Sections \ref{sec:useful-properties}, \ref{sec:supporting}, and \ref{sec:decremental-certificate} can be found in Section \ref{sec:omitted-proofs}.
\begin{restatable}[Bencz\'ur and Karger \cite{Benczur15}]{lemma}{benczur} \label{lem:many_edges_implies_con_comp}
	Let $c$ and $n$ be positive integers. Every graph on $n$ vertices with strictly more than $(c-1)(n-1)$ edges contains a non-trivial $c$-edge-connected component.
\end{restatable}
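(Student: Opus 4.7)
I would prove the contrapositive by induction on $n$: if $G$ has no non-trivial $c$-edge-connected component, then $|E(G)|\leq (c-1)(n-1)$. The base case $n=1$ is trivial, since $G$ then has no edges and $(c-1)(n-1)=0$.

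For the inductive step, assume $n\geq 2$. The first key observation is that the hypothesis "$G$ has no non-trivial $c$-edge-connected component" is equivalent to "$G$ contains no $c$-edge-connected induced subgraph on at least two vertices", because any $c$-edge-connected induced subgraph can be extended to a maximal one, which is by definition a $c$-edge-connected component. In particular, $G$ itself is not $c$-edge-connected (since $n\geq 2$), so by the definition recalled in the Preliminaries, $G$ contains a cut of size at most $c-1$. Let $(V_1,V_2)$ be such a cut, with $n_i=|V_i|\geq 1$ and $n_1+n_2=n$, and let $k\leq c-1$ be the number of edges crossing it.

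The second key observation is that the property "has no non-trivial $c$-edge-connected component" is inherited by induced subgraphs: any $c$-edge-connected induced subgraph $C$ of $G[V_i]$ is also a $c$-edge-connected induced subgraph of $G$, because deleting vertices outside $V_i$ cannot destroy any edge-disjoint path system entirely inside $V_i$, so a cut of size $<c$ in $G[C]$ would be a cut of size $<c$ in itself (the underlying graph is the same). Thus $G[V_1]$ and $G[V_2]$ both satisfy the inductive hypothesis, giving $|E(G[V_i])|\leq (c-1)(n_i-1)$.

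Summing, I obtain
\[
|E(G)| = |E(G[V_1])|+|E(G[V_2])|+k \leq (c-1)(n_1-1)+(c-1)(n_2-1)+(c-1) = (c-1)(n-1),
\]
completing the induction. The main subtlety — and it is really the only nontrivial point — is the downward-closure claim in the second observation, namely that the non-existence of non-trivial $c$-edge-connected components is preserved when passing to $G[V_i]$; everything else is immediate from the definitions and a standard recursive cut argument.
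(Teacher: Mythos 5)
Your proof is correct and follows essentially the same route as the paper's: both locate a cut of size at most $c-1$ (which must exist since $G$ is not itself $c$-edge-connected), remove it, and induct on the two sides, with your contrapositive-plus-summation being just a reformulation of the paper's direct induction via pigeonhole. The heredity observation you flag as the key subtlety (that induced subgraphs of $G[V_i]$ are induced subgraphs of $G$, so the no-nontrivial-component property passes down) is indeed the point both arguments implicitly rest on, and you verify it correctly.
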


\begin{restatable}{corollary}{ebce}\label{cor:edges_between_c_edge}
	Let $c$ be a positive integer and $G$ be a graph on $n$ vertices. Denote by $q_c$ the number of $c$-edge-connected components of $G$. Then the number of edges connecting distinct $c$-edge-connected components of $G$ is at most $(c-1)(q_c-1)$.
\end{restatable}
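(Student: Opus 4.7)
The plan is to reduce to Lemma~\ref{lem:many_edges_implies_con_comp} by contracting each $c$-edge-connected component of $G$ to a single vertex. Let $G'$ be the (multi)graph on $q_c$ vertices obtained in this way; by construction, the edges of $G'$ are exactly the edges of $G$ that connect distinct $c$-edge-connected components. So, writing $e_c$ for this edge count, we have $|E(G')|=e_c$, and we want to show $e_c \le (c-1)(q_c-1)$.

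I would argue by contradiction: assume $e_c > (c-1)(q_c-1)$. Then Lemma~\ref{lem:many_edges_implies_con_comp} applied to $G'$ yields a non-trivial $c$-edge-connected component $K'$ of $G'$, corresponding to some set $C_1,\ldots,C_k$ (with $k\ge 2$) of $c$-edge-connected components of $G$. I would then consider the subgraph $H$ of $G$ induced by $V(C_1)\cup\cdots\cup V(C_k)$ and claim it is $c$-edge-connected. This would contradict maximality of the $c$-edge-connected components $C_1,\ldots,C_k$ of $G$, finishing the proof.

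The heart of the argument, and the one step that needs a little care, is verifying that $H$ is $c$-edge-connected. I would do this by a cut analysis: for any partition of $V(H)$ into two non-empty sets $A,B$, split into two cases. If some $C_i$ has vertices in both $A$ and $B$, then the $A\text{--}B$ cut restricted to $C_i$ already has at least $c$ edges because $C_i$ is $c$-edge-connected. Otherwise each $C_i$ lies entirely on one side, and the cut projects down to a non-trivial cut of $K'$ in $G'$, which again has at least $c$ edges because $K'$ is $c$-edge-connected in $G'$. In either case the $A\text{--}B$ cut in $H$ has size at least $c$, so $H$ is $c$-edge-connected.

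I expect the main conceptual obstacle to be this cut-lifting step, since one might be tempted to try to lift edge-disjoint paths from $G'$ back through contracted components, which requires an extra Menger/flow argument inside each $C_i$. Working on the cut side instead sidesteps that entirely and uses only the definition of $c$-edge-connectivity. Everything else is bookkeeping: confirming that the contraction has exactly $q_c$ vertices and $e_c$ edges, and invoking Lemma~\ref{lem:many_edges_implies_con_comp}.
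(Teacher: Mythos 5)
Your proposal is correct and follows essentially the same route as the paper: contract the $c$-edge-connected components to get $G'$ on $q_c$ vertices and apply Lemma~\ref{lem:many_edges_implies_con_comp}. The only difference is that the paper simply asserts that $G'$ has no non-trivial $c$-edge-connected component, whereas you supply the justification via the cut-lifting argument (any cut of the union of the lifted components either crosses some $C_i$ or projects to a cut of $K'$ in $G'$), which is a valid and indeed the standard way to fill in that step.
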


Another central lemma is the following, stating that if a graph on $n$ vertices has at least $3/4n$  vertices with sufficiently high degree, the number of $c$-edge-connected components is at most $5/6n$.

\begin{lemma}\label{lem:min_deg_yields_few_con_comp}
    Let $G=(V, E)$ be a graph on $n$ vertices such that
    at least $3n/4$ of its vertices have degree at least $4c$. The number of $c$-edge-connected components of $G$ is at most $5n/6$.
\end{lemma}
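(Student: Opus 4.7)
The plan is to partition the vertices according to the $c$-edge-connected components into singletons and non-singletons, and then to bound the number of singletons by double-counting edges incident to high-degree singletons against the cut-edge estimate from \Cref{cor:edges_between_c_edge}.

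Let $q$ denote the number of $c$-edge-connected components of $G$ and let $S$ be the set of those components that are singletons. Every non-singleton component contains at least $2$ vertices, so
\[
n \;\geq\; |S| + 2(q-|S|),
\]
which rearranges to $q \leq (n+|S|)/2$. Thus it suffices to show $|S| \leq 2n/3$.

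To this end, split $S = S_{\mathrm{heavy}} \cupdot S_{\mathrm{light}}$ according to whether the unique vertex in the singleton has degree $\geq 4c$ or $< 4c$. Since the hypothesis gives at most $n/4$ vertices of degree $<4c$, we have $|S_{\mathrm{light}}| \leq n/4$. For heavy singletons the key observation is that \emph{every} edge incident to such a singleton goes between two distinct $c$-edge-connected components, since the singleton itself carries no internal edges. Summing the degrees of heavy singletons therefore counts each such between-component edge at most twice, so by \Cref{cor:edges_between_c_edge},
\[
4c\,|S_{\mathrm{heavy}}| \;\leq\; \sum_{v \in S_{\mathrm{heavy}}} \deg(v) \;\leq\; 2(c-1)(q-1) \;\leq\; 2cq,
\]
giving $|S_{\mathrm{heavy}}| \leq q/2$. (For $c=1$ the middle bound forces $|S_{\mathrm{heavy}}|=0$, which is anyway consistent.)

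Combining the two bounds yields $|S| \leq q/2 + n/4$, so
\[
q \;\leq\; \frac{n + |S|}{2} \;\leq\; \frac{n + q/2 + n/4}{2} \;=\; \frac{5n}{8} + \frac{q}{4},
\]
and rearranging gives $q \leq 5n/6$, as required. The argument is mostly a clean double-counting; the only mild subtlety is remembering that an edge between two heavy singletons is counted twice in the degree sum, which is exactly the constant that makes the $2(c-1)(q-1)$ bound of \Cref{cor:edges_between_c_edge} tight enough.
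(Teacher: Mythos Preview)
Your proof is correct and follows essentially the same approach as the paper: both arguments bound the number of singleton components by counting edges incident to high-degree singletons against the $(c-1)(q-1)$ bound from \Cref{cor:edges_between_c_edge}, and then use $q\le (n+|S|)/2$. The only cosmetic difference is that the paper eliminates $q$ first to get $|S|\le 2n/3$ directly, whereas you bound $|S_{\mathrm{heavy}}|\le q/2$ and then solve a linear inequality in $q$; the two computations are algebraically equivalent.
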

\begin{proof}
    Denote by $q_c$ the number of $c$-edge-connected components of $G$. Let $V_0\subset V$ be the set of vertices of $G$ that are trivial $c$-edge-connected components. Then $q_c\leq (n-|V_0|)/2 + |V_0| = (n+\abs{V_0})/2$. Furthermore, every edge incident to a vertex of $V_0$ connects distinct $c$-edge-connected components of~$G$.
    Since there are at most $n/4$ vertices with degrees less than $4c$, at least $|V_0|-n/4$ vertices in $V_0$
    have degree at least $4c$.
    Hence, there are at least $2c\cdot (|V_0|-n/4)$ edges
    incident to vertices in $V_0$.
    By \cref{cor:edges_between_c_edge}, we have
    $c\cdot (n+\abs{V_0})/2\geq 2c(\abs{V_0}-n/4)$,
    which implies $\abs{V_0}\leq 2n/3$. The conclusion follows since $q_c\leq (n+|V_0|)/2\leq 5n/6$.
\end{proof}

\section{The new $c$-certificate}\label{sec:certificate}
In this section we describe our new $c$-certificate that is instrumental in obtaining the near-optimal decremental connectivity algorithm.
A $c$-certificate $H$ of a graph $G$ allows us to answer queries about $c$-edge-connected components and $c$-edge-connected classes of $G$.

\begin{definition}[$c$-certificate]
	Let $G$ be a graph and $c\in \N$. A \emph{$c$-certificate} for $G$ is a subgraph $H\subseteq G$ such that the $c$-edge-connected components and classes of $G$ are preserved in~$H$.
\end{definition}

Let $\delta >c$ and $\ell\geq 1$ be integers to be set later. The certificate is defined based on $\ell+1$ levels of graphs $H_i,G_i\subseteq G$ for $i=0,\ldots,\ell$.

The first step is to sample graphs $H_0^0,H_1^0,\ldots,H_\ell^0$ that
constitute the basis for graphs $H_0,\ldots,H_\ell$.
Let $p\in (0,1)$ be a real number to be fixed later.
The sampled subgraphs satisfy $(V,\emptyset)=H_0^0\subseteq H_1^0\subseteq \dots\subseteq H_\ell^0$.
Specifically, for each $i=1,\ldots,\ell$, $H_i^\ell$ is constructed as follows.
Let $s=\lceil pm\rceil$ and suppose $E=\{e_1,\ldots,e_m\}$.
Let $r_i:\{1,\ldots,s\}\to \{1,\ldots,m\}$ be a \emph{pairwise independent random number generator}, or, in other words,
a pairwise independent hash function.
Then, we set:
\begin{equation*}
H_i^0:=H_{i-1}^0\cup \left(V,\left\{e_{r_i(1)},e_{r_i(2)},\ldots,e_{r_i(s)}\right\}\right).
\end{equation*}
However, we stress that for each level $i$, we require the random generator to be fully independent from the random generators at previous levels $1,\ldots,i-1$.
It is well known~\cite{CarterW79} that a pairwise independent random number generator can be implemented using $\Theta(\log{n})$ random bits
so that it generates numbers in constant time in the word RAM model. As a result, if $\Theta(\ell\log{n})$ random bits are provided, each $H_i^0$ can be constructed in $O(si)=O(mpi)$ time and has $O(mpi)$ edges.

Now, the graphs $H_0,G_0,H_1,G_1,\ldots,H_\ell,G_\ell$ are defined inductively.
Set $G_{-1}=G$. Then for $i=0,\ldots,\ell$ the graphs $H_i,G_i$ are obtained as follows.
First, the graph $H_i$ is obtained from $H_i^0\cap G_{i-1}$
by repeatedly removing all the cuts of size less than $c$. In other words, $H_i$ equals the $c$-edge-connected components of $H_i^0\cap G_{i-1}$.
Afterwards, the graph $G_i$ is in turn obtained from $G_{i-1}$ as follows. While for some $c$-edge-connected component $C$ of $H_i$, we have
$|\partial_{G_i}(C)|<\delta$, the edges of the boundary
$\partial_{G_i}(C)$ are removed from both $H_i$ and $G_i$.
Equivalently, one could obtain $G_i$ by first contracting all the $c$-edge-connected components of $H_i$ in the initial $G_i$, then repeatedly removing edges incident to vertices of degree $< \delta$ in the contracted graph, and finally undoing all the contractions.

By the construction, the graphs $H_1,\ldots,H_\ell$ and $G_1,\ldots,G_\ell$ satisfy the following properties:
\begin{enumerate}[label=(\arabic*)]
    \item Every connected component of $H_i$ is $c$-edge-connected.
    \item $H_i\subseteq G_i$ and $G_{i+1}\subseteq G_i$ for all $i\geq 0$.
    \item Each connected component $C$ of $H_i$ satisfies either $\partial_{G_i}(C)=\emptyset$ or
    $|\partial_{G_i}(C)|\geq \delta$.
\end{enumerate}
Moreover, we have the following property.
\begin{lemma}\label{l:incl}
    For any $i=0,\ldots,\ell-1$, $H_i\subseteq H_{i+1}$.
\end{lemma}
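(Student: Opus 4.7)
The plan is to pick an arbitrary edge $e\in H_i$ and chase it through the inductive definitions until it lands in $H_{i+1}$. The argument splits naturally into three observations, the middle one being the real content.

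First, from $H_i\subseteq H_i^0\cap G_{i-1}$ and the monotonicity $H_0^0\subseteq \cdots\subseteq H_\ell^0$ we immediately get $e\in H_i^0\subseteq H_{i+1}^0$, and also $e\in G_{i-1}$.

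The crucial step is to verify that $e\in G_i$. Since $H_i$ is constructed as the union of the $c$-edge-connected components of $H_i^0\cap G_{i-1}$, both endpoints of $e$ lie in a common $c$-edge-connected component $C$ of $H_i$. In particular $e$ is not a boundary edge of any component of $H_i$: it cannot lie in $\partial_{G_{i-1}}(C)$ because both endpoints are inside $C$, and it cannot lie in $\partial_{G_{i-1}}(C')$ for any other component $C'$ since neither endpoint is outside $C$. The transition from $G_{i-1}$ to $G_i$ only removes edges that occur as boundary edges of $c$-edge-connected components of $H_i$ with small boundary (equivalently, edges incident to vertices of degree $<\delta$ after contracting the components of $H_i$ in $G_{i-1}$). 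Hence $e$ survives the pruning, and combined with $e\in G_{i-1}$ this gives $e\in G_i$.

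Combining the two observations, $e\in H_{i+1}^0\cap G_i$. Applying the same "interior edges are not boundary edges" argument to every edge of $C$, all of $C$ lies in $H_{i+1}^0\cap G_i$. Since $C$ is $c$-edge-connected by construction, the two endpoints of $e$ remain $c$-edge-connected in $H_{i+1}^0\cap G_i$, hence they lie in a common $c$-edge-connected component of that graph. By the definition of $H_{i+1}$ as the union of the $c$-edge-connected components of $H_{i+1}^0\cap G_i$, this forces $e\in H_{i+1}$, as desired.

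The only place where any care is needed is the second step, confirming that no edge interior to a component of $H_i$ is accidentally removed from $G_{i-1}$ when forming $G_i$; this is exactly where it matters that the pruning procedure only deletes boundary edges of components with boundary size $<\delta$. The rest of the proof is just unpacking the definitions of $H_i^0$, $H_i$, and $G_i$.
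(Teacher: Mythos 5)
Your proof is correct and follows essentially the same route as the paper's: establish $H_i\subseteq H_{i+1}^0\cap G_i$ (you additionally verify $H_i\subseteq G_i$ from scratch, which the paper records as property~(2) of the construction) and then argue that no edge of a $c$-edge-connected component of $H_i$ is removed when small cuts are deleted from the supergraph. One phrasing to tighten: for $c\geq 3$, the endpoints of $e$ being pairwise $c$-edge-connected in $H_{i+1}^0\cap G_i$ does not by itself place them in a common $c$-edge-connected \emph{component} (that implication fails for classes vs.\ components); the valid inference, which your argument has already set up, is that the whole $c$-edge-connected subgraph $C$ lies inside $H_{i+1}^0\cap G_i$ and hence inside a single $c$-edge-connected component of it, so none of its edges can lie on a cut of size $<c$ that gets removed.
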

\begin{proof}
First of all, note that $H_i\subseteq H_{i+1}^0\cap G_i$
since $H_i\subseteq G_i$ and $H_i\subseteq H_i^0\subseteq H_{i+1}^0$.
Moreover, each component of $H_i$ is $c$-edge-connected
so it is contained in some $c$-edge-connected component of any supergraph of $H_i$, in particular $H_{i+1}^0\cap G_i$. As a result, when obtaining $H_{i+1}$ from $H_{i+1}^0\cap G_i$ by taking the $c$-edge-connected components, we never remove edges of $H_i$.
\end{proof}
Figure~\ref{fig:illustration_for_abstract_alg} shows the inclusion relations between the graphs $G_i,H_i$
(property~(2) and Lemma~\ref{l:incl}).

\begin{figure}[ht]
    \begin{center}
        \begin{tikzcd}[remember picture]
            H_\ell &\subset & G_\ell\\
            H_{\ell-1} &\subseteq & G_{\ell-1}\\
            \vdots &&\vdots\\
            H_{1} &\subseteq & G_{1}\\
            H_{0} &\subseteq & G_{0}\\
            &&G
        \end{tikzcd}
        \begin{tikzpicture}[overlay,remember picture]
            \path (\tikzcdmatrixname-1-1) to node[midway,sloped]{$\supseteq$}
            (\tikzcdmatrixname-2-1);
            \path (\tikzcdmatrixname-1-3) to node[midway,sloped]{$\subseteq$}
            (\tikzcdmatrixname-2-3);
            
            \path (\tikzcdmatrixname-2-1) to node[midway,sloped]{$\supseteq$}
            (\tikzcdmatrixname-3-1);
            \path (\tikzcdmatrixname-2-3) to node[midway,sloped]{$\subseteq$}
            (\tikzcdmatrixname-3-3);
            
            \path (\tikzcdmatrixname-3-1) to node[midway,sloped]{$\supseteq$}
            (\tikzcdmatrixname-4-1);
            \path (\tikzcdmatrixname-3-3) to node[midway,sloped]{$\subseteq$}
            (\tikzcdmatrixname-4-3);
            
            \path (\tikzcdmatrixname-4-1) to node[midway,sloped]{$\supseteq$}
            (\tikzcdmatrixname-5-1);
            \path (\tikzcdmatrixname-4-3) to node[midway,sloped]{$\subseteq$}
            (\tikzcdmatrixname-5-3);

            \path (\tikzcdmatrixname-5-3) to node[midway,sloped]{$\subseteq$}
            (\tikzcdmatrixname-6-3);
        \end{tikzpicture} 
    \end{center}
    \caption{Illustration for \cref{alg:certificate_sketch}}\label{fig:illustration_for_abstract_alg}
\end{figure}

\begin{lemma}\label{c-certificate}
    There exists $\ell=O(\log{n})$ such that if $p\ell<1$ and $p\delta \geq 32c$, then,
    with high probability,
    the connected components of $G_\ell$
    are $c$-edge-connected and equal
    to the connected components of $H_\ell$.
\end{lemma}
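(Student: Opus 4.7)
The plan is to reduce everything to controlling the evolution of $\tilde n_i$, defined as the number of connected components of $H_i$ with non-empty boundary in $G_i$. By Lemma~\ref{l:incl} together with $G_{i+1}\subseteq G_i$, every component of $H_{i+1}$ is a union of components of $H_i$ and every boundary only shrinks, so $\tilde n_i$ is monotonically non-increasing. If $\tilde n_\ell=0$ then no edge of $G_\ell$ crosses between components of $H_\ell$, and combined with $H_\ell\subseteq G_\ell$ this forces the connected components of $G_\ell$ to coincide with those of $H_\ell$, each of which is $c$-edge-connected by property~(1). Hence the lemma reduces to showing $\tilde n_\ell=0$ with high probability.

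The heart of the proof will be a per-level contraction $\EC{\tilde n_{i+1}}{\tilde G_i}\leq \beta\tilde n_i$ for some absolute constant $\beta<1$. Consider the contracted multigraph $\tilde G_i$ obtained from $G_i$ by contracting each component of $H_i$ and dropping isolated super-nodes; by property~(3) each of its $\tilde n_i$ vertices has degree at least $\delta$. Since every super-node is already $c$-edge-connected within $H_i\subseteq H_{i+1}$, no $c$-edge-connected component of $H_{i+1}^0\cap G_i$ can split a super-node, and therefore $\tilde n_{i+1}$ is bounded above by the number of $c$-edge-connected components of the sampled contracted multigraph $\tilde S_{i+1}:=H_{i+1}^0\cap E(\tilde G_i)$. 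To invoke Lemma~\ref{lem:min_deg_yields_few_con_comp} I will show that with constant probability at least $3\tilde n_i/4$ super-nodes have degree $\geq 4c$ in $\tilde S_{i+1}$. Lower bounding the sample degree at $v$ by just the fresh sample produced by $r_{i+1}$ (pairwise independent and independent of all prior levels), I split the boundary of $v$ into its intersection with $H_i^0$ (contributing deterministically) and its complement: either the deterministic part already contributes $\geq 4c$, or the complement has size at least $\delta/2$. In the latter case, $p\ell<1$ implies that each remaining boundary edge is newly sampled at level $i+1$ with probability $\Omega(p)$, giving expected fresh degree $\Omega(p\delta)=\Omega(c)$ via $p\delta\geq 32c$. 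A Chebyshev bound, for which the pairwise independence of $r_{i+1}$ is enough to control the variance of the sum of indicators, then yields $\Pr[v\text{ is bad}]\leq 1/8$ after choosing appropriate constants. Markov's inequality on the total bad count shows that with probability $\geq 1/2$ at most $\tilde n_i/4$ super-nodes are bad, and Lemma~\ref{lem:min_deg_yields_few_con_comp} then gives $\tilde n_{i+1}\leq 5\tilde n_i/6$ on that event. Combined with the deterministic bound $\tilde n_{i+1}\leq \tilde n_i$ on the complementary event, this produces $\EC{\tilde n_{i+1}}{\tilde G_i}\leq \tfrac{11}{12}\tilde n_i$.

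Iterating, $\E{\tilde n_\ell}\leq (11/12)^\ell n$, which becomes at most $n^{-\gamma}$ once $\ell=\Theta(\log n)$ is taken large enough. Since $\tilde n_\ell$ is a non-negative integer, Markov's inequality then gives $\Pr[\tilde n_\ell=0]\geq 1-n^{-\gamma}$, completing the proof.

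The main obstacle will be the concentration step. Pairwise independence of $r_{i+1}$ restricts us to second-moment tools, so the probability that a single super-node is bad can only be pushed down to $O(1/c)$ rather than inverse-polynomial in $n$; this is precisely why we must settle for a per-level contraction by a constant factor and therefore take $\ell=\Theta(\log n)$ levels. The two hypotheses enter this step in complementary ways: $p\delta\geq 32c$ is what drives the Chebyshev bound below a fixed constant, while $p\ell<1$ keeps the fresh sampling rate at level $i+1$ at $\Omega(p)$ by lower-bounding the chance that an edge has not yet been picked at earlier levels.
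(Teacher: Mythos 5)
Your proposal follows essentially the same route as the paper: contract the components of $H_i$, use property~(3) to get minimum degree $\delta$ in the contracted graph, show that a constant fraction of super-nodes get sampled degree $\geq 4c$ at the next level, invoke Lemma~\ref{lem:min_deg_yields_few_con_comp} for a $5/6$ contraction, and iterate over $\ell=\Theta(\log n)$ levels. Your final aggregation (a conditional-expectation recursion $\EC{\tilde n_{i+1}}{\cdot}\leq \tfrac{11}{12}\tilde n_i$ followed by Markov on $\E{\tilde n_\ell}$) differs from the paper's (constant success probability per level, then Chernoff over the number of successful levels), but both are valid and yours is arguably tidier.

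The one step that does not go through as written is the concentration argument. You propose to apply Chebyshev to the sum over fresh boundary edges $e$ of the indicators $W_e=[e$ is sampled by $r_{i+1}]$, asserting that pairwise independence of $r_{i+1}$ controls the variance of this sum. It does not: the random variables that are pairwise independent are the per-slot indicators $[r_{i+1}(j)=k]$ across slots $j$, whereas each $W_e$ depends on all $s$ slots simultaneously, and a direct covariance bound for the $W_e$'s under mere pairwise independence of the slots yields $\mathrm{Cov}(W_e,W_{e'})=O(s^2/m^2)$, which summed over $d^2$ pairs is of the same order as the squared mean, so Chebyshev gives nothing. The fix is exactly the paper's device: count sampling \emph{slots} hitting the vertex, $Y_{v'}=\sum_j[e_{r_{i+1}(j)}\text{ incident to }v']$ (a sum of pairwise independent indicators, so Chebyshev applies), separately bound the number of collisions $Z_{v'}$ by Markov using $\E{Z_{v'}}=O(p^2 d)$, and use that the number of distinct sampled edges is at least $Y_{v'}-Z_{v'}$. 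Your extra case split on $|\partial\cap H_i^0|$ is harmless but unnecessary once you lower-bound by the fresh sample this way. With that repair, the rest of your argument is sound.
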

\begin{proof}
Denote by $G_i'$ the graph $G_i$ with the components of $H_i$ contracted. By property~(3), every vertex in $G_i'$ has degree either 0 or at least $\delta$.
Let $k_i$ be the number of positive (in fact, at least $\delta$) degree
vertices of $G_i'$.

Recall that $H_{i+1}$ contains precisely the edges inside the $c$-edge-connected components of ${H^0_{i+1}\cap G_i}$.
Moreover, let $H_{i+1}'\subset G_i'$ be the graph $H^0_{i+1}\cap G_i$ with the ($c$-edge-connected) components of $H_i$ contracted.
Consider some vertex $v'$ of $G_i'$.
If $v'$ has degree $0$ in $G_i'$ it does so
as well in $H_{i+1}'$.
Otherwise, by property~(3),
$v'$ has degree at least $\delta$
in $G_i'$. Recall that the edges of $H^0_{i+1}\setminus H^0_i$
are chosen independently of the graphs $G_i'$ and $H_i$ (which only depend on the randomness from levels $0,\ldots,i$)
via sampling with replacement $s=\lceil pm\rceil$ edges using a pairwise independent
random number generator $r_{i+1}$.
Consider a random variable $X_{v'}$ equal to the degree of $v'$ in $H_{i+1}'$. We now prove that $v'$ has degree less than $4c$, i.e., $X_{v'}<4c$ with probability no more than $\frac{3}{16}$.

To this end, we introduce two more random variables $Y_{v'},Z_{v'}$:
\begin{itemize}
    \item $Y_{v'}$ equals the number of times an edge incident to $v'$ is sampled when sampling $H_{i+1}^0\setminus H_i^0$:
    \begin{equation*}
        Y_{v'}=\sum_{1\leq j\leq s}[e_{r_{i+1}(j)}\text{ is incident to }v'\text{ in }G_i']
    \end{equation*}
    \item $Z_{v'}$ equals the number of collisions incident to $v'$ during sampling $H_{i+1}^0\setminus H_i^0$, i.e.,
    \begin{equation*}Z_{v'}=\sum_{1\leq j<k\leq s} [r_{i+1}(j)=r_{i+1}(k)\text{ and } e_{r_{i+1}(j)}\text{ is incident to }v'\text{ in }G_i']
    \end{equation*}
\end{itemize}
Let $d=\deg_{G_i'}(v')\geq \delta$.
Since $Y_{v'}$ is a sum of $s$ pairwise independent indicator variables with mean $d/m$, we have:
\begin{align*}
\E{Y_{v'}}&=s\cdot d/m=\lceil pm\rceil\cdot d/m\geq pd.\\
\Var{Y_{v'}}&=s\cdot (d/m)\cdot (1-d/m)\leq \lceil pm\rceil \cdot (d/m)\leq 2pm\cdot (d/m)=2pd.
\end{align*}
By $pd\geq p\delta\geq 32c\geq 32$ and Chebyshev's inequality
$\Pr[Y_{v'}\leq (1-\eps)\mu]\leq\frac{\Var{Y_{v'}}}{\eps^2(\E{Y_{v'}})^2}$ we have:
\begin{equation}\label{eq:cheb}
\Pr[Y_{v'}\leq pd/4]\leq \Pr[Y_{v'}\leq \E{Y_{v'}}/4]\leq \frac{2pd}{\frac{9}{16}(dp)^2}\leq \frac{32}{9pd}\leq \frac{1}{9}<\frac{1}{8}.
\end{equation}
By pairwise independence we also have:
\begin{equation*}
    \E{Z_{v'}}=\sum_{1<j<k\leq s}\frac{1}{m}\cdot\frac{d}{m}\leq \frac{s^2}{2}\cdot \frac{d}{m^2}\leq \frac{4p^2m^2}{2}\cdot \frac{d}{m^2}=2p^2d.
\end{equation*}
Since we are aiming at proving the lemma for $\ell=\gamma\cdot \log{n}$ for a constant $\gamma$ of our choice, we can without loss of generality require that $p\ell<1$ implies $p\leq1/256$.
Hence, using Markov's inequality we obtain:
\begin{equation*}
    \Pr[Z_{v'}\geq pd/8]\leq \frac{\E{Z_{v'}}}{pd/8}\leq 16p\leq \frac{1}{16}.
\end{equation*}
Note that we have $X_{v'}\geq Y_{v'}-Z_{v'}$.
So, $X_{v'}\leq pd/8$ implies $Y_{v'}-Z_{v'}\leq pd/8$.
This in turn implies that either $Y_{v'}\leq pd/4$ or 
$Z_{v'}\geq pd/8$. As a result, via the union bound
we get:
\begin{equation*}
    \Pr[X_{v'}\leq pd/8]\leq \Pr[Y_{v'}-Z_{v'}\leq pd/8]\leq \Pr[Y_{v'}\leq pd/4]+\Pr[Z_{v'}\geq pd/8]\leq \frac{1}{8}+\frac{1}{16}=\frac{3}{16}.
\end{equation*}
By $pd\geq 32c$ we have that $X_{v'}<4c$ implies $X_{v'}\leq pd/8$, so we obtain
$\Pr[X_{v'}<4c]\leq \frac{3}{16}$ as desired.

Now let us consider the probability $q$ that more than a fraction of 1/4 of such $n'$ vertices $v'$ (with degree at least $\delta$ in $G_i'$) have degree less than $4c$ in $H_{i+1}'$.
By~\eqref{eq:cheb}, the expected number of such vertices is clearly
no more than $\frac{3n'}{16}$.
As a result, by Markov's inequality,
    $q\leq \frac{3n'}{16}\cdot\frac{4}{n'}=\frac{3}{4}$.
In other words, with probability at least $1-q\geq 1/4$, at least a fraction of $3/4$ of positive-degree vertices $v'$ of $G_i'$ will have degree at least $4c$ in $H_{i+1}'$.

Observe that since $G_i\supseteq G_{i+1}\supseteq\ldots G_\ell\supseteq H_\ell\supseteq\ldots\supseteq H_i$,
the isolated vertices of $G_i'$ (which are obviously also isolated in  $H_{i+1}'$) correspond
to $c$-edge-connected components of $H_i$ that are also
$c$-edge-connected components of $H_{i+1},H_{i+2},\ldots,H_\ell,G_i,\ldots,G_\ell$.
Since $H_i\subseteq H_{i+1}$,
by Lemma~\ref{lem:min_deg_yields_few_con_comp},
with probability at least $1/4$,
the $k_i$ non-isolated vertices of $G_i'$ are ``merged''
into at most $5k_i/6$ $c$-edge-connected components of $H_{i+1}$. Observe that those are the only $c$-edge-connected
components of $H_{i+1}$ that can give rise to positive-degree
vertices of $G_{i+1}'$.
Consequently, with probability $\geq 1/4$ we have $k_{i+1}\leq 5k_i/6$.
This proves that $k_i$ is very likely to decrease
geometrically with $i$.
More concretely, the quantity $k_i$ is $0$ for $i=\ell=z\cdot\log{n}$ (where $z$ is a sufficiently large constant) with high probability via the Chernoff bound.

Note that the lemma follows by $k_\ell=0$, the fact that $H_\ell\subseteq G_\ell$, and
property (1) for $i=\ell$.
\end{proof}

Finally, we obtain a $c$-certificate by taking a union
of $H_\ell$ and $G\setminus G_\ell$. Roughly speaking, since $H_\ell$ sparsifies the $c$-edge-connected components of $G_\ell$, replacing the subgraph $G_\ell$
with $H_\ell$ preserves both the $c$-edge-components and $c$-edge-classes. The formal proof can be found in the Appendix.

\begin{restatable}{lemma}{hldcert}\label{l:hldcert}
     Let $D:=G\setminus G_\ell$. $H_\ell\cup D$ constitutes a $c$-certificate for $G$.
\end{restatable}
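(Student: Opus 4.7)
The plan is to leverage the Menger-style (min-cut) characterization of $c$-edge-connectivity together with the conclusion of Lemma~\ref{c-certificate}. Conditioning on the high-probability event given by that lemma, the connected components of $H_\ell$ coincide with those of $G_\ell$ and each such component is $c$-edge-connected in $H_\ell$. In particular, every edge in $G\setminus (H_\ell\cup D)=G_\ell\setminus H_\ell$ has both endpoints lying in a common connected component of $H_\ell$, which is itself a $c$-edge-connected subgraph of $H_\ell\cup D$. This is the single structural fact on which the entire argument rests.

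To show that the $c$-edge-connected classes of $G$ are preserved, it suffices (since $H_\ell\cup D\subseteq G$) to verify that any pair $x,y$ that is $c$-edge-connected in $G$ remains so in $H_\ell\cup D$. I would fix an arbitrary cut $(A,V\setminus A)$ separating $x$ and $y$ and split into two cases. If no edge of $G_\ell\setminus H_\ell$ crosses the cut, then $E_G(A,V\setminus A)=E_{H_\ell\cup D}(A,V\setminus A)$, so the cut already has at least $c$ edges in $H_\ell\cup D$. Otherwise some $e=(u,v)\in G_\ell\setminus H_\ell$ crosses the cut; then the connected component $K$ of $H_\ell$ containing $u$ and $v$ is properly split by $(A,V\setminus A)$, and since $K$ is $c$-edge-connected, at least $c$ edges of $K$ cross this cut. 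Because $K\subseteq H_\ell\subseteq H_\ell\cup D$, these contribute the required $c$ crossing edges.

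For $c$-edge-connected components, I would first observe that every $c$-edge-connected subgraph of $H_\ell\cup D$ is also one of $G$, so each component of $H_\ell\cup D$ is contained in some component of $G$. Conversely, for every $c$-edge-connected component $C$ of $G$, I would show that $(H_\ell\cup D)[V(C)]$ is $c$-edge-connected; maximality then forces this induced subgraph to coincide with a component of $H_\ell\cup D$, and the two partitions of $V$ agree. The $c$-edge-connectivity of $(H_\ell\cup D)[V(C)]$ is proved by repeating the cut dichotomy above inside $V(C)$: the only new ingredient is the observation that whenever an edge of $G_\ell\setminus H_\ell$ with endpoints in $V(C)$ crosses a cut of $V(C)$, the $H_\ell$-component $K$ that hosts it must satisfy $V(K)\subseteq V(C)$, since $K$ is a $c$-edge-connected subgraph of $G$ meeting $V(C)$ and therefore lives inside a single $c$-edge-connected component of $G$.

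The main substantive step is the cut case analysis: tracking how each edge of $G\setminus (H_\ell\cup D)$ ``pays for itself'' by forcing its ambient $H_\ell$-component to be split, thereby contributing at least $c$ edges of $H_\ell$ that cross the same cut. I do not foresee a real obstacle beyond being careful that the witnessing component $K$ is contained in the ambient vertex set used for the component argument; once this replacement principle is stated, both the class- and component-preservation claims follow uniformly.
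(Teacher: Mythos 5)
Your proposal is correct and rests on the same key fact as the paper's proof: by Lemma~\ref{c-certificate}, every discarded edge (i.e., every edge of $G_\ell\setminus H_\ell$) has both endpoints inside a single $c$-edge-connected component of $H_\ell$, so no small cut of $H_\ell\cup D$ can separate the endpoints of a discarded edge. The paper packages this as a contradiction/path-replacement argument while you phrase it as a direct cut-counting dichotomy (plus the containment $V(K)\subseteq V(C)$ for the component case, which the paper handles implicitly), but the substance is the same.
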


We now show that the basis of our construction, i.e., pairwise independent sampling at $O(\log{n})$ levels, yields an interesting low-randomness alternative to Karger's result~\cite{Karger99} saying that if a graph $G$ is $c'$-edge-connected graph,
where $c'=\Omega((c+\log{n})/p)$,
then $G(p)$ is $c$-edge-connected with high probability (depending on the constant hidden in the $\Omega$ notation).
Roughly speaking, Karger's proof
applies a Chernoff bound to an \emph{exponential} number of cuts in $G$ and therefore requires sampling with full independence, i.e., $\Theta(m)$ random bits.
We show that the graph $H_0^\ell$ has a similar property, but requires only \emph{polylogarithmic} number of random bits: pairwise independence requires $O(\log{n})$ bits, and there are $O(\log{n})$ sampling levels.

\begin{lemma}\label{l:karger-alt}
    Let $\ell=\Theta(\log{n})$ be as in Lemma~\ref{c-certificate}.
    Let $p'\in (0,1)$.
    Suppose $G$ is $c'$-edge-connected, where
    $c'=\Omega(c\log{n}/p')$ and the constant hidden in the $\Omega$ notation is sufficiently large.
    Then the sampled graph $H_\ell^0$, defined as before, has $O(mp')$ edges and is $c$-edge-connected with high probability.
\end{lemma}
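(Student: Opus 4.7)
The plan is to apply Lemma~\ref{c-certificate} to $G$ itself with parameters chosen so that the small-cut threshold $\delta$ lies strictly below the edge-connectivity $c'$ of $G$. This will make the recursive construction from Section~\ref{sec:certificate} vacuous (no edges are ever removed), so that the $c$-edge-connectivity of the subgraph $H_\ell$ guaranteed by Lemma~\ref{c-certificate} can be lifted directly to its supergraph $H_\ell^0$.

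Concretely, I would choose the sampling rate $p := p'/(2\ell)$ and the threshold $\delta := \lceil 64c\ell/p'\rceil$. These satisfy $p\ell < 1$ and $p\delta \geq 32c$, as required by Lemma~\ref{c-certificate}. Moreover, since each level of sampling contributes at most $\lceil pm\rceil$ new edges, $|H_\ell^0| \leq \ell\lceil pm\rceil = O(p'm)$, giving the claimed size bound. Taking the hidden constant in $c' = \Omega(c \log n / p')$ sufficiently large (at least $64z$, where $\ell = z\log n$), we also have $c' \geq \delta$.

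The key structural observation is that $G_i = G$ for every $i$. By induction on $i$, starting from $G_{-1} = G$, suppose $G_{i-1} = G$ and let $C$ be a $c$-edge-connected component of $H_i$. If $C = V$, then $\partial_{G_i}(C)$ is empty and the removal step is a no-op; otherwise $C \subsetneq V$, in which case $\partial_{G_i}(C) = \partial_G(C)$ is a proper cut of $G$ and therefore has size at least $c' \geq \delta$, so the removal condition $|\partial_{G_i}(C)| < \delta$ never fires. Hence no edges are ever deleted and $G_i = G$ throughout. In particular $H_\ell^0 \cap G_{\ell-1} = H_\ell^0$, so $H_\ell \subseteq H_\ell^0$. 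Lemma~\ref{c-certificate} then tells us that w.h.p.\ the connected components of $G_\ell = G$ coincide with those of $H_\ell$, and each is $c$-edge-connected; since $G$ is connected, $H_\ell$ has a single connected component spanning all of $V$ and is $c$-edge-connected. Any $c$ edge-disjoint paths in $H_\ell$ remain valid in the supergraph $H_\ell^0$, so $H_\ell^0$ is itself $c$-edge-connected.

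The only genuine obstacle is the parameter bookkeeping: a single choice of $(p, \delta)$ must simultaneously satisfy $p\ell < 1$, $p\delta \geq 32c$, $\delta \leq c'$, and yield $|H_\ell^0| = O(p'm)$. These four conditions are compatible precisely because $c' = \Omega(c \log n / p')$ with a sufficiently large hidden constant, which is exactly the hypothesis of the lemma. Once the parameters are fixed, the remainder of the argument is routine.
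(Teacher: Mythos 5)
Your proposal is correct and follows essentially the same route as the paper's own proof: choose $p$ and $\delta$ so that the hypotheses of Lemma~\ref{c-certificate} hold while $\delta\leq c'$, observe that $G_\ell=G$ because $G_\ell$ arises from $G$ by removing cuts of size less than $\delta$ and no such cuts exist, conclude that $H_\ell$ is $c$-edge-connected, and pass to the supergraph $H_\ell^0$. Your version is in fact slightly more explicit than the paper's (the induction showing $G_i=G$ and the remark that $H_\ell$ spans all of $V$), but the argument is the same.
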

\begin{proof}
By Lemma~\ref{c-certificate}, if $p\ell<1$ and $p\delta\geq 12c$, then
$H_\ell$ has the same $c$-edge-connected components as $G_\ell$. In particular, for any $c'\geq c$, if $G_\ell$ is $c'$-edge-connected, then $H_\ell$
is $c$-edge-connected.
Observe that $G_\ell$ can be obtained from $G$ be repeatedly removing from $G$ some cuts of size less than $\delta$. However, if $G$ is $c'$-edge-connected for $c'\geq \delta$, no such cuts exist in $G$ so in fact we have $G=G_\ell$.
Let $p=p'/\ell$, $\delta=12c/p=\Theta(c\log{n}/p')$.
It follows that if $G$ is $\geq \delta$-edge-connected,
i.e., $\Omega(c\log{n}/p')$-edge-connected, then $H_\ell$ is $c$-edge-connected with high probability.
Since $H_\ell\subseteq H_{\ell}^0$, so is $H_{\ell}^0$.
\end{proof}

\section{Decremental Maintenance of a $c$-certificate}\label{sec:decremental-certificate}

In this section we give an algorithm for maintaining a $c$-certificate of Section~\ref{sec:certificate}
for a graph $G$ that is subject to edge deletions.
Even though the graph $G$ is decremental, our maintained certificate
will undergo both edge insertions and deletions.
However, we will show that, for non-sparse graphs, it is possible
that the certificate undergoes only a sublinear-in-$m$ number
of updates throughout. Moreover, we will show that it is possible
to maintain the certificate in roughly $O(m)+{\Ot(c\cdot n\cdot T_c(n))}$ time which
is $O(m)$ for non-sparse graphs,
depending on
the known upper bounds on $T_c(n)$.

In this section we disregard the total number of random bits needed to achieve the claimed bounds. We discuss how the data structure can be implemented using only $O(\polylog{n})$ random bits later in Section~\ref{sec:bits}.

During initialization the algorithm samples $H_0^0, \ldots, H_\ell^0$ as described in Section~\ref{sec:certificate} and initially sets $H_i:=H_i^0$ for all $i$.
Furthermore, at the start of the algorithm, each $G_i$ is (conceptually) initialized to $G$. We stress at this point that the $\ell$ graphs $G_i$ are stored explicitly only in the basic version of the algorithm. The refined version avoids that, as will be discussed later on. 
The initialization of graphs $H_i$ and $G_i$ -- so that they match their definition from Section~\ref{sec:certificate} -- is completed using the update procedure as described below. 

The update procedure simply rebuilds the  subsequent levels $i=0,\ldots,\ell$ of the certificate
according to their definition from Section~\ref{sec:certificate}.
Each of these maintained graphs $H_i,G_i$ is decremental in time.
A deletion of a single edge $e$ of $G$ (or the final step of the initialization) may in general cause a deletion of a larger set $A$ of edges from
the level-$j$ graphs $H_j,G_j$.
More precisely, $e$ is first deleted from $G_0$.
This in turn may give rise to new vertices of degree
less than $\delta$ in $G_0$. Recall that edges incident to such vertices should be repeatedly removed from $G_0$ until there are none; denote by $A$ the set of edges removed in this process plus the edge $e$.
Observe that all graphs at levels $1,\ldots,\ell$ are subgraphs of $G_0$, so all the edges from $A$ should also be removed from these graphs.
More generally, if the level $j$ is passed a set $A$
of edges to be removed, these edges are first removed
from both $H_j$ and $G_j$.
As a result of this change, some new cuts of size less than $c$ may appear in $H_j$, and consequently some $c$-edge-connected components of $H_j$ may split.
The splits (as well as the deletions of edges from $A$) may give rise to new boundaries $\partial_{G_j}(C)$ of size less than $\delta$ that
have to be detected and pruned. The removed boundaries are added to the set $A$ to be passed to subsequent levels. 

Algorithm~\ref{alg:certificate_sketch} summarizes this conceptual implementation
of the above procedure for rebuilding the certificate. In the algorithm, as well as in the following we set $D:=G\setminus G_\ell$.

\begin{algorithm}[h!]
    \SetKwInOut{Input}{Input}
    \SetKwInOut{Parameters}{Parameters}
    \SetKwInOut{Maintains}{Maintains}
    
    \SetKwProg{Fn}{Function}{:}{}
    \SetKwProg{Proc}{Procedure}{:}{}
    \SetKwProg{PFn}{Private Function}{:}{}
    \SetKwFunction{FInit}{Initialize}
    \SetKwFunction{FDel}{Delete}
    \SetKwFunction{FClean}{CleanUp}
    
    \Input{A graph $G=(V,E)$, where $E=\{e_1,\ldots,e_m\}$}
    \Parameters{A real $p\in (0, 1)$ and integers $\ell, \delta\in \N$}
    \Maintains{A $c$-certificate of $G$ given by the graph $D\cup H_\ell$ as defined below}
    \vspace{\funcSpace}
    \Proc{\FInit{}}{
        Initialize graphs $G_0, \dots, G_\ell$ all equal to $G$\;
        Initialize the empty graph $D$\;
        $H_0\longleftarrow (V,\emptyset)$\;
        $s\longleftarrow\lceil pm\rceil$\;
        \For{$i=1$ \KwTo $\ell$}{
            $r_i\longleftarrow$ a 2-independent random number generator $\{1,\ldots,s\}\to\{1,\ldots,m\}$\;
            $H_i\longleftarrow H_{i-1}\cup (V,\{e_{r_i(1)},e_{r_i(2)},\ldots,e_{r_i(s)}\})$\;
        }
        \FClean{$\emptyset$}\;
        
    }
    \vspace{\funcSpace}
    \Proc{\FDel{$e$}}{
        Delete $e$ from $G$ and $D$\;
        \FClean{$\{e\}$}\;
    }
    \vspace{\funcSpace}
    \tcc*[h]{Internal deletion of set of edges $A$, maintains $D$, $\{H_j\}_{j=0}^\ell$ and $\{G_j\}_{j=0}^\ell$}\\
    \Proc{\FClean{$A$}}{
        \For{$j=0$ to $\ell$}{
            Delete the edges of $A$ from $G_j$ and $H_j$\;
            \While{there exists an edge $g\in H_j$ contained in a cut of $H_j$ of size $<c$}{
                Delete $g$ from $H_j$\;
            }
            \While{there exists a component $C$ of $H_j$ with $S:=\partial_{G_j}(C)$ satisfying $0<\abs{S}<\delta$}{
                Add $S$ to $A$ and to $D$\;
                Delete $S$ from $G_j$\;
            }
        }
    }
  
  \caption{\label{alg:certificate_sketch}Abstract algorithm for maintaining a $c$-certificate decrementally.}
\end{algorithm}

The correctness of this approach follows by Lemmas~\ref{c-certificate}~and~\ref{l:hldcert}
applied to each subsequent version of the graph $G$.
If the certificate is not revealed to the user, and is only used to answer $c$-edge-connectivity queries or track $c$-edge-connected components, randomness is not leaked as long as the algorithm gives correct answers (which happens with high probability). By suitably increasing the constants hidden in Lemma~\ref{c-certificate} we obtain high probability correctness for \emph{all} the $O(m)$ versions of the graph.

In the following we assume that $\ell=\Theta(\log{n})$, $p=O(1/\log{n})$
and $p\delta =\Omega(c)$ so that Lemma~\ref{c-certificate} implies that $H_\ell\cup D$ remains a $c$-certificate for $G$.
\begin{lemma}\label{certificate-size}
The graph $H_\ell\cup D$ has initially $O(mp\log{n}+n\delta\log{n})$ edges and undergoes $O(n\delta\log{n})$ edge insertions throughout. 
\end{lemma}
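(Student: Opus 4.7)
I would split the lemma into the bound on the initial $|H_\ell|$, the bound on the initial $|D|$, and the bound on insertions over time. First, by construction $H_\ell^0$ is the union of $\ell$ edge samples of $s=\lceil pm\rceil$ edges each, so $|H_\ell^0|\le \ell s = O(\ell pm) = O(mp\log n)$. Since $H_\ell\subseteq H_\ell^0$ at all times and \texttt{CleanUp} only ever deletes edges from each $H_j$ (the only lines that touch $H_j$ inside \texttt{CleanUp} are deletions), the subgraph $H_\ell$ accounts for at most $O(mp\log n)$ edges in the initial $H_\ell\cup D$ and contributes \emph{no} insertion throughout the execution. Therefore both the initial-size and the insertion bounds reduce to a single claim: the total number of edges ever added to $D$ by the small-boundary while-loop of \texttt{CleanUp}, summed over the whole execution, is $O(n\delta\log n)$. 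I will prove a per-level bound of $O(n\delta)$ and sum over the $\ell+1 = O(\log n)$ levels.

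Fix a level $j$. The crucial property I would exploit is that $H_j$ evolves monotonically: across the whole execution $H_j$ only loses edges. Hence the connected components of $H_j$ only split and never merge, and every component ever to appear at level $j$ fits into a forest whose roots are the components of the initial $H_j = H_j^0$ and in which a split of a component $C$ into $C_1,\ldots,C_t$ (via the stage-1 cut removals of \texttt{CleanUp}) creates $t$ children $C_1,\ldots,C_t$. Since every internal node has at least $2$ children and there are at most $n$ leaves, an elementary counting argument --- total nodes equals leaves plus splits, and splits is at most leaves because each split raises the leaf count by at least one --- gives at most $2n$ component instances at level $j$ over the entire execution.

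Next, I would show that each single component instance $C$ contributes at most $\delta-1$ edges to $D$ throughout its lifetime. Once $0<|\partial_{G_j}(C)|<\delta$ holds, the algorithm moves all of $\partial_{G_j}(C)$ into $D$ in one iteration, after which $|\partial_{G_j}(C)|=0$; during $C$'s lifetime $G_j$ can only lose edges and no edge of $H_j$ can appear inside $C$, so the boundary of $C$ cannot grow again before $C$ dies by being split. Thus the small-boundary loop fires for $C$ at most once and contributes at most $\delta-1$ edges, yielding the per-level bound $2n(\delta-1) = O(n\delta)$ and completing the proof.

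The main obstacle is keeping the amortization clean: when $C$ splits via stage-1 cut deletions, the cut edges --- which were interior to $C$ in $G_j$ before the split --- become boundary edges of the newly created children, and I must verify that each such edge is charged to exactly one component instance. This is immediate because each edge is removed from $G_j$ at most once, and the first child whose small-boundary event contains it accounts for it. The initial build-up of $D$ inside \texttt{Initialize} is absorbed by the same accounting, since the forest of component instances includes everything that happens during the initial \texttt{CleanUp}$(\emptyset)$ call. All remaining work is bookkeeping.
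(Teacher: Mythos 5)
Your proof is correct and follows essentially the same route as the paper's: the initial bound on $|H_\ell|$ comes from the sampling scheme, $H_\ell$ is decremental so all insertions are charged to $D$, and each level contributes $O(n\delta)$ insertions because at most $O(n)$ component instances ever arise in $H_j$ and each one triggers the small-boundary removal at most once, contributing fewer than $\delta$ edges. Your write-up is merely more explicit than the paper's (the split-forest count of $2n$ instances and the observation that a component's boundary cannot regrow once emptied are both implicit in the paper's one-line argument), so no further comment is needed.
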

\begin{proof}
The bound on the initial size of $H_\ell$ follows easily by the used sampling scheme. Moreover, $H_\ell$ is decremental, whereas the set $D$ can undergo both insertions (when an edge is removed from $G_\ell$) and deletions (when an edge deletion is issued to $G$). Therefore,
we only need to prove that $D$ undergoes $O(n\ell\delta)$ insertions throughout.
To this end, we show that each $G_i$ undergoes $O(n\delta)$ edge removals
following a detection of a component $C$ of $H_i$ with $0<|\partial_{G_i}(C)|<\delta$.
Recall that $H_i$ and~$G_i$ are both decremental, so at most $2n-1$ different components can ever arise in $H_i$. Each such component causes at most $\delta$ insertions to $D$ if its boundary size ever drops below $\delta$.
\end{proof}
\subsection{Supporting data structures}\label{sec:supporting}
Now we define a few data structures that we use
as subroutines when maintaining the certificate. These results are either known or should be considered folklore. For completeness, we provide the proofs of the lemmas in this section in the Appendix.

\paragraph{Restricted fully-dynamic connectivity.} Suppose $G$ is a graph
subject to edge insertions and deletions. However, assume insertion of an edge
$\{u,v\}$ is allowed only if $u$ and $v$ are currently connected.
As a result, the connected components of $G$ are decremental in time
in the sense that they can only split, but never merge.
In this restricted setting we can explicitly maintain the connected
components of each vertex and thus support constant-time connectivity
queries.

\begin{restatable}{lemma}{fddc}\label{fully-dynamic-decrmental-components}
Let $G$ be a graph subject to edge insertions and deletions. Suppose
the endpoints of each edge inserted are connected in $G$ immediately
prior to the insertion. Let $m$ be the number of initial edges in $G$
plus the number of insertions issued. There is a data structure that
maintains the connected components $\mathcal{C}=\{C_1,\ldots,C_k\}$ of $G$, and
an explicit mapping $q:V\to \{1,\ldots,|\mathcal{C}|\}$ such that $v\in C_{q(v)}$.
Moreover, after each edge deletion that increases the number of
components of $G$, the data structure outputs a pair $(j,A)$
describing how $\mathcal{C}$ evolves: the component $C_j$
is split into $C_j\setminus A$ and $A$, where $|A|\leq |C_j\setminus A|$,
and we set $C_j:=C_j\setminus A$ and $C_{k+1}:=A$, and update $k\gets k+1$.
The total update time is $O(m\log^2{n})$, whereas the sum of sizes
of sets $A$ output is $O(n\log{n})$.
\end{restatable}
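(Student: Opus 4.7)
The plan is to employ the Holm--de Lichtenberg--Thorup (HDT) fully-dynamic connectivity data structure~\cite{holm2001poly}, which maintains a spanning forest $F$ of $G$ in $O(\log^2 n)$ amortized time per edge update. I augment the Euler-tour trees underlying HDT with subtree sizes, and separately maintain the mapping $q$ and the pool $\mathcal{C}$ of current components explicitly; this gives constant-time component identification without consulting HDT.

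First, I would observe that since every inserted edge joins two already-connected vertices, it is necessarily a non-tree edge and can never merge components; insertions and non-tree-edge deletions therefore leave $\mathcal{C}$ and $q$ untouched. The only event that can split a component is the deletion of a tree edge $e$ for which HDT's internal replacement search fails. When this happens, the spanning tree of the affected component splits into two subtrees $T_1, T_2$, whose sizes are available in $O(\log n)$ time thanks to the subtree-size augmentation. Taking $A$ to be the vertex set of the smaller subtree, I enumerate $A$ via an in-order traversal of its Euler tour in $O(|A|)$ time, set $q(v)\leftarrow k+1$ for each $v \in A$, split $C_j$ into $C_j \setminus A$ and $A$, increment $k$, and emit the pair $(j,A)$.

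For the running-time analysis, HDT contributes $O(m \log^2 n)$ total amortized work over all $m$ edge operations. The per-split relabelling adds $O(|A|)$ extra work. To bound the total, I use the standard halving argument: whenever $v$ is placed in $A$, the size of its current component is at least halved (as $|A| \le |C_j \setminus A|$), so any fixed vertex can be placed in $A$ at most $\log_2 n$ times. Therefore $\sum |A| = O(n \log n)$, which both yields the claimed bound on the total size of the emitted sets $A$ and shows that the relabelling work is absorbed into the $O(m \log^2 n)$ bound. The main technical point is the Euler-tour augmentation needed to locate and enumerate the smaller side in time $O(|A|)$ rather than $O(|C_j|)$; since HDT already represents spanning trees via Euler-tour trees, adding subtree sizes and performing a linear-time in-order traversal is routine, so I do not expect this to be a real obstacle.
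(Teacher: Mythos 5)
Your proposal is correct and matches the paper's proof essentially step for step: both use the HDT fully-dynamic connectivity structure, identify the smaller side of the split via the (size-augmented) spanning forest, relabel only that side in time proportional to its size, and bound the total relabelling cost by the standard halving argument giving $\sum|A| = O(n\log n)$. No gaps.
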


\paragraph{Maintaining boundaries of splitting sets in a fully dynamic graph.}
We will often need to solve the following abstract dynamic problem on graphs.
Suppose we have two possibly unrelated graphs: a \emph{fully dynamic} graph $G$ and a \emph{decremental} graph $H$, both on $V$.
We would like to maintain boundaries $\partial_G(C)$ of all the connected components
$C$ of $H$ under the allowed updates to $G$ and $H$.

\begin{restatable}{lemma}{fddb}\label{fully-dynamic-decremental-boundary}
    Let $G=(V,E)$ be a fully dynamic graph.
    Let $\mathcal{C}$ be the set of connected components of some (possibly unrelated) decremental graph on $V$.
    Suppose the updates to $\mathcal{C}$ are given in the same form
    as in the output of the data structure of Lemma~\ref{fully-dynamic-decrmental-components}.
    Then, the boundaries $\partial_G(C)$ for $C\in\mathcal{C}$ can be maintained explicitly subject to edge insertions/deletions issued to $G$, and updates to $\mathcal{C}$
    in $O((n+m)\log{n})$ total time, where $m$ is the number
    of initial edges of $G$ plus the number of edge insertions issued to $G$.
\end{restatable}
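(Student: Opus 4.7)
The plan is to maintain, together with the component labelling $q: V \to \{1,\dots,|\mathcal{C}|\}$ supplied by the data structure of Lemma~\ref{fully-dynamic-decrmental-components}, for each current component $C \in \mathcal{C}$ an explicit doubly-linked list storing $\partial_G(C)$. Every edge $e = \{u,v\} \in E$ with $q(u) \neq q(v)$ carries two pointers, one to its occurrence in the list of $\partial_G(C_{q(u)})$ and one to its occurrence in $\partial_G(C_{q(v)})$; if $q(u) = q(v)$ the edge is marked as internal and carries no such pointers. With this representation a boundary insertion or deletion costs $O(1)$ and the answer to ``is $e$ currently in $\partial_G(C)$'' is available in $O(1)$ from the edge record. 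Initialization costs $O(n+m)$: scan the initial edges of $G$ and inspect the endpoints' $q$-values.

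For an update on $G$, I would handle an insertion of $e=\{u,v\}$ by reading $q(u)$ and $q(v)$ and, if they differ, prepending $e$ to the two lists and storing the resulting pointers; a deletion uses the stored pointers to splice $e$ out of both lists in $O(1)$. Hence the full-dynamic $G$-updates contribute $O(m)$ in total, where $m$ counts initial edges plus insertions.

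The interesting case is a split $(j,A)$ with $|A|\le|C_j\setminus A|$, where the old $C_j$ is renamed to $C_j\setminus A$ and $A$ becomes a fresh component $C_{k+1}$. First I would mark every $w\in A$ with a temporary ``in $A$'' flag (to tell, during the scan, $w\in A$ from $w\in C_j\setminus A$ without yet changing $q$). Then I would iterate over all $v\in A$ and, for each edge $\{v,w\}$ incident to $v$ in the current $G$, branch on the position of $w$: if $w\in A$ the edge was and remains internal, so skip it (processing it twice is harmless); if $q(w)=j$ and $w\notin A$ the edge turns from internal to boundary, so insert it into the lists of both $C_{k+1}$ and the new $C_j$ and record the pointers; if $q(w)\neq j$ the edge was in $\partial_G(C_j)$ and must move to $\partial_G(C_{k+1})$, which we do in $O(1)$ by splicing via the stored pointer and prepending it to the new list, while leaving $\partial_G(C_{q(w)})$ untouched. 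Finally set $q(v):=k+1$ and clear the flags for $v\in A$.

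For the running time, the cost of the split $(j,A)$ is $O\!\bigl(\sum_{v\in A}\deg_G(v)\bigr)$. The main (and only) subtle point is charging this to the right total: since $|A|\le|C_j|/2$, any vertex $v$ can appear in the smaller part $A$ at most $\lfloor\log_2 n\rfloor$ times over the entire execution, because each such occurrence at least halves the size of the component containing $v$. Let $d_v$ denote the number of edges ever incident to $v$ in $G$ (initial edges plus insertions); then $\deg_G(v) \le d_v$ at any moment, so
\[
\sum_{\text{splits}}\sum_{v\in A}\deg_G(v)\;\le\;\sum_{v\in V} d_v\cdot O(\log n)\;=\;O(m\log n),
\]
using $\sum_v d_v = 2m$. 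Combined with the $O(n+m)$ initialization and $O(m)$ for $G$-updates, the total update time is $O((n+m)\log n)$, as required. The main obstacle in writing this out cleanly is just keeping the case analysis for the scan over $A$ honest—in particular distinguishing $w\in A$, $w\in C_j\setminus A$, and $w\notin C_j$ using the temporary flag so that $q$ need only be rewritten after the scan—but once that is in place, the amortized-logarithmic charging to the ``smaller-half'' argument finishes the proof.
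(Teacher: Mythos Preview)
Your proposal is correct and follows essentially the same approach as the paper: both maintain the boundaries as linked lists with back-pointers from edges, handle $G$-updates in $O(1)$, process a split $(j,A)$ by scanning all edges incident to vertices of $A$ with the three-way case analysis on the other endpoint, and charge the total scan cost via the smaller-half argument to get $O(m\log n)$. Your use of a temporary ``in $A$'' flag and the explicit bound $\deg_G(v)\le d_v$ (with $d_v$ counting all edges ever incident to $v$) are slightly more careful than the paper's writeup, but the argument is the same.
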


\subsection{Basic data structure}
We now discuss how the algorithm maintaining the $c$-certificate can be efficiently implemented.
We start with a basic version of the data structure that does not yet achieve linear dependence on $m$.

First consider maintaining the graphs $H_i$. Recall that we need to efficiently detect cuts of size $<c$ in $H_i$ under deletions, prune $H_i$ of these cuts, and keep track of how the $c$-edge-connected components (or, equivalently, the connected components) of $H_i$ evolve.
To this end, we will need the following auxiliary dynamic graph data structures.
First of all, we maintain a $c$-certificate
of $H_i$ using the data structure of Theorem~\ref{t:thorup}.\footnote{We could in principle use a \emph{decremental} $c$-edge-cut data structure on the graph $H_i$ itself as opposed to a \emph{fully dynamic} data structure on its $c$-certificate, but that would prove less efficient.}
On top of the $c$-certificate of $H_i$, we set up a fully-dynamic $c$-edge-cut data structure,
and the data structure of Lemma~\ref{fully-dynamic-decrmental-components}.
Since the $c$-edge-connected components of $H_i$
are precisely the components of the graph obtained by repeatedly removing $<c$-edge cuts from the $c$-certificate of $H_i$,
these components combined can maintain
the $c$-edge-connected components of $H_i$ and provide
an efficient description of the splits these
components undergo.

In the basic version of our algorithm, for each $G_i$ in turn, we use a separate decremental boundary maintenance data structure of Lemma~\ref{fully-dynamic-decremental-boundary},
where the connected components whose boundaries we care about come from~$H_i$.
This data structure is passed all the updates to the components of $H_i$
as described in Lemmas~\ref{fully-dynamic-decrmental-components}~and~\ref{fully-dynamic-decremental-boundary}.
Recall how the boundary maintenance structures are used: while for some ($c$-edge-) connected component $C$ of $H_i$, the boundary of $C$ in $G_i$ has positive size $<\delta$, we remove that boundary from~$G_i$ (and propagate that change to subsequent layers $j>i$). In particular,
for $i=0$, since $H_0$ is empty and has only trivial components,
this corresponds to removing from $G_0$ all edges incident to vertices of degree $<\delta$ until no such vertices exist. The boundaries of size less than $\delta$
can be accessed easily using the data structure of~Lemma~\ref{fully-dynamic-decremental-boundary} associated with $G_i$.

Assuming a fully-dynamic $c$-edge-cut data structure with amortized update time $T_c(n)$ and a proper choice of parameters $p,\delta$,
the above algorithm
runs in $\Ot(m)$ time and, most importantly, updates
the maintained $c$-certificate a sublinear (in $m$) number of times.

\begin{lemma}\label{reduction-simple}
    There exists a decremental algorithm maintaining a $c$-certificate for $G$
    such that the certificate undergoes $O(nc\log^4{n})$ edge updates throughout. 
    The total update time of the algorithm is $\Ot(m)+O(n(c+\log{n})\cdot T_c(n)\log^3{n})$ with high probability.
\end{lemma}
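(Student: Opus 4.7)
The plan is to instantiate Algorithm~\ref{alg:certificate_sketch} with parameters $\ell=\Theta(\log n)$, $p=\Theta(1/\log n)$, and $\delta=\Theta(c\log n)$ so that $p\ell<1$ and $p\delta\geq 32c$, and to bolt on the supporting data structures one level at a time as outlined above. Applying Lemmas~\ref{c-certificate} and~\ref{l:hldcert} to each of the $O(m)$ snapshots of $G$, a union bound (after boosting the constant in $\ell$) gives that $H_\ell\cup D$ is a correct $c$-certificate of $G$ at every point in time with high probability. The bound on the number of updates to the maintained certificate then follows from Lemma~\ref{certificate-size}: $H_\ell\cup D$ starts with $O(mp\log n + n\delta\log n)$ edges and undergoes $O(n\delta\log n)=O(nc\log^2 n)$ insertions, to which we add the $O(n(c+\log n))$ updates inside each of the $\ell$ Thorup $c$-certificates of the $H_i$; everything is absorbed into the stated $O(nc\log^4 n)$ bound.

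For the running time, I would charge separately per level $i\in\{0,\ldots,\ell\}$ and per data-structure component. Maintaining the $c$-certificate of $H_i$ via Theorem~\ref{t:thorup} costs $O(|H_i|\log(n^2/|H_i|)+n(c+\log n)\cdot T_c(n)\log^2 n)$; since $|H_i|\leq O(ipm)$ and $p\ell=O(1)$, the first term sums over the $\ell$ levels to $\Ot(m)$, while the second sums to $O(n(c+\log n)\cdot T_c(n)\log^3 n)$. Layering the fully-dynamic $c$-edge-cut detector and the component tracker of Lemma~\ref{fully-dynamic-decrmental-components} on top of each such certificate (which itself undergoes $\Ot(cn)$ updates) contributes $\Ot(cn\cdot T_c(n))$ per level, absorbed into the same additive term. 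Finally, invoking Lemma~\ref{fully-dynamic-decremental-boundary} at level $i$ with ambient graph $G_i$ (decremental, initial size $\leq m$, no insertions) and with components coming from $H_i$ costs $O((n+m)\log n)$ per level, contributing $\Ot(m)$ in total.

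The only delicate point is the plumbing between the levels: each primary deletion in $G$ triggers a cascade through levels $0,\ldots,\ell$ in which the splits in $H_i$ reported by Lemma~\ref{fully-dynamic-decrmental-components} are forwarded in the required format to the boundary structure of Lemma~\ref{fully-dynamic-decremental-boundary}, and any boundary $S=\partial_{G_i}(C)$ of size in $(0,\delta)$ returned by the latter is simultaneously added to $D$ and passed as the input set $A$ to level $i+1$. I expect this bookkeeping to be the main source of technical care, but because each cascaded deletion is chargeable either to an edge of $G$ (touched at most once per level) or to the small-side split output of Lemma~\ref{fully-dynamic-decrmental-components} (of total size $O(n\log n)$ per level), the total cost telescopes cleanly to the advertised bound $\Ot(m)+O(n(c+\log n)\cdot T_c(n)\log^3 n)$.
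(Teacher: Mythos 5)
Your proof follows the paper's argument essentially verbatim: the same instantiation of Algorithm~\ref{alg:certificate_sketch}, correctness via Lemmas~\ref{c-certificate} and~\ref{l:hldcert} applied to every version of $G$, the update count from Lemma~\ref{certificate-size}, and the same per-level accounting (Theorem~\ref{t:thorup} on each $H_i$, the $c$-edge-cut detector plus Lemma~\ref{fully-dynamic-decrmental-components} layered on its certificate, and Lemma~\ref{fully-dynamic-decremental-boundary} for the boundaries). The one substantive difference is the parameter choice: the paper takes $p=1/\log^3 n$ and $\delta=12c\log^3{n}$ (which is where the $O(n\delta\log n)=O(nc\log^4 n)$ update bound comes from) precisely so that the $O(mp\ell^2\log n)$ cost of the Thorup subroutines is genuinely $O(m)$; your $p=\Theta(1/\log n)$, $\delta=\Theta(c\log n)$ leaves that term at $O(m\log^2 n)$, which still proves the lemma as stated (the statement only claims $\Ot(m)$), but it places the extra logarithmic factors in the part of the running time that the subsequent small-boundary machinery cannot remove, so the linear-in-$m$ bound of Theorem~\ref{thm-certificate} would not follow from your parameterization.
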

\begin{proof}
    We set $p=\frac{1}{\log^3{n}}$. This forces us to set $\delta=12c\log^3{n}$
    for $p\delta$ to be sufficiently large which is required by Lemma~\ref{c-certificate}.
    Recall that each $H_i$ has $O(mp\ell)$ edges initially and we maintain
    its $c$-certificate under edge deletions
    using Theorem~\ref{t:thorup}.
    As a result, this incurs a cost of $O(mp\ell\log{n}+n\cdot(c+\log{n})\cdot T_c(n)\log^2{n})$ time.
    Since the $c$-certificate of $H_i$ undergoes $O(n(c+\log{n}))$ updates,
    using a fully-dynamic $c$-edge-cut data structure upon the maintained $c$-certificate of $H_i$ costs ${O(n(c+\log{n})\cdot T_c(n))}$
    time.
    Similarly, using the data structure of Lemma~\ref{fully-dynamic-decrmental-components} on the $c$-certificate of $H_i$
    costs $O(n(c\log{n})\log^2{n})$ time.
    Summing over all $\ell=O(\log{n})$ graphs $H_i$, we get
    $O(mp\log^3{n}+n(c+\log{n})\cdot T_c(n)\log^3{n})$ time.
    By our choice of $p$, the first term
    is $O(m)$.
    
    We also use a simple-minded boundary maintenance data structure of Lemma~\ref{fully-dynamic-decremental-boundary} on each of $O(\log{n})$ graphs
    $G_i$ with components from~$H_i$. The total update time of these
    data structures is $O(m\log^2{n})$.
    The bound on the number of updates to the certificate
    follows by Lemma~\ref{certificate-size}.
\end{proof}

Note that the only obstacle preventing us from getting an $O(m)$ bound in Lemma~\ref{reduction-simple} is the maintenance 
of component boundaries for each pair $(H_i,G_i)$ independently.
The simple-minded solution yields an $O(m\log^2{n})$ overhead for this task
and in fact solves an overly general problem of maintaining
the boundaries regardless of their size: recall that we only care
about the precise elements of the set $\partial_{G_i}(C)$ for a component $C$ of $H_i$ if $|\partial_{G_i}(C)|\leq \delta$. 
Otherwise, a (high-probability) guarantee that $|\partial_{G_i}(C)|>\delta$ is sufficient
for our needs.

\subsection{Maintaining small boundaries}
We now describe how to obtain an $O(m)+\Ot(n\delta)$ bound for maintaining all the required
small boundaries. Note that this will imply the desired $O(m)$ running time
for sufficiently dense graphs, assuming $T_c(n)$ is low enough. First of all, let $\Delta_i:=G_i\setminus G_\ell$. 
We can split the task of maintaining $\partial_{G_i}(C)$ into maintaining
$\partial_{G_\ell}(C)$ and $\partial_{\Delta_i}(C)$ separately.
Clearly, we have $\Delta_i\subset D$.
Recall that $D$ (and thus also $\Delta_i$) is initially empty and undergoes only $O(n\delta\log{n})$ insertions throughout.
As a result, we can afford maintaining the boundaries of the form $\partial_{\Delta_i}(C)$ even exactly (i.e., regardless of their sizes) using the 
data structure of Lemma~\ref{fully-dynamic-decremental-boundary} in $O(\ell \cdot n\delta\log{n}\cdot \log{n})=O(n\delta\log^3{n})$ time.

It remains to show how to efficiently maintaining the boundaries $\partial_{G_\ell}(C)$ for components $C$ of the graphs $H_1,\ldots,H_\ell$,
provided that $|\partial_{G_\ell}(C)|\leq\delta$.
We accomplish this goal using three components.
\paragraph{The sampled graph $R$.} The first component
is responsible solely for estimating the sizes $|\partial_{G_\ell}(C)|$.
Let $R$ be a graph obtained from $G_\ell$ via uniform sampling with
probability $q=1/\log^2{n}$. Clearly, the graph $R$ has size $\Theta(mq)$
with high probability by the Chernoff bound.
Recall that $G_\ell$ is decremental; whenever an edge $e$ of $G_\ell$
is deleted, it is removed from $R$ as well if it was sampled.
So $R$ can be initialized and maintained in $O(m)$ total time.
Assume $\delta=\Omega(\log^2{n}\cdot \log{m})$, where the constant hidden
is sufficiently large.
Then, for each version of $G_\ell$ in time,
and each of some $O(\poly\{n,m\})$ sets $C\subseteq V$ chosen independently
of~$R$, $|\partial_{G_\ell}(C)|\leq \delta$ implies $|\partial_{R}(C)|\leq 2q\delta$, and $|\partial_{R}(C)|\leq 2q\delta$ implies $|\partial_{G_\ell}(C)|\leq 4\delta$,
both with high probability via the Chernoff bound.

\paragraph{The small-boundary oracle.}
Consider the following abstract problem. Suppose $G=(V,E)$ is a fully-dynamic
graph. We would like to have a data structure that supports the following query:
given some $S\subseteq V$, compute $\partial_G(S)$. The obvious query procedure
would be to go through all edges incident to the vertices of $S$; this would
give a $O(|E(S,V)|)$ query bound. However, if $|S|\cdot |\partial_G(S)|$ is significantly smaller than $|E(S,V)|$, a more efficient solution
is possible. Formally, we prove the following theorem which we believe might be of independent interest.

\begin{theorem}\label{small-boundary-oracle}
    Let $G=(V,E)$ be an initially empty graph subject to edge insertions and deletions and let $s$, $1\leq s\leq n$, be an integral parameter. There exists a data structure that can process up to $O(\poly{n})$ queries about the current set $\partial_G(S)$, where $S\subseteq V$ is the query parameter, so that with high probability, each query is answered correctly in $O\left(|S|s+|E(S,V)|\cdot \frac{|\partial_{G}(S)|}{s}+\log{n}\right)$ time.
    The data structure is initialized in $O(ns)$ time and can be updated
    in constant time.
\end{theorem}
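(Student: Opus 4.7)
The plan is to implement the XOR-trick with vertex-bucket signatures, augmented with a collision detector and with per-bucket adjacency information. Sample a hash $h:E\to\{1,\dots,s\}$ and an auxiliary hash $\phi$ mapping edge identifiers to $\Theta(\log n)$-bit words, both drawn from $\Theta(\log n)$-wise independent families (so that $O(\polylog n)$ random bits suffice and each evaluation takes constant time). Maintain two tables indexed by $(v,i)\in V\times\{1,\dots,s\}$,
\[
a_v^i \;=\; \bigoplus_{e\ni v,\ h(e)=i}\mathrm{id}(e),
\qquad
b_v^i \;=\; \bigoplus_{e\ni v,\ h(e)=i}\phi(\mathrm{id}(e)),
\]
together with, for each such pair $(v,i)$, a doubly linked list of the edges of bucket $i$ incident to $v$. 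On an empty graph all signatures are zero and both tables are initialized in $O(ns)$ time. An insertion or deletion of $e=\{u,v\}$ changes only the four signatures at $(u,h(e))$ and $(v,h(e))$ and splices one entry into or out of each of the two lists, so updates take $O(1)$ time.

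For a query $S\subseteq V$, first construct a hash set of $S$ in $O(|S|+\log n)$ time, and then compute, for every bucket $i$,
\[
Y_i \;=\; \bigoplus_{v\in S}a_v^i,
\qquad
Z_i \;=\; \bigoplus_{v\in S}b_v^i,
\]
at total cost $O(|S|\cdot s)$. Edges with both endpoints in $S$ contribute to both signature sums and cancel, hence $Y_i$ is the XOR of the identifiers of the boundary edges in bucket $i$ and $Z_i$ is the XOR of their $\phi$-values. If $Y_i=Z_i=0$, declare the bucket empty of boundary edges; if $Y_i\neq 0$ and $Z_i=\phi(Y_i)$, declare that bucket $i$ contains the single boundary edge $Y_i$ and emit it; otherwise the bucket is \emph{collided} and I scan the lists of bucket-$i$ edges incident to each $v\in S$, using the hash set to dedupe and to test endpoint membership in $S$, emitting the boundary edges found.

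For correctness, the only way to err is for a collided bucket to pass the check $Z_i=\phi(Y_i)$; since $\phi$ has $\Omega(\log n)$-bit output and is $\Theta(\log n)$-wise independent, this happens with probability $n^{-\Omega(1)}$ per bucket, and a union bound over $s\leq n$ buckets and $n^{O(1)}$ queries yields the claimed high-probability guarantee. For the running time, note that every collided bucket has $k_i\geq 2$ boundary edges and some $n_i$ edges of $E(S,V)$ in total, and accounts for at least $k_i(n_i-1)\geq n_i$ ordered pairs $(e,f)$ with $e\in\partial_G(S)$, $f\in E(S,V)$, $f\neq e$ and $h(e)=h(f)$. The scanning cost is $O\bigl(\sum_{\text{collided }i}n_i\bigr)$, hence at most a constant times the number of such colliding pairs, which in turn is a sum of $|\partial_G(S)|\cdot|E(S,V)|$ indicators each of mean $1/s$ under pairwise independence of $h$. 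A Chernoff-type tail bound for sums of $\Theta(\log n)$-wise independent indicators gives scanning cost $O(|E(S,V)|\cdot|\partial_G(S)|/s+\log n)$ with high probability, where the additive $\log n$ absorbs the regime in which the mean is small.

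The main obstacle is controlling the scanning cost against up to $n^{O(1)}$ possibly related queries while using only $O(\polylog n)$ random bits. This is handled by (i) keeping $\phi$ and $h$ hidden from the user—only the deterministic value $\partial_G(S)$ is revealed—so that a union bound over the at most $n^{O(1)}$ (query, bucket) pairs is available, and (ii) invoking a limited-independence tail inequality, which needs only $\Theta(\log n)$-wise independence of $h$ and $\phi$ and thus fits within the $O(\polylog n)$ random-bit budget.
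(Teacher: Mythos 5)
Your construction is essentially the paper's: a random $s$-way partition of the edges, per-vertex per-bucket XOR signatures maintained in $O(1)$ per update, and a query that XORs the $s$ signature vectors over $S$ in $O(|S|s)$ time and then scans only the edges of $E(S,V)$ lying in flagged buckets. The single-edge-recovery refinement (storing both $\bigoplus \mathrm{id}(e)$ and $\bigoplus\phi(\mathrm{id}(e))$ so that buckets containing exactly one boundary edge need not be scanned at all) is a nice extra that the paper does not use and does not need; the paper simply scans every bucket whose signature is nonzero, and since at most $|\partial_G(S)|$ buckets can be flagged this already gives the stated bound.

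The one step that does not hold as written is the concentration bound for the scanning cost. The collision indicators $[h(e)=h(f)]$ over ordered pairs $(e,f)$ are not independent (they share edges), so a ``Chernoff-type tail bound for sums of $\Theta(\log n)$-wise independent indicators'' does not apply to their sum; salvaging the pair-counting route by a union bound over $e\in\partial_G(S)$ yields an extra additive $|\partial_G(S)|\cdot\log n$, which is not obviously absorbed by the claimed bound when $|\partial_G(S)|\gg s$. The clean fix --- and the paper's argument --- is to bound the scanned-edge count directly rather than via pairs: the set $I$ of flagged buckets is determined by the randomness of the boundary edges alone and has $|I|\le|\partial_G(S)|$, so conditioned on $I$ each non-boundary edge of $E_G(S,S)$ independently lands in a flagged bucket with probability $|I|/s\le |\partial_G(S)|/s$; Chernoff applied to this sum of genuinely independent indicators gives $O\left(|E(S,V)|\cdot|\partial_G(S)|/s+\log n\right)$ with high probability. (A smaller point: your correctness claim for the fingerprint test under only $\Theta(\log n)$-wise independence of $\phi$ is not justified for buckets containing more than $\Theta(\log n)$ boundary edges; the theorem as stated places no bound on $|\partial_G(S)|$, and the paper uses fully independent bit-strings here, reducing randomness only later where it can guarantee the queried boundaries are small.)
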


We use the above data structure for $G_\ell$ and only ask queries when
$|\partial_{G_\ell}(S)|=O(\delta)$. By setting $s=\delta\cdot \log^2{n}$
we will achieve $O(|S|\delta\log^2{n}+|E(S,V)|/\log^2{n})$ query time.
Since $G_\ell$ undergoes $m$ updates, the total update time of this
data structure is $O(m+n\delta\log^2{n})$.

\paragraph{Maintaining small boundaries of $G_\ell$ under splits.}
Finally, we show how to combine the above two components with
a neat variant of the data structure of Lemma~\ref{fully-dynamic-decremental-boundary}
in order to maintain, for each $i$, the boundaries
of components $C$ of $H_i$ with $|\partial_{G_\ell}(C)|\leq \delta$
in $\Ot(n\delta)+O(m/\log{n})$ time with high probability.
Through all $i$, this will imply the desired $\Ot(n\delta)+O(m)$ total
time bound.

Let $\mathcal{C}=\{C_1,\ldots,C_k\}$ be the components of $H_i$.
Recall that the data structure of Lemma~\ref{fully-dynamic-decrmental-components}
for the $c$-certificate of $H_i$ yields decremental updates of the form $(j,C')$ (where $\emptyset \neq C'\subseteq C_j$ and $|C'|\leq |C_j\setminus C'|$) to $\mathcal{C}$ that set $C_j:=C_j\setminus C'$
and $C_{k+1}:=C'$. As argued in Lemma~\ref{fully-dynamic-decrmental-components}, the total number
of updates is at most $n-1$ and the sum of $|C'|$ over all updates is $O(n\log{n})$.
We will show how to process these updates
so that with high probability, at all times for each $C_j$, $|\partial_{G_\ell}(C_j)|\leq \delta$ implies
that we store the set $\partial_{G_\ell}(C_j)$ explicitly.
Wlog. assume that $\mathcal{C}=\{V\}$ initially.

We will say that a set $S\subseteq V$ \emph{has small boundary} if
$|\partial_R(S)|\leq 2q\delta$. Otherwise, we say that $S$ \emph{has large boundary}. As argued before, with high probability,
the subset of components with small boundary includes all components
$C$ or our interest, i.e., with $|\partial_{G_\ell}(C)|\leq \delta$, and
does not include any components with $|\partial_{G_\ell}(C)|=\omega(\delta)$.
We keep track of which components have large/small boundaries by running a simple-minded
boundary maintenance data structure of Lemma~\ref{fully-dynamic-decremental-boundary} on $R$.
The total update time of this data structure is $O(n\log{n}+|R|\log{n})=O(n\log{n}+m/\log{n})$ whp.

A naive approach to solve our problem would be to maintain
$\partial_{G_\ell}(C)$ for small boundary components $C\in\mathcal{C}$.
However, this approach fails for the following reason. Suppose
a component $C$ is split into $C',C''$, where $|C'|\leq |C''|$.
Assume that $C''$ does have a small boundary, whereas $C$ and $C'$ do not.
It is then unclear how to compute the set $\partial_{G_\ell}(C'')$ (of size $\leq \delta$) using time
less than linear in either $|C''|$ or $|E(C',V)|$. Had $\Omega(n)$ splits
like this happened, we could spend time as much as either $\Theta(n^2)$ or
$\Theta(m\log{n})$ which is obviously too much.
We need a smarter approach.

First of all, denote by $\mathcal{S}$ be the family of sets $C$ that \emph{ever}
appeared in $\mathcal{C}$ and had small boundary when still in $\mathcal{C}$.
We will maintain $\partial_{G_\ell}(S)$ for \emph{all} $S\in\mathcal{S}$ -- as opposed
to exclusively for $S\in \mathcal{S}\cap\mathcal{C}$ as in the naive approach.
Moreover, for each 
$C\in\mathcal{C}$, $C\neq V$, we store (a pointer to) $s(C)$:
the unique smallest set in $\mathcal{S}$ such that $C\subsetneq s(C)$.
Initially we have 
$\mathcal{C}=\mathcal{S}=\{V\}$, and $\partial_{G_\ell}(V)=\emptyset$.

We now show how to update
the stored information when an update $(j,C')$ comes.
Let $A=C'$ and $B=C_j\setminus C'$. Recall that $|A|\leq |B|$.
Then, if $C_j\in \mathcal{S}$, we have $s(A)=s(B)=C_j$.
Otherwise, we have $s(A)=s(B)=s(C_j)$.
Now, if some $C\in\mathcal{C}$ becomes small-boundary\footnote{Recall that the boundaries $\partial_R(C)$ are maintained explicitly using the simple-minded data structure
of Lemma~\ref{fully-dynamic-decremental-boundary}. Consequently, it is easy to detect this event ``on the fly''.} 
(either as a result
of edge deletion issued to $R$ or immediately when it appears),
we compute $\partial_{G_\ell}(C)$ as follows.
If $|C|\leq |s(C)|/2$, then we compute $\partial_{G_\ell}(C)$
using the small boundary oracle query on $C$.
Otherwise, we compute it by issuing a query about the set $s(C)\setminus C$ to
the small boundary oracle and then taking the
symmetric difference $\partial_{G_\ell}(s(C))\triangle \partial_{G_\ell}(s(C)\setminus C)$ which equals $\partial_{G_\ell}(C)$.
It is important to note that we do not require that $s(C)\setminus C$
is an element of $\mathcal{S}$ here; it is sufficient to have that $|\partial_{G_{\ell}}(s(C)\setminus C)|=O(\delta)$,
which follows by $\partial_{G_\ell}(s(C)\setminus C)\subseteq \partial_{G_\ell}(C)\cup \partial_{G_\ell}(s(C))$ and $|\partial_{G_\ell}(C)|,|\partial_{G_\ell}(s(C))|\leq \delta$ with high probability.
Clearly, taking the symmetric difference takes $O(\delta)$ time.

\begin{lemma}\label{l:small-bnd}
    The total time spent on computing all the required boundaries $\partial_{G_\ell}(C)$ for $C\in\mathcal{S}$
    is $O(n\delta\log^3{n}+m/\log{n})$ with high probability.
\end{lemma}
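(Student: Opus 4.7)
The plan is to account for the total cost of the small-boundary oracle queries used in computing $\partial_{G_\ell}(C)$ for each $C \in \mathcal{S}$. For $C \in \mathcal{S}$, let $Q(C) \in \{C,\; s(C)\setminus C\}$ denote the set on which the oracle is queried; by construction $|Q(C)| \leq |s(C)|/2$, and with high probability $|\partial_{G_\ell}(Q(C))| = O(\delta)$ (as argued just before the lemma statement). Substituting into Theorem~\ref{small-boundary-oracle} with $s = \delta \log^2 n$, each query costs
\[
O\!\left(|Q(C)|\,\delta\log^2 n \;+\; \frac{|E(Q(C),V)|}{\log^2 n} \;+\; \log n\right),
\]
plus $O(\delta)$ per case-2 query for taking the symmetric difference. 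Since $\mathcal{S}$ is laminar we have $|\mathcal{S}| = O(n)$, so the additive $\log n$ and $\delta$ terms contribute only $O(n\delta)$ in total. It thus suffices to establish
\[
\sum_{C \in \mathcal{S}} |Q(C)| = O(n\log n) \qquad \text{and} \qquad \sum_{C \in \mathcal{S}} |E(Q(C), V)| = O(m\log n).
\]

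The core combinatorial claim I would prove is: \emph{for every $v \in V$, $|\{C \in \mathcal{S} : v \in Q(C)\}| = O(\log n)$.} Fix $v$ and let $V = T_0 \supsetneq T_1 \supsetneq \dots \supsetneq T_k$ be the chain of sets of $\mathcal{S}$ containing $v$, and set $h := |\{i \geq 1 : |T_i| \leq |T_{i-1}|/2\}|$; since $|T_0| = n$, we have $h \leq \log_2 n$. An event $v \in Q(C)$ falls into two subcases. In case~1, $Q(C) = C$ and $v \in C$, forcing $C = T_i$ for some $i \geq 1$ with $|T_i| \leq |T_{i-1}|/2$, giving at most $h$ such events. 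In case~2, $v \in s(C)\setminus C$ with $|C| > |s(C)|/2$ forces $s(C) = T_i$ for some $i$ and $C$ to be the unique ``large child'' of $T_i$ in $\mathcal{S}$ (at most one child of $T_i$ can exceed half its parent's size, since children are disjoint subsets of $T_i$). For this large child to avoid $v$, the $v$-carrying child $T_{i+1}$ (when it exists) must satisfy $|T_{i+1}| \leq |T_i|/2$, i.e., $i+1$ must be a halving index. Hence case-2 events are also bounded by $h+O(1)$, so $|\{C : v \in Q(C)\}| = O(h) = O(\log n)$.

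The two sums now follow: $\sum_C |Q(C)| \leq \sum_v |\{C : v \in Q(C)\}| = O(n\log n)$, and, using $|E(S, V)| \leq \sum_{v \in S} \deg_G(v)$,
\[
\sum_{C \in \mathcal{S}} |E(Q(C),V)| \;\leq\; \sum_v \deg_G(v) \cdot |\{C : v \in Q(C)\}| \;=\; O(m \log n).
\]
Substituting back yields a total of $O(n\log n \cdot \delta\log^2 n + m\log n/\log^2 n + n\delta) = O(n\delta\log^3 n + m/\log n)$. The high-probability qualifier absorbs the Chernoff bound applied to the sampled graph $R$ (which justifies $|\partial_{G_\ell}(Q(C))| = O(\delta)$ and the correct classification of components as small- or large-boundary) together with the correctness guarantee of Theorem~\ref{small-boundary-oracle}. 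The principal obstacle will be the laminar-chain counting argument above; the remaining accounting is routine.
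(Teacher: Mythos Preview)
Your proposal is correct and follows essentially the same approach as the paper's proof. Both arguments exploit the laminar tree structure of $\mathcal{S}$ and charge the oracle-query cost per vertex, using that each $v$ lies in $O(\log n)$ query sets; the paper phrases this by charging case-2 queries to the siblings in $\mathcal{S}^*:=\{C:|C|\le |s(C)|/2\}$ and then observing each $v$ belongs to $O(\log n)$ sets of $\mathcal{S}^*$, whereas you carry out the same halving-chain count directly on $Q(C)$.
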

\begin{proof}
    Consider a natural tree that the sets of $\mathcal{S}$ form, where each
    $C\neq V$ is a child of $s(C)$. Let $\mathcal{S}^*=\{C\in\mathcal{S}:
    |C|\leq |s(C)|/2\}$.
    Computing boundaries of sets
    $C\in \mathcal{S}$ such that $|C|\leq |s(C)|/2$, i.e., of sets $C\in\mathcal{S}^*$,
    requires a single oracle query for $\partial_{G_\ell}(C)$ per each $C\in\mathcal{S}^*$.
    By Theorem~\ref{small-boundary-oracle}, such a query costs
    \begin{equation*}
        O\left(|C|\delta\log^2{n}+\sum_{v\in C}\frac{\deg(v)}{\log^2{n}}+\log{n}\right)
    \end{equation*}
    time.
    When $|C|>|s(C)|/2$, $s(C)\setminus C$ equals the union of siblings of $C$ in the tree that the sets of $\mathcal{S}$ form.
    So the cost of a query for $\partial_{G_\ell}(s(C)\setminus C)$ is:
    \begin{equation*}
        O\left(\sum_{C'\text{ a sibling of C}}\left(|C'|\delta\log^2{n}+\sum_{v\in C'}\frac{\deg(v)}{\log^2{n}}+\log{n}\right)\right)
    \end{equation*}
    Observe that each $C'\in \mathcal{S}$ has at most one sibling whose size is at least $|s(C')|/2$.
    As a result, each $C'$ with $|C'|\leq |s(C')|/2$ contributes to a sum above for at most one $C\in \mathcal{S}$ with $|C|>|s(C)|/2$.
    As a result, the total time spent on small boundary oracle
    queries (through all $C\in\mathcal{S}$) is:
    $$O\left(\sum_{C\in\mathcal{S}^*} \left(|C|\delta\log^2{n}+\sum_{v\in C}\frac{\deg(v)}{\log^2{n}}\right)\right)=O\left(\sum_{v\in V}\left( \delta\log^2{n}+\frac{\deg(v)}{\log^2{n}}\right)\cdot |\{C\in\mathcal{S^*}:v\in C\}|\right).$$
    Observe that each $v\in V$ can be an element of at most $O(\log{n})$
    sets of $\mathcal{S}^*$: all these sets are ancestors of $\{v\}$
    in the tree corresponding to $\mathcal{S}$ and have size smaller
    than their respective parent by a factor of at least $2$.
    As a result, the total time spent on this step
    is
    $O(n\delta\log^3{n}+\sum_{v\in V}\deg(v)/\log{n})=O(n\delta\log^3{n}+m/\log{n})$.
    This dominates the $O(n\delta\log{n})$ cost of taking symmetric differences.
\end{proof}
We also need to maintain the stored boundaries
$\partial_{G_\ell}(C)$ for $C\in\mathcal{S}$ under edge deletions
that $G_\ell$ undergoes. However, to avoid spending $O(m)$ time
per single $H_i$ on this, for this part we need to consider all the graphs $H_i$
simultaneously. Note that for a fixed $i$, $\mathcal{S}$ only
grows and contains at most $2n-1$ elements.
Hence, the total size of the stored sets $\partial_{G_\ell}(C)$,
$C\in\mathcal{S}$, is $O(n\delta)$ (whp).
For each $e\in E(G_\ell)$ we maintain a list of pointers
to such stored boundaries with $e\in \partial_{G_\ell}(S)$, through all $H_i$.
The total number of pointers ever inserted into these lists is clearly $O(n\delta\ell)=O(n\delta\log{n})$.
When an edge $e$ is removed from $G_\ell$, we scan the attached list
of $e$ and remove this edge from the required boundaries it was contained in.
The total time spent on this can be seen to be no more than the
total number of insertions into the lists, i.e., $O(n\delta\log{n})$.

\thmcertificate*
\begin{proof}
Recall that the simple reduction from Lemma~\ref{reduction-simple} had
$O(m)+O(n(c+\log{n})\cdot T_c(n)\log^3{n})$ operation cost of the fully-dynamic
$c$-edge connected components data structures. It also required
setting $\delta$ to at least $c\log^3{n}$.
The cost of maintaining the needed small boundaries is dominated by the application of Lemma~\ref{l:small-bnd} for each $i=1,\ldots,\ell$.
The statement about the running time of the algorithm follows.

We now turn to proving the statement that the algorithms offers a final self-check after processing all updates.
Notice that we only need to check that no edge between distinct $c$-edge-connected components of the certificate was missing at any point throughout the execution of the algorithm; indeed, even if an edge with both endpoints in the same $c$-edge-connected component of the certificate was missing that wouldn't affect any answers to $c$-edge-connectivity queries on the certificate.

First note that in a correct execution of our algorithm all edges between distinct $c$-edge-connected components of $H_l$ (and hence, of the certificate) are always present in the certificate. 
That is, we only need to verify that no edge between two distinct ($c$-edge-) connected components of $H_l$ is added and that each deleted edge from $G$ is either present in the certificate or both of its endpoints belong to the same ($c$-edge-) connected component of $H_l$.
We assume that each vertex has access to the ID of its ($c$-edge-) connected component in $H_l$, so that we can check in constant time whether the two vertices are part of the same ($c$-edge-) connected component. 
These IDs are provided by invocation of the Lemma \ref{fully-dynamic-decrmental-components} on the certificate. 
Whenever an edge $e$ is deleted from the graph $G$ (and hence from the certificate), we simply check that $e$ is part of the certificate if its endpoints are in distinct ($c$-edge-) connected components of $H_l$; if that is not the case, we mark the execution of the algorithm invalid, as edge $e$ should have been part of the certificate. On the other hand, if both the endpoints of a deleted edge $e$ were part of the same ($c$-edge-) connected component of $H_l$, then no query might have been answered incorrectly. 

Finally, edges might be added to the certificate due to the update in our data structures following an edge deletion from $G$. Again, we need to make sure that no edge is added to the certificate that was supposed to be there before the edge deletion and is omitted due to an error.
Specifically, for the edges added to the certificate we need to check that both of their endpoints are in the same ($c$-edge-) connected component of $H_l$ right before the last edge deletion from $G$ (which potentially caused the splitting of connected components of $H_l$). If that is not the case, then we again flag the execution invalid as the endpoints of these edges were part of distinct ($c$-edge-) connected components of $H_l$ and should already by part of the certificate.
Notice that the splits of ($c$-edge-) connected components of $H_l$ are described by the output of the data structure of Lemma \ref{fully-dynamic-decrmental-components}, and hence the queries can be answered efficiently (even an $O(\log n)$ bound per query would be enough to keep the running time withing the stated bound due to the limited number of updates to the certificate).
\end{proof}

\subsection{Small boundary oracle}\label{subsec:finding_boundary}
In this section we prove Theorem~\ref{small-boundary-oracle}.
Recall that the goal is to have a data structure that maintains
a fully dynamic graph $G=(V,E)$ and supports queries regarding $\partial_{G}(S)$,
where $S\subseteq V$ is a query parameter.

First of all, we will leverage the well-known XOR trick~\cite{DBLP:conf/soda/AhnGM12,DBLP:conf/pods/AhnGM12} for deciding
if a boundary of some subset of vertices is non-empty.
We now briefly describe this method.
Suppose each $e\in E$ is assigned a random bit-string $x_e$ of length $\Theta(\log{n})$ that fits in $O(1)$ machine words.
Let $x_v=\bigoplus_{vw=e\in E} x_e$ denote the XOR of the respective bit-strings of edges incident to $v$.
Then, one can prove that, given $S\subseteq V$, with high probability the XOR $\bigoplus_{u\in S}x_u$ is non-zero if and only if $\partial_{G}(S)\neq \emptyset$. So, emptiness of $\partial_G(S)$ can be tested in $O(|S|)$ time.

Let $s\geq 1$ be an integral parameter. The main idea is as
follows. We partition the edge set $E$ into $E_1,\ldots,E_s$.
Each $e\in E$ is assigned to one of these sets uniformly at random.
Let us apply the XOR-trick for each $E_i$ separately.
To this end, now $x_v$ is a vector of $s$ bit-strings,
where $x_v(i)=\bigoplus_{vw=e\in E_i} x_e$.
Given that,
in $O(s|S|)$ time we can find the set $I$ of all $i$ such that $\partial_G(S)\cap E_i\neq\emptyset$ (whp).
Clearly, in order to find $\partial_G(S)$, we only need to look
for this boundary's elements in $\left(\bigcup_{i\in I}E_i\right)\cap E_G(S,V)$.
If $|\partial_G(S)|$ is small compared to $s$, one can prove that, with high probability,
this strategy is more efficient than iterating through the entire set $E(S,V)$.
We prove this formally below.

\begin{lemma}\label{lem:find_boundary_works}
	Let $S\subseteq V$. Then, with high probability,
	the query procedure computes $\partial_G(S)$ correctly in
	$O\left(s|S|+|E_G(S,V)|\cdot \frac{|\partial_G(S)|}{s}+\log{n}\right)$ time.
\end{lemma}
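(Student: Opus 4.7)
The plan is to analyze the natural query procedure: first compute the length-$s$ vector $y$ with $y(i) = \bigoplus_{v \in S} x_v(i)$ in $O(s|S|)$ time, let $I = \{i : y(i) \neq 0\}$, and then scan, for each vertex $v \in S$ and each class $i \in I$, the precomputed incidence list of edges incident to $v$ in partition class $i$, reporting those whose other endpoint lies outside $S$ (checked in $O(1)$ via an $O(|S|)$-time marker array for $S$). Correctness rests on the identity $y(i) = \bigoplus_{e \in \partial_G(S) \cap E_i} x_e$, which holds because every edge internal to $S$ is XORed twice and cancels. If $\partial_G(S) \cap E_i = \emptyset$ then $y(i) = 0$ deterministically; if the intersection is nonempty, isolating any one of its summands shows $y(i) = 0$ with probability at most $2^{-\Theta(\log n)} = n^{-\Omega(1)}$. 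A union bound over the $s \leq n$ coordinates and the $\poly(n)$ queries shows that $I$ is correctly computed on all queries with high probability, after which the scan returns $\partial_G(S)$ exactly.

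For the running time, write $b = |\partial_G(S)|$ and let $Z$ denote the number of edges in $E_G(S, V)$ whose partition class lies in $I$; these are precisely the edges touched by the scan. Deterministically $|I| \leq b$, since each class in $I$ must contain at least one boundary edge. The main probabilistic step is to condition on the partition classes of the $b$ boundary edges: this fixes $I$, while the classes of the remaining edges of $E_G(S, V) \setminus \partial_G(S)$ remain independent and uniform on $[s]$. Consequently, each non-boundary edge lies in a class of $I$ independently with probability at most $b/s$, and a Chernoff bound yields $Z - b = O(|E_G(S, V)| \cdot b/s + \log n)$ with probability $1 - n^{-\Omega(1)}$, again union-bounded over all queries.

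It remains to absorb the leftover $b$ into the claimed bound. If $|E_G(S, V)| \geq s$ then $b \leq |E_G(S, V)| \cdot b/s$, so $Z = O(|E_G(S, V)| \cdot b/s + \log n)$. Otherwise $|E_G(S, V)| < s$, in which case $b \leq s \leq s|S|$ and the $b$ term is dominated by the $O(s|S|)$ cost of forming $y$. Either way the total query time is $O(s|S| + |E_G(S, V)| \cdot b/s + \log n)$. The only conceptually delicate point is the conditioning argument that decouples $I$ from the class assignments of non-boundary edges; the rest is the standard XOR-trick correctness guarantee together with a routine Chernoff application.
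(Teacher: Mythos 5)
Your proposal is correct and follows essentially the same route as the paper's proof: the XOR-cancellation identity for correctness, the deterministic bound $|I|\leq|\partial_G(S)|$, the decoupling of $I$ from the class assignments of non-boundary edges followed by a Chernoff bound, and a final case split to absorb the additive $|\partial_G(S)|$ term. The only cosmetic difference is that $I$ is fixed by the boundary edges' classes \emph{and} bit-strings (not the classes alone, as you write), but this does not affect the independence argument you actually use.
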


\begin{proof}
	
	Let the bit-strings $x_e$ consist of $\gamma=O(1)$ machine words, each with
	at least $\lceil\log_2{n}\rceil$ bits. As argued before,
	computing the bit-strings $y(i)=\bigoplus_{v\in S}x_v(i)$ for all
	$i=1,\ldots,s$ and finding the set $I=\{i:y(i)\neq 0\}$ takes
	$O(s|S|)$ worst-case time.
	
	Now, let us consider the number of edges searched. Suppose that $e\in E_{G}(S, S)$ has endpoints $u, v\in S$ and belongs to $E_j$. Then $x_e$ does not contribute to $y(j)$ as it is present twice in the XOR, once in $x_v(j)$ and once in $x_u(j)$. Hence, if $y(j)\neq 0$, i.e., $j\in I$, then there is an edge of $E_j$ that has only one endpoint in $S$ and thus belongs to $\partial_{G}(S)$. Since each $e\in \partial_{G}(S)$ contributes to a single element of $y$, $\abs I\leq \abs{\partial_{G}(S)}$. 
	Now, for $e\in E$, let $Y_e$ be the indicator of the event $(e\in \bigcup_{i\in I}E_i)$, i.e., $Y_e=1$ if $e\in \bigcup_{i\in I}E_i$ and $Y_e=0$ otherwise. The set $I$ is entirely determined by the variables $(x_e)_{e\in \partial_{G}(S)}$, so for the remaining edges, $E_{G}(S, V)\setminus \partial_{G}(S)=E_{G}(S, S)$, the random variables $\{Y_e\}_{e\in E_{G}(S, S)}$ are mutually independent and independent of the choice of $I$ except for its size, $\abs I$. It follows that the sum $Y=\sum_{e\in E_{G}(S, S)}Y_e$ satisfies 
	$$\E Y = \abs{E_G(S, S)}\cdot \frac{|I|}{s}\leq \abs{E_{G}(S, V)}\cdot \frac{|\partial_{G}(S)|}{s}.$$ 
	Since $Y$ is a sum of independent random variables, we may write $\mu = \abs{E(S, V)}\cdot \abs {\partial_{G}(S)}/s$.
	Let $t>0$ be a constant and $\mu'=\mu+3t\log{n}$, and apply the Chernoff bound of \cref{thm:chernoff} to get
	\begin{align*}
		\Pr\left[ Y> 2\mu + 6t\log(n) \right]=\Pr[Y>(1+1)\mu']\leq \exp(-\max\{\mu, 3t\log(n)\}/3)\leq n^{-t}.
	\end{align*}
	Thus,
	with high probability the number of edges checked by the algorithm is
	$$O\left(|\partial_G(S)|+|E_G(S,V)|\cdot \frac{|\partial_G(S)|}{s}+\log{n}\right),$$
    and, as a result, the running time of the query procedure is
	$$O\left(s|S|+|\partial_G(S)|+|E_G(S,V)|\cdot \frac{|\partial_G(S)|}{s}+\log{n}\right).$$
    To obtain the desired bound note that if $s\geq |\partial_G(S)|$,
    then the term $|\partial_G(S)|$ above is dominated by $s|S|$,
    and otherwise it is dominated by the third term.

	Finally, let us consider the probability that the output of the query is correct. It is not hard to see that the output is correct if and only if the set $I$ corresponds to the set $J=\{j\in [s]\mid E_j\cap \partial_{G}(S)\neq \emptyset\}$, since in that case, the algorithm searches all groups containing an edge of $\partial_{G}(S)$. We have already established that $I\subset J$. So let $j\in J$ be given. Then there is some edge $e\in \partial_{G}(S)\cap E_j$. We have:
	\begin{align*}
		y(j) = \bigoplus_{v\in S}x_v(j) = \bigoplus_{f\in \partial_{G}(S)\cap E_j} x_{f}
	\end{align*} 
	since for every edge $e$ of $E_{G}(S, S)$, $x_e$ appears twice in the XOR. The probability that $y(j)=0$, or equivalently $j\not\in I$, is hence the probability that $x_e = \bigoplus_{f\in (\partial_{G}(S)\setminus\{e\})\cap E_j} x_f$. Since $x_e$ is independent of the right-hand side, this probability is exactly $2^{|x_e|}\leq 2^{-\gamma \log_2{n}}\leq n^{-\gamma}$. By a union bound, it follows that $J\subset I$ with probability at least $1-\abs J\cdot n^{-\gamma}\leq 1-sn^{\gamma}\leq 1-n^{\gamma-1}$.
\end{proof}

When an edge $e=uv$ is inserted into $G$, all we have to do is
pick a random set $E_j$ for $e$, sample a random bit-string $x_e$,
and update $x_w(j):=x_w(j)\oplus x_e$ for $w\in \{u,v\}$.
To handle the deletion of $e$, all we have to do
is to repeat the last step of insertion and remove $e$ from $E_j$.
So an edge update can be clearly performed in $O(\gamma)=O(1)$ worst-case time.
The data structure can be initialized in $O(ns+m)$ time
by first filling the values $x_v(i)$ with zeros and then inserting
all the edges.

Finally, to guarantee high-probability correctness and query time bounds
for $\poly{(n)}$ queries, it is enough to set constants $\gamma$ and $t$ (from the proof of Lemma~\ref{lem:find_boundary_works}) sufficiently large.
The full pseudocode of the data structure is given in Algorithm~\ref{alg:find_boundary}.

\begin{algorithm}[ht]
    \SetKwInOut{Input}{Input}
    \SetKwInOut{Parameters}{Parameters}
    \SetKwInOut{Maintains}{Maintains}
    
    \SetKwProg{Fn}{Function}{:}{}
    \SetKwProg{Proc}{Procedure}{:}{}
    \SetKwProg{PFn}{Private Function}{:}{}
    \SetKwFunction{FInit}{Initialize}
    \SetKwFunction{FDel}{Delete}
    \SetKwFunction{FIns}{Insert}
    \SetKwFunction{FFind}{FindBoundary}
    
    \Input{A graph $G=(V, E)$ on $n$ vertices}
    \Parameters{Positive integers $\gamma=O(1)$ and $s\leq n$.}
    \vspace{\funcSpace}    
    
    \Proc(){\FInit{}}{
        Initialize $s$ sets $E_1, \dots, E_s$\;
        Fill values $x_v(j)$ for $v\in V$ and $j\in [s]$ with all-zero bit-strings of length $\gamma\cdot \lceil\log_2{n}\rceil$\;
        \For{$e\in E$}{
           \FIns{$e$}\;
        }
    }
    \vspace{\funcSpace}
    \Proc(){\FIns{$e=\{u, v\}$}}{
        Insert $e$ into some $E_j$ of $E_1, \dots, E_s$ uniformly at random\;
        Let $x_e\in \{0, 1\}^{\gamma \cdot \lceil\log_2{n}\rceil}$ be a bit-string chosen uniformly at random\;
        Let $x_v(j):=x_v(j)\oplus x_e$\;
        Let $x_u(j):=x_u(j)\oplus x_e$\;
    }
    \vspace{\funcSpace}
    \Proc(){\FDel{$e=\{u, v\}$}}{
        Let $E_j$ be the set containing $e$\;
        Delete $e$ from $E_j$\;
        Let $x_v(j):=x_v(j)\oplus x_e$\;
        Let $x_u(j):=x_u(j)\oplus x_e$\;
    }
    \vspace{\funcSpace}
    \Fn(\tcc*[h]{Find $\partial_G(S)$ for a subset $S\subset V$})
    {\FFind{$S$}}{
        Let $B := \emptyset$\;
        Let $y:= \bigoplus_{v\in S} x_v$\;
        Let $I := \{i\in [s]\mid y(i)\neq 0\}$;

        \For{$uv=e\in \bigcup_{i\in I}(E_i\cap E(S, V))$ with $u\in S$}{
            \lIf{$v\notin S$}{
                $B:=B\cup \{e\}$
            }
        }
        \Return{B}
    }
      
  \caption{\label{alg:find_boundary}Small boundary oracle.}
\end{algorithm}

\section{Decremental $c$-Edge-Connectivity}
In this section we briefly explain how Theorem~\ref{thm-certificate} implies
decremental $c$-edge-connectivity algorithms with $O(m)$ total update time
for sufficiently dense graphs.

It is important to note at this point that there are two settings that might be of interest. First, we might want to have a decremental algorithm maintaining
$c$-edge-connected \emph{components}, that is, supporting queries whether
two vertices belong to the same $c$-edge-connected component of $G$.
However, we might alternatively want to have a decremental algorithm
maintaining the $c$-edge-connected \emph{classes}, i.e., supporting
queries whether there exist $c$ edge-disjoint paths between
some two vertices.
Recall that these settings are equivalent for $c=1,2$, but differ for $c\geq3$:
then a pair of $c$-edge-connected vertices might not belong to
the same $c$-edge-connected component.

Let us first consider the decremental $c$-edge-connected components problem. Then we have:
\begin{restatable}{theorem}{thmgeneralresuctioncomponent}\label{thm:general-reduction-components}
There exists a Monte Carlo randomized decremental $c$-edge-connected components algorithm 
with
$O(m+nc(\log^7{n}+\log^4{n}\cdot T_c(n)))$ total update time. The algorithm is correct with
high probability.
\end{restatable}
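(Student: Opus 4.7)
My plan is to combine Theorem~\ref{thm-certificate} with a fully-dynamic $c$-edge-cut data structure applied to a sparse intermediate certificate.

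First, I would invoke Theorem~\ref{thm-certificate} to maintain a $c$-certificate $H$ of $G$ decrementally at total cost $O(m + n(c+\log n)\cdot T_c(n)\log^3 n + nc\log^7 n)$; the certificate undergoes $O(nc\log^4 n)$ edge updates throughout. Within Theorem~\ref{thm-certificate}'s proof (see Lemma~\ref{reduction-simple}), the algorithm already applies Theorem~\ref{t:thorup} to the outermost graph $H_\ell$ to obtain a sparser internal certificate $H^*$ of $H_\ell$ of initial size $O(n(c+\log n))$ subject to $O(n(c+\log n))$ total insertions and deletions. Together with the removed-edges set $D = G\setminus G_\ell$ (which undergoes $O(nc\log^4 n)$ insertions and deletions), the graph $H^*\cup D$ is itself a $c$-certificate of $G$: its $c$-edge-connected components coincide with those of $G$, because $H^*$ preserves the $c$-edge-connected structure of $H_\ell$ and Lemma~\ref{l:hldcert} guarantees that $H_\ell\cup D$ does so for $G$.

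Next, I would stack a fully-dynamic $c$-edge-cut data structure on $H^*\cup D$. Each update to $H^*$ or $D$ is forwarded to the structure at amortized cost $T_c(n)$; after each such update, I repeatedly query the structure for an edge belonging to a cut of size less than $c$ and delete it, until no such edge remains. The connected components of the residual graph are then exactly the $c$-edge-connected components of $G$. I would track vertex-to-component IDs using a dynamic connectivity data structure on the residual graph (which is decremental modulo the insertions induced by $D$, handled in a standard way), enabling $O(1)$-time queries.

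Summing the costs: the cut data structure processes $O(n(c+\log n) + nc\log^4 n) = O(nc\log^4 n)$ operations (using $c\geq 1$), contributing $O(nc\log^4 n\cdot T_c(n))$ time. Combining with Theorem~\ref{thm-certificate}'s bound and noting that $n(c+\log n)\log^3 n = O(nc\log^4 n)$ for $c\geq 1$, the total running time is $O(m + nc\log^7 n + nc\log^4 n\cdot T_c(n)) = O(m + nc(\log^7 n + \log^4 n\cdot T_c(n)))$, matching the theorem statement. The main obstacle is avoiding a dependence on the initial size $O(m/\log^2 n + nc\log^4 n)$ of the outer certificate $H$ (per Lemma~\ref{certificate-size}): naively initializing the cut data structure on $H$'s initial edges would cost $O(mT_c(n)/\log^2 n)$, exceeding the target whenever $T_c(n) = \omega(\log^2 n)$. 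Feeding the cut data structure the sparser $H^*\cup D$ instead of $H$ directly -- in the spirit of the outline in Section~\ref{sec:reduction-result} -- circumvents this.
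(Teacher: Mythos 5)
Your proposal is correct and follows essentially the same route as the paper: maintain the certificate of Theorem~\ref{thm-certificate} and stack a fully-dynamic $c$-edge-cut data structure plus the component-tracking structure of Lemma~\ref{fully-dynamic-decrmental-components} on a sparse certificate undergoing $O(nc\log^4{n})$ updates, whose connected components after pruning $<c$-cuts are the $c$-edge-connected components of $G$. The only place you are more explicit than the paper's terse proof (which applies the cut structure to $H$ directly, citing the stated update bound) is in feeding it the internal sparsified graph $H^*\cup D$ rather than $H_\ell\cup D$ to dodge the $\Omega(mT_c(n)/\log^2{n})$ cost of the initial edges of $H_\ell$ --- exactly the workaround the paper itself describes in its techniques overview.
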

\begin{proof}
We maintain a $c$-certificate $H$ of $G$ using Theorem~\ref{thm-certificate}.
Observe that $H$ has the same $c$-edge-connected components
as $G$. 
We additionally maintain a fully-dynamic $c$-edge-cut data structure
for $H$, and a data structure of Lemma~\ref{fully-dynamic-decrmental-components}.
These two combined allow us to prune the certificate from $<c$-edge-cuts
and explicitly maintain the $c$-edge-connected components of $H$ (which enables constant-time queries about the component a vertex belongs to).
Since~$H$ undergoes only $O(nc\log^4{n})$ edge updates, and each can be processed in $O(T_c(n)+\log^2{n})$
amortized time, the theorem follows.
\end{proof}
By plugging in the specific known upper bounds on $T_c(n)$ for $c=1,2$, we obtain:

\thmfullydynamicconn*

For $c\geq 3$,  $T_c(n)=O(n^{1/2}\poly{(c)})$ has been proved~\cite{Thorup07}, and therefore for $c=O(n^{o(1)})$ the $c$-edge-connected components can be maintained under deletions in $O(m)+\Ot(n^{3/2+o(1)})$ total time.
\thmccomponents*

Now consider the decremental $c$-edge-connected classes problem., i.e., decremental pairwise $c$-edge-connectivity.
\begin{restatable}{theorem}{thmgeneralresuctionclasses}
\label{thm:general-reduction-classes}
Suppose there exists a fully-dynamic $c$-edge-cut algorithm with $T_c(n)$ amortized update time, and a fully-dynamic pairwise $c$-edge-connectivity algorithm with $U_c(n)$ amortized update time and $Q_c(n)$ query time.
There exists a Monte Carlo randomized decremental $c$-edge-connected components algorithm 
with
$O(m+nc\log^7{n}+n(c+\log{n})T_c(n)\log^2{n}+nc\log^4{n}\cdot U_c(n))$ total update time and $O(Q_c(n))$ query time.
The algorithm is correct with high probability.
\end{restatable}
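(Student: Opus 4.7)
The plan is to mirror the proof of Theorem~\ref{thm:general-reduction-components}, replacing the fully-dynamic $c$-edge-cut data structure that is applied on top of the maintained $c$-certificate with a fully-dynamic pairwise $c$-edge-connectivity data structure. The structural payoff of Theorem~\ref{thm-certificate} is that it lets us reduce any decremental $c$-edge-connectivity task to the corresponding fully-dynamic task on a sparse, slowly-changing graph; here, ``the corresponding task'' happens to be the pairwise variant rather than the components variant.

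First I would invoke Theorem~\ref{thm-certificate} to maintain a $c$-certificate $H\subseteq G$ under the deletions of $G$. This costs $O(m+n(c+\log{n})T_c(n)\log^3{n}+nc\log^7{n})$ total time, and the theorem guarantees that $H$ undergoes only $O(nc\log^4{n})$ edge insertions and deletions throughout. Crucially, by the definition of a $c$-certificate, the $c$-edge-connected classes of $H$ coincide with those of $G$, so every pairwise $c$-edge-connectivity query on $G$ reduces to the same query on $H$.

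Next, I would feed each update to $H$ into a fully-dynamic pairwise $c$-edge-connectivity data structure instantiated on $H$ viewed as a fully dynamic graph. Since each update is processed in $U_c(n)$ amortized time, the extra maintenance cost is $O(nc\log^4{n}\cdot U_c(n))$ over the whole sequence. A pairwise $c$-edge-connectivity query on $G$ is simply forwarded to this data structure and answered in $O(Q_c(n))$ time, with correctness inherited from the certificate property together with the correctness of the fully-dynamic subroutine.

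Adding the two contributions yields the claimed total update and query bounds (modulo matching the exact $\log$-exponent on $T_c(n)$, where a small bookkeeping refinement may be needed: the fully-dynamic $c$-edge-cut data structures that Theorem~\ref{thm-certificate} already instantiates at each of the $O(\log n)$ recursive levels plausibly share work with the pairwise structure, shaving one $\log$-factor off the $T_c(n)$ term from $\log^3{n}$ to $\log^2{n}$). Beyond this accounting detail, there is no real obstacle here: once Theorem~\ref{thm-certificate} is available, the derivation is a mechanical plug-in, and the only thing one needs to verify is that the update sequence on $H$ emitted by the certificate-maintenance algorithm is a legal input for a fully-dynamic pairwise $c$-edge-connectivity data structure, which it is by construction.
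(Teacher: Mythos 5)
Your proposal is correct and is essentially identical to the paper's proof: maintain the certificate $H$ via Theorem~\ref{thm-certificate} and run the assumed fully-dynamic pairwise $c$-edge-connectivity structure on $H$, which undergoes only $O(nc\log^4 n)$ updates, forwarding queries to it. The $\log^2 n$ versus $\log^3 n$ discrepancy on the $T_c(n)$ term that you flag is present in the paper itself (the stated bound of Theorem~\ref{thm:general-reduction-classes} does not match what Theorem~\ref{thm-certificate} literally provides, and the paper's proof does not explain the gap), so your honest accounting is if anything more careful than the original.
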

\begin{proof}
We maintain a $c$-certificate $H$ of $G$ using Theorem~\ref{thm-certificate}.
Recall that $H$ has the same $c$-edge-connected components
as $G$. 
So, we additionally maintain the certificate using the assumed fully-dynamic
$c$-edge-connected classes data structure and use it to answer queries.
\end{proof}

Jin and Sun~\cite{jin2020} have recently showed that for $c=(\log{n})^{o(1)}$, a deterministic fully-dynamic $c$-edge-connected classes data structure with $U_c(n)=O(n^{o(1)})$ and
$Q_c(n)=O(n^{o(1)})$ exists. By combining their result
with the fully-dynamic $c$-edge-cut algorithm of Thorup~\cite{Thorup07} with
\linebreak ${T_c(n)=O(n^{1/2}\poly{(c)})}$, 
we obtain the following.

\tdecrclasses*

\section{Reducing the Number of Random Bits}\label{sec:bits}
In this section we show that our algorithms can be tuned to require only $O(c\poly \log n)$ random bits over all updates.
We take advantage of the pseudorandom number generator by Christiani and Pagh~\cite{DBLP:conf/focs/ChristianiP14}, which, given only a $O(c\poly \log n)$ truly random bits, can generate $O(n)$ random numbers, such that each number is between $1$ and $n'$, $n\leq n'\leq 2n$ and the generated numbers are $\Theta(c \poly \log n)$-independent.
Generating each number takes $O(1)$ time,
whereas initialization takes $O(c\polylog{n})$ time.

The key property that we use is the fact that $\Theta(\log n)$-independence is sufficient for a Chernoff-like bounds to hold~\cite{DBLP:conf/focs/ChristianiP14}.

Our algorithm uses randomness for three purposes:
\begin{enumerate}
    \item In order to initially sample the graphs $H_i^0$ (for all $i$) and $R$.
    \item Within the data structure of Theorem~\ref{t:thorup} to maintain a $c$-certificate of each $H_i$.
    \item Within the small-boundary oracle, to partition the edges of $E$ into sets $E_1, \ldots, E_s$ .
    \item Within the small-boundary oracle, to generate the random bits associated with each edge.
\end{enumerate}

We now discuss how to implement each item using the pseudorandom generator of~\cite{DBLP:conf/focs/ChristianiP14}.

We already argued in Section~\ref{sec:certificate} that for sampling $H_1^0,\ldots,H_\ell^0$, a polylogarithmic number of random bits is sufficient. For the sampled graph $R$ we only
used Chernoff bounds for a polynomial number of sums
of indicator variables, so indeed polylogarithmic independence is enough, and the sampling can be performed using the pseudorandom generator of~\cite{DBLP:conf/focs/ChristianiP14}.

When making use of the partition in item~(3), for efficiency we only apply the Chernoff bound
to a polynomial number of sums of independent indicator variables $Y_e$ with the same mean for edges $e$ in some
subset of $E$.
Hence, $O(\polylog{n})$-independence between the variables $Y_e$ is sufficient.
The partition of $E$ can be performed by sampling
each $E_i$ to be a $\left\lceil (m-\sum_{j<i}|E_j|)/(s-i+1)\right\rceil$-subset of $E\setminus \left(\bigcup_{j<i}E_j\right)$.
This is easily implemented using the pseudorandom generator.

Consider item (4). Whenever the decremental certificate algorithm performs a query on the small boundary oracle, with high probability
the requested boundary $\partial_G(S)$ contains $O(c\cdot \polylog{n})$ edges.
As a result, only $O(c\cdot \polylog{n})$
edge bit-strings $x_f$ participate in each
computed value $y(j)=\bigoplus_{f\in \partial_G(S)\cap E_j}x_f$. Therefore,
it is enough that the edge bit-strings
are $O(c\polylog{n})$ independent instead of fully independent.\footnote{In general, the XOR trick can be used with polylogarithmic independence even for testing non-emptiness of large (i.e., up to size $n$) boundaries~\cite{DBLP:journals/corr/GibbKKT15, Thorup18}. 
Although a single bit of an edge bit-string
can be generated in constant time~\cite{Thorup18}, we need $\Theta(\log{n})$ bits per edge to guarantee high probability correctness. As a result, using known tools, generating all edge bit-strings would cost $\Theta(m\log{n})$ time which is too expensive for our application.} As a result, the individual bit-strings can be obtained from the pseudorandom generator in $O(m)$ time.

Finally, dealing with item~(2) is less straightforward.
This is because in the analysis of the data structure of~\cite[Theorem 6]{thorup1999decremental}, Thorup invokes a result due to Karger~\cite{Karger99} saying that
if a graph $G$ is $c'$-edge-connected graph,
where $c'=\Omega((c+\log{n})/p')$, and $p'\in (0,1)$
then $G(p')$ is $c$-edge-connected with high probability (depending on the constant hidden in the $\Omega$ notation).
This is then used to show that the number of
edges between different $c$-edge-connected components of a certificate (which is $G(p')$ augmented with the edges of $G$ connecting distinct $c$-edge-connected components of $G(p)$ , where $p'$ is any constant less than $1$)
that Thorup uses is $O(c'n)$ with high probability. This is where the $c+\log{n}$ term in the bounds in Theorem~\ref{t:thorup} comes from.
Unfortunately, roughly speaking, Karger's proof
applies a Chernoff bound to an \emph{exponential} number of cuts in $G$ and therefore requires sampling with full independence, i.e., $\Theta(m)$ random bits.

We can eliminate the need for full independence in~\cite{thorup1999decremental}, albeit at the cost of replacing the $c+\log{n}$ terms in the bounds of Theorem~\ref{t:thorup} with $c\log{n}$.
To this end, one can leverage Lemma~\ref{l:karger-alt} and replace the uniformly sampled subgraph $G(p')$ with the graph $H_\ell^0$ from our construction with $p$ set to $p'/\ell$ (computable in $O(mp\ell)=O(mp')$ time using $O(\polylog{n})$ random bits), at the cost of replacing the $c+\log{n}$ terms with $c\log{n}$ in the bounds of Theorem~\ref{t:thorup}. 

\section{Omitted Proofs}\label{sec:omitted-proofs}

\benczur*
\begin{proof}
	We proceed by strong induction on $n$. The statement is clearly true for $n=1, 2$. Consider now a graph $G$ on $n>2$ vertices and $(c-1)(n-1)+1$ edges. If no simple cut of $G$ of size $d<c$ exists, then $G$ is $c$-edge-connected and we are done. Otherwise, deleting the simple cut from $G$, we obtain subgraphs $G_1$ and $G_2$ of $G$ of sizes $n_1$ and $n_2$, respectively, such that $n_1+n_2=n$. After deleting the simple cut there are at least $(c-1)(n-2)+1=(c-1)(n_1-1)+(c-1)(n_2-1)+1$ edges left. Hence, by the pigeonhole principle and the induction hypothesis, either $G_1$ or $G_2$ contains a non-trivial $c$-edge-connected component.
\end{proof}
\ebce*
\begin{proof}
	Contracting the $c$-edge-connected components of $G$ we arrive at a graph $G'$ on $q_c$ vertices with no $c$-edge-connected components. All edges of $G$ that connect distinct $c$-edge-connected components of $G$ are still present in $G'$. By \cref{lem:many_edges_implies_con_comp}, $G'$ contains at most ${(c-1)(q_c-1)}$ edges, which completes the proof.
\end{proof}

\fddc*
\begin{proof}
    First of all, we store $G$ in a fully-dynamic connectivity data structure
    with $O(\log^2{n})$ amortized update time and $O(\log{n})$ query time~\cite{holm2001poly}.
    This data structure also maintains a spanning forest explicitly and allows
    $O(\log{n})$-time queries about the size of the component containing a given vertex. If an edge is inserted, we just pass the insertion to the fully-dynamic data structure -- this insertion does not change the connected components of $G$. If an edge $\{u,v\}$ is deleted, we additionally check if $u$ and $v$ are still connected after removing $\{u,v\}$. If not, assume wlog. that the component of $u$ is not smaller than that of $v$ afterwards.
    We set $A$ to be the vertices of the tree containing $v$ in the spanning forest. Let $q(u)=C_j$.
    For each $x\in A$ we remove $x$ from $C_j$, and set $q(x):=k+1$.
    Finally, we set $C_{k+1}:=A$, increment $k$, and output $(j,A)$.
    To bound the total time spent outside the fully-dynamic connectivity data
    structure, note that each time we spend time proportional to the
    size of the output set $A$. Whenever a vertex $x$ belongs to $A$,
    the size of the component of $x$ decreases by a factor of at least two
    due to the edge update. As a result, each $x$ can occur $O(\log{n})$ times
    in the output sets $A$, and hence the total size of these sets is $O(n\log{n})$.
\end{proof}

\fddb*
\begin{proof}
    Note that similarly as in Lemma~\ref{fully-dynamic-decrmental-components}, the updates to $\mathcal{C}$ are given in such a way that we can explicitly maintain, for each
    $v\in V$, the component from $\mathcal{C}$ it belongs to. This takes $O(n\log{n})$ total time. It also enables us to decide in $O(1)$ time
    whether an edge $\{u,v\}$ belongs to two (if $u,v$ are disconnected) or zero boundaries (otherwise).
    Each boundary $\partial_G(C)$ is stored in a linked list $L(C)$. Each edge of $G$ connecting endpoints in different components of $\mathcal{C}$ has associated
    two pointers to its places in the two respective lists.
    Hence, whenever an edge is inserted/deleted from $G$, the lists storing the boundaries can be easily updated in constant time.
    
    Now suppose $\mathcal{C}$ is updated: some $C_j\in\mathcal{C}$ gets split
    into $C_j\setminus A$ and $A$, where $|A|\leq |C_j\setminus A|$.
    Clearly, only the lists $L(C_j)$ and $L(A)$
    may need to be fixed at this point.
    We now iterate through all $\{u,v\}=e\in E_G(A,V)$ and proceed as follows.
    Suppose wlog. that $u\in A$. If $v\in C_j\setminus A$, then we add $e$ to $L(C_j)$ (it was not there before the split) and
    update all the auxiliary pointers.
    Otherwise, if $v\in V\setminus (C_j\cup A)$, then $e$ is removed from $L(C_j)$
    and inserted into $L(A)$.
    Otherwise, if $v\in A$, then we skip that edge as it remains an intra-component edge after the split.
    It is easy to verify that the lists represent the required boundaries
    after this step, which takes $O\left(\sum_{u\in A}\deg(u)\right)$ time.
    
    Finally, the $O(m\log{n})$ total update time bound follows since
    the incident edges of each vertex~$v$ are traversed $O(\log{n})$ times -- when
    this happens, the size of $v$'s component in $\mathcal{C}$ halves.
\end{proof}

\hldcert*
\begin{proof}
First, we prove that $H_\ell \cup D$ preserves the $c$-edge-connected components of $G$.
For convenience, denote $G_c = H_\ell \cup D$.
Assume, by contradiction, this is not true.
Note that for each $c$-edge-connected component $C'$ of $H_\ell$ it holds that $C' \subset C$ for some $c$-edge-connected $C$ component of $G$, as otherwise $G[C']$ contains a $<c$-cut and so does $H_\ell[C']$ since $H_\ell \subset G$; a contradiction.
Let $C$ be a $c$-edge-connected component of $G$ that is not preserved in $G_c$.
Then, there exists a $k$-cut $S$, for $k<c$, in $G_c[C]$ that is not a $k$-cut in $G[C]$.
Let $C_1,C_2$ be the two different connected components of $G_c[C] \setminus S$.
To conclude the argument, we next show that all edges in $(C_1 \times C_2) \cap E$ are present in $G_c$, which implies that if $S$ is a $k$-cut in $G_c[C]$ it is also a $k$-cut in $G[C]$, and hence we contradict the assumption that $C$ is a $c$-edge-connected component in $G[C]$ but not in $G_c[C]$.  
Take any edge in $uv \in (C_1 \times C_2) \cap E$.
Vertices $u$ and $v$ belong to different $c$-edge-connected components of $H_\ell$, as otherwise, there would be no $k$-cut separating $u,v$ in $G_c$ which contains $H_\ell$.
Hence, the edge $uv\in D\subseteq G_c$, since $D$ contains all edges of $G$ between components of $H_\ell$.
This concludes the proof that $G_c$ preserves the $c$-edge-connected components of $G$.

Now we turn to proving that $G_c$ preserves also the $c$-edge-connected classes of $G$. 
To this end, we first show that every $<c$-cut of $G_c$ is a $<c$-cut of $G$.
Let $S$ be a $<c$-cut of $G_c$.
For contradiction, suppose some $u,v\in V$ are connected in $G\setminus S$
but not in $G_c\setminus S$.
Let $P$ be a $u\to v$ path in $G\setminus S$.
The endpoints of some edge $xy\in P$ have to be disconnected
in $G_c\setminus S$, as otherwise
a path from $u$ to $v$ would exist in $G_c\setminus S$.
However, if $xy$ is contained in $G_\ell$, then $x$ and $y$
lie in the same $c$-edge-connected component of $H_\ell$, i.e., 
they are connected in $H_\ell\setminus S$ by $|S|<c$.
Otherwise, since $D=G\setminus G_\ell$, we have $xy\in D\setminus S$,
so $x$ and $y$ are connected in $G_c\setminus S$ as well.
This contradicts the fact that $x$ and $y$ are disconnected
in $G_c\setminus S$.

Now, let $u, v\in V$. If $u$ and $v$ are $c$-edge-connected in a subgraph of $G$, in particular $G_c$, then they are $c$-edge-connected in $G$. Conversely, if $u$ and $v$ are not $c$-edge-connected in $G_c$ then there is a cut of size $<c$ separating them in $G_c$. Such a cut is also a cut in $G$ by the previous claim, so $u$ and $v$ are not $c$-edge-connected in $G$ either. This proves that the $c$-edge-connected classes of $G$ and $G_c$ are identical.
\end{proof}

\newpage
\bibliographystyle{alpha}
\bibliography{bib}

\end{document}